\documentclass[12pt]{amsart}
\usepackage{xcolor}
\usepackage{amssymb}
\usepackage{tikz}
\usepackage{subcaption}
\usepackage{caption}
\usepackage{comment}
\usepackage{pdfpages}
\usepackage{graphicx}
\usepackage{dcolumn}

\RequirePackage[numbers]{natbib}
\RequirePackage[colorlinks=true, pdfstartview=FitV, linkcolor=blue,
  citecolor=blue, urlcolor=blue]{hyperref}
\RequirePackage{hypernat}
\usepackage{paralist}

\usepackage{subcaption}
\graphicspath{{./figures/}}
\usepackage{amsfonts}
\usepackage{amsmath}
\usepackage{amsthm}
\usepackage{amssymb}
\usepackage{amsbsy}
\usepackage{epsfig}
\usepackage{fullpage}
\usepackage{natbib, mathrsfs}

\usepackage{verbatim}
\usepackage[latin1]{inputenc}
\usepackage{mhequ}
\usepackage{algorithm}
\usepackage{algorithmic}
\usepackage[letterpaper,left=1.5in,right=1.5in]{geometry}
\usepackage{enumitem}

\numberwithin{equation}{section}

\def \be{\begin{equs}} 
\def \ee{\end{equs}}

\newcommand{\tEsc[1]}{\tau_{#1,\mathrm{esc}}}

\newtheorem{theorem}{Theorem}[section]
\newtheorem{lemma}[theorem]{Lemma}
\newtheorem{corollary}[theorem]{Corollary}
\newtheorem{prop}[theorem]{Proposition}
\newtheorem*{prop-non}{Proposition}
\newtheorem*{corollary-non}{Corollary}
\newtheorem*{remark-non}{Remark}

\newtheorem{assumption}[theorem]{Assumption}

\theoremstyle{plain}

\newtheorem*{thm-non}{Theorem}

\theoremstyle{definition}

\newtheorem{defn}[theorem]{Definition}
\newtheorem{remark}[theorem]{Remark}

\definecolor{WowColor}{rgb}{.75,0,.75}
\definecolor{SubtleColor}{rgb}{0.9,0,0}

\newcounter{margincounter}

\newcounter{latercounter}

\begin{document}
\allowdisplaybreaks
\title[Accurate and Efficient MCMC for Latent Position Models]
{Accurate and Efficient MCMC for Latent Position Models}

\author{Zonghao Li and Aaron Smith}







\begin{abstract}

Latent position models (LPMs) are a large and popular class of models for random graphs. However, fitting Bayesian LPMs is computationally challenging - computing the likelihood even once takes time that is quadratic in the number of vertices $|V|$ of the observed graph $G = (V,E)$. Many previous papers have introduced approximate MCMC algorithms to speed this up, with the most similar to ours \cite{rastelli2018computationally} presenting an algorithm that has amortized running time that can be reduced almost to $O(|E|)$ and good empirical performance on reasonable inference problems. The present paper offers two algorithms for solving the same problem: a ``fast" algorithm with running time of the same almost-$O(|E|)$ order as \cite{rastelli2018computationally} and much stronger accuracy guarantees, and a ``faster" algorithm with an improved running time of almost $O(|V|)$, and accuracy guarantees that are slightly improved compared to \cite{rastelli2018computationally} (but not sufficient for all tasks). The main improvements come from the introduction of a simple auxiliary data structure that can be cheaply updated during an MCMC run; we suspect that the same ``cheap sketch" may be useful for other MCMC algorithms.
\end{abstract}

\maketitle

\section{Introduction}

Statistical network analysis has emerged as a prominent topic of interest across major fields of study including economics, biology, sociology, and finance (see e.g. textbooks \cite{Wasserman_Faust_1994,Kaminski21}), as well as specialized topics such as education research \cite{sweet2011modeling} and traffic prediction \cite{deng2016latent}. The first statistical network model was the famous (but very simple) Erdos-Renyi random graphs  \cite{erdds1959random}. Almost immediately, statisticians started studying ``geometric" models for networks, in which the nodes were assumed to have unknown ``latent" positions that influenced edge formation \cite{gilbert1961random}. The modern notion of latent position models (LPMs) was introduced by  Hoff et al. \cite{hoff2002latent}, and LPMs and their variants remain popular statistical models for networks (see the survey \cite{kaur2023latent} for many variants and applications).

Many variants of LPMs have been proposed (see \textit{e.g.}    \cite{handcock2007model,krivitsky2009representing,athreya2021estimation} and the survey \cite{kaur2023latent}). In this paper, we focus on the following class of joint models on latent positions and generated graphs:

\begin{defn} \label{DefLPM}
Fix a latent space $\Omega$, prior distribution $\pi_{\Omega}$ on $\Omega$, a symmetric link function $p \, : \, \Omega^{2} \to [0,1]$ satisfying $p(x,y)=p(y,x)$, and size $n$. The \textit{latent position model} associated with the choices $(\Omega, \pi_{\Omega},p,n)$ is a distribution on graphs $G=(V,E)$ and latent positions $Z_{1},\ldots,Z_{n} \in \Omega$, given by setting $V = [n]$, sampling $Z_{1},\ldots,Z_{n} \stackrel{iid}{\sim} \pi_{\Omega}$, and finally sampling the unordered edge set $E_{und} \subseteq \{\{i,j\}:1\leq i<j\leq n\}$ conditional on $Z_{1},\ldots,Z_{n}$ according to the distribution:
\begingroup
\be \label{EqDefLPMLikBasic}
\mathcal{P}[E_{und} | Z_{1},\ldots,Z_{n}] = \prod_{\substack{1\leq i<j\leq n \\  \{i,j\} \in E}} p(Z_{i},Z_{j}) \prod_{\substack{1\leq i<j\leq n\\ \{i,j\} \notin E}} (1- p(Z_{i},Z_{j})).
\ee 
\endgroup
 By a small abuse of notation, we will often call this the latent position model associated with link function $p$ (hiding the dependence on the other choices).
\end{defn}

Unfortunately, it is difficult to fit models of the sort given in Definition \ref{DefLPM} to large graphs using Markov chain Monte Carlo (MCMC). The first difficulty is that the evaluation of the likelihood \eqref{EqDefLPMLikBasic} takes $O(|V|^{2})$ computations, and this likelihood must typically be evaluated many times during an MCMC run. This quadratic complexity has seriously hurt the applicability of Bayesian methods to large datasets. Even this quadratic complexity is often optimistic, as simple Markov chains can mix more slowly for large $|V|$.

Similar problems occur when fitting many Bayesian models to large datasets, and many solutions have been proposed. Our paper's approach is in the category of ``approximate" or ``perturbed" MCMC: we find an MCMC algorithm that is fast, and has a stationary measure that is close \textit{but not equal} to the posterior distribution associated with the model defined in Equation \eqref{EqDefLPMLikBasic}.

There are many ways to ``perturb" nice models such as \eqref{EqDefLPMLikBasic}. We review here caricatures of the most relevant approaches; see \textit{e.g.} \cite{rudolf2024perturbationsmarkovchains} for a recent survey. The most famous and general-purpose approach to perturbing a Markov chain is to use \textit{mini-batch} algorithms such as stochastic gradient Langevin dynamics: at each step of the MCMC algorithm, you take a small random sample of the edges and use them to estimate \eqref{EqDefLPMLikBasic} (see \textit{e.g.} \cite{alquier2016noisy} for an early survey of this approach). These methods are very fast and easy to implement, but they are also known to give poor approximations to the posterior in many cases \cite{nagapetyan2017true,JohndrowJamesE2020NFLf,NEURIPS2018_335cd1b9,johndrow2024freelunchsgld}. Another approach, taken by \cite{rastelli2018computationally} and others, is to chop up the state space $\Omega$ and use a low-resolution proxy for the true likelihood \eqref{EqDefLPMLikBasic}. This is close to the philosophy of the present paper, though unlike \cite{rastelli2018computationally} we don't literally partition $\Omega$. Our main differences come in the details: rather than constructing an explicit hard partition of the state space $\Omega$, we calculate moments associated with an adaptive, implicit and soft partition. These small-looking differences end up having a substantial impact on the asymptotic complexity or accuracy (or both) of our algorithms.

\subsection{Asymptotics and Comparison to  \texorpdfstring{\cite{rastelli2018computationally}}{Rastelli et al., 2024}}\label{SecAsymptoticsComp}

Our paper presents two new algorithms, both based on finding computationally-tractable proxies for the likelihood \eqref{EqDefLPMLikBasic}. We give here a short and informal discussion of the asymptotic running times and error bounds of these algorithms, deferring the technical details and main assumptions to Section \ref{SecAllTheory}. We also compare these bounds to the state-of-the-art results of \cite{rastelli2018computationally}, which takes a fairly similar approach and inspired this paper.

We use two main quantities in comparing the algorithms. The first is what we call the ``amortized cost per sweep." All algorithms considered in this section involve a main loop that changes the embedding of a single node at a time. The ``amortized cost per sweep" is roughly the number of basic computations required to move the latent embedding \textit{of every node}.\footnote{Following standard practice in estimating the complexities of algorithms in computational statistics, we count the steps in our pseudocode descriptions of the algorithms, ignoring the details of \textit{e.g.} how data is stored and retrieved. We also ignore the number of steps required for the chain to mix, which can depend quite strongly on the details of the hyperparameters.} The second quantity is what we call the ``total variation error" - this is just the total variation distance between the posterior distribution associated with the data and the stationary distribution of our MCMC algorithm.

Both our algorithms, and the algorithm in \cite{rastelli2018computationally}, come with a ``size" parameter $b$ that roughly indicates the resolution of the partition of $\Omega$. Our first algorithm, Algorithm \ref{alg_MCMC_full}, also comes with an ``order" parameter $\kappa$, indicating the size of the data structure to maintain.

 The standard Metropolis-within-Gibbs (MwG) algorithm has amortized cost of $O(|V|^{2})$, and the stationary distribution is exactly the posterior distribution.\footnote{A naive implementation of MwG that computes the full likelihood at every step would have running time of $O(|V|^{3})$ per sweep. However, it is straightforward to compute the \textit{change} in likelihood with cost only $O(|V|)$ rather than $O(|V|^{2})$, and we try to measure the complexity of sensible implementations of algorithms rather than the complexity of completely naive implementations.} Algorithm \ref{alg_MCMC_full} has amortized cost of $O(\kappa^{4}|E|+\kappa^{8}|V|b^{-2})$ (where $\kappa$ is a positive constant) 
 and total variation error bounds of the form $O(|V|^{2}(B_g b)^{\kappa +1})$, where $B_g$ is a model-dependent constant independent of $\kappa,n$, and $b$
 (see Theorem \ref{ThemAmRunning} and Section \ref{SecMainErrorBounds} for precise statements of these estimates and the associated assumptions). Algorithm \ref{alg_MCMC_full2} has amortized cost $O(|V| b^{-2})$ and total variation error bounds of the form $O(|V|^{2}(B_g b)^{2})$ (see Theorem \ref{ThemAmRunning2} and Section \ref{SecMainErrorBounds} for precise statements of these estimates and when they hold).

 We note that the technical conditions in our paper are not identical to those in \cite{rastelli2018computationally}. The biggest differences are that they assume that the underlying non-approximate MCMC algorithm mixes quickly (see Assumption 3 of \cite{rastelli2018computationally}), while we assume that there is a good point estimate for the  latent positions $Z$ (see Assumption \ref{assum3}). While these conditions can fail, we expect that \textit{both} of these different technical conditions are satisfied for data sampled from the most common LPMs and the most common underlying MCMC algorithms. In these situations, our bounds compare favorably to the state of the art algorithm \cite{rastelli2018computationally}, which has amortized running time $O(\min(|V| b^{-4} + |E|, |V|^{2}))$ per sweep of all vertices and error bounds of the form $O(b |V|^{2})$ (see Section 6.1 and Theorem 2 of \cite{rastelli2018computationally} for precise statements of these estimates and the associated assumptions). 

We now explain how these bounds relate to each other and those in the abstract. First, the above running time bounds reduce to the simplified bounds in the abstract when $b$ does not depend on $n$. Of course, when $b$ is literally fixed for all $n$ the theoretical error bounds are very weak. Despite this, it is reasonable to run these algorithms in the regime that $b$  goes to 0 slowly with $n$, and the empirical performance can be good in these regimes (see Section \ref{sec-exp} for empirical results and a discussion of which summary statistics converge well even in a regime that the total variation error is large). Allowing $b$ to go to 0 with $n$ doesn't affect our big-picture \textit{comparison} between our algorithms and those of \cite{rastelli2018computationally}. 

Second, the error bounds in Theorem 2 of \cite{rastelli2018computationally} and the application of the bounds in Theorem \ref{Thm1} of this paper to our Algorithm \ref{alg_MCMC_full2} give trivial estimates for most realistic parameter values. To see this, note that our Algorithm \ref{alg_MCMC_full2} only provides a speedup over naive MCMC when $b \gg \frac{1}{\sqrt{|V|}}$, and in this regime our error bound of $O((B_g b)^{2}|V|^{2})$ is still at best order $|V|$, which is larger than 1. The result in \cite{rastelli2018computationally} provides a weaker bound of $O(b |V|^{2})$, so it is also larger than 1 in this regime. In both cases one can use small tricks to slightly improve the running time for very small values of $b$ (see Section 6.1 of \cite{rastelli2018computationally}), but this doesn't affect the overall conclusion: we don't have any nontrivial bound on the error in total variation distances for any realistic parameter values. Despite this, as argued in \cite{rastelli2018computationally} and in Section \ref{sec-exp} of this paper, we can see empirically that the algorithms perform well. 

It is natural to ask if our error bounds are simply very far from sharp. We explore this question in Section \ref{sec-exp}, for a simple model with a Gaussian link function \begin{equation}\label{Gaussian_link}
p(\tau_i,\tau_j; \beta_0,\beta_1,\sigma)
= \beta_0 + \beta_1 \exp\!\left(-\frac{\|\tau_i-\tau_j\|^2}{2\sigma^2}\right).
\end{equation}

This exploration shows that there is a genuinely large error in the likelihood, so the ineffectiveness of our bounds and \cite{rastelli2018computationally} reflects a genuine problem rather than being merely an artifact of the proofs. It also shows that we can have good convergence of many quantities of interest even when the likelihood does not converge pointwise. We suspect that a more refined study of the perturbed algorithms in a weaker norm would be able to show this comparison theoretically. Unfortunately, as is often the case in MCMC theory, at the moment we merely note that better theoretical bounds seem to go along with better empirical performance, even when the theoretical bounds are quantitatively too weak to give a meaningful comparison for realistic examples.

Finally, we discuss the choices of $b$ and $\kappa$ and how they relate to our bounds on running times and errors. For the algorithm in \cite{rastelli2018computationally}, one is free to choose any value of $b$, and (as discussed in the previous paragraph) the error bounds don't give effective guidance on the best choice of $b$. The running time of the algorithm increases as $b$ decreases, so we expect practitioners to choose the smallest value of $b$ that fits in their computational budget. For our algorithms, one is again free to choose any value of $b$, and again the error bounds don't give effective guidance, and so practitioners might make the same choices as they do with the algorithm in \cite{rastelli2018computationally}. However, our main error bound for Algorithm \ref{alg_MCMC_full} \textit{does} give effective bounds under some conditions. Very roughly, we expect to require $b \gtrsim |V|^{-c}$ for some $c> 0$ that depends on the underlying model. Remark \ref{RemDiscAssum3} gives a short explanation of how to find guarantees on the best value of $c$, but informally existing theory tells us that $c = \frac{1}{2}$ or $c=\frac{1}{3}$ are possible for some popular models and there exists some $c > 0$ for most reasonable models. The leading Taylor term is $O(M_g B_g^{\kappa+1}|V|^{2-c(\kappa+1)})$ when $b=|V|^{-c}$ and $\kappa$ is fixed, so our error bounds are then effective for $\kappa>\frac{2}{c}-1$. Thus, if one wants an MCMC algorithm whose stationary distribution converges to the true posterior in total variation as $|V|$ goes to infinity, one could use $b=|V|^{-\frac{1}{4}}$ and $\kappa=8$ to obtain an algorithm with amortized running time $O(|V|^{1.5}+|E|)$ and total variation error bound $O(M_g B_g^9 |V|^{-\frac{1}{4}})+\epsilon_{\mathrm{post}}$. More generally, for any fixed $C>0$ and admissible $c>0$, choosing a fixed $\kappa>\frac{C+2}{c}-1$ gives error $O(|V|^{-C})+\epsilon_{\mathrm{post}}$, up to the explicit factor $M_gB_g^{\kappa+1}$. Taking $c$ small makes the running time arbitrarily close to $O(|V|+|E|)$, with the caveat that the fixed constants can be large when $\kappa$ is large.

\subsection{Related Work}

There are several papers that focus on speeding up Bayesian inference for LPM or LPM-like models. We survey some of the main ideas and provide references to representative papers; see Section 4.3 of the recent survey \cite{kaur2023latent} for a more comprehensive overview. 

One natural approach is to replace MCMC by a variational approximation or by point estimates. Point estimates were already considered in \cite{hoff2002latent}, while \cite{Aliverti_2021,Liu_2023} propose a variational approximation for the LPM and \cite{salter2013variational} proposes variational approximations for the LPM-with-clustering model introduced in Handcock et al. \cite{handcock2007model}. 

Another approach is to carefully subsample observations. Generic subsampling algorithms such as stochastic gradient Langevin dynamics \cite{wellingteh11} can be easily used. However, a good understanding of what drives the embedding allows for more careful subsampling, as in the stratified case-control approximation of \cite{raftery2012fast}. It is also possible to adapt many of the advanced subsampling algorithms developed for simpler regression-like models to the case of LPMs. For example, \cite{spencer2022faster} adapts the ideas of split HMC \cite{shahbaba2014split} and the ``exact" subsampling algorithm of \cite{maclaurin2014firefly}.

The last approach, which most closely matches the approach taken in this paper, is to roughly chop up the latent space $\Omega$ into parts rather than subsampling edges.  Rastelli et al. \cite{rastelli2018computationally} take this approach, proposing a grid approximation of latent position for LPM and using a noisy Metropolis-within-Gibbs algorithm (see \textit{e.g.} \cite{alquier2016noisy}). Actual implementations of these algorithms require a careful choice of data structure, since there must be a computationally-efficient way to keep track of which vertices are in which ``part" of the latent space in order to achieve a real computational saving. The main difference between \cite{rastelli2018computationally} and the current paper comes from the approach for keeping track of this information: \cite{rastelli2018computationally} chops up the state space itself and thus must keep track of a map from vertices to parts of the state space, while we give an implicit partition of the node indices, allowing us to efficiently update a much finer approximation of the positions of all vertices.
 
There has been some success in finding conjugate priors to greatly speed up inference in some models related to LPMs \cite{ryan2017bayesian}, but to our knowledge this has not been possible for the LPM itself.

\subsection{Guide to Paper}

This paper is structured as follows: Section \ref{secNotation} introduces the main generic notation used throughout the paper. Section \ref{SecMainMomentCalcs} introduces our main likelihood approximation in terms of moments, while Section \ref{SecMainMomentDS} introduces the actual data structures and algorithms we use for efficiently keeping track of all necessary moments. Section \ref{SecAllBigAlgsInAlgNotation} collects our moment calculations and presents actual MCMC algorithms. Section \ref{SecAllTheory} analyzes our algorithms, providing bounds on the errors and computational complexities of the algorithms. Finally, Section \ref{sec-exp} gives empirical results and Section \ref{sec-conc} gives a short conclusion. 

\section{Notation} \label{secNotation}

We give the main generic notation used throughout the paper.

\subsection{Set and Multi-Index Notation}

For a positive integer $n$, let $[n]=\{1,2,\dots,n\}$. We write $\mathbb{N}_{0}=\{0,1,2,\ldots\}$ for the nonnegative integers.

We use multi-indices with either two coordinates, for single latent positions in $\mathbb{R}^{2}$, or four coordinates, for ordered pairs of latent positions in $(\mathbb{R}^{2})^{2}$. When $\alpha$ and $\beta$ are multi-indices of the same length, we write $\alpha\leq \beta$ if and only if $\alpha_i\leq \beta_i$ for every coordinate $i$ of the multi-index.

For the same $\alpha,\beta$, we also define

\be
|\alpha|=\sum_{i} \alpha_i, \quad \alpha!= \prod_{i} (\alpha_{i}!), \quad 
\binom{\alpha}{\beta}
=\prod_{i}\binom{\alpha_i}{\beta_{i}}.
\ee 

If $x$ has the same number of coordinates as $\alpha$, define

\be 
x^{\alpha} = \prod_{i} x(i)^{\alpha_{i}}.
\ee 

If $f$ is a function of variables indexed by the coordinates of $\alpha$, define the $\alpha$'th mixed partial derivative by
\be 
D^{\alpha} f = \left(\prod_{i} \frac{\partial^{\alpha_{i}}}{\partial x_{i}^{\alpha_{i}}} \right)f. 
\ee

\subsection{Graph and Likelihood Notation}

We introduce the main notation for graphs, embeddings and likelihoods that will be used until the full algorithms are presented in Section \ref{SecAllBigAlgsInAlgNotation}. The notation in this section is aimed at facilitating the main contributions of this paper, a system of likelihood approximations based on a carefully-chosen set of moments. As such, the notation purposefully elides some elements that are not relevant to moment calculations (such as the hyperparameters associated with the model), and allows for very generic choices even if they would lead to poor performance (for example, we allow for a completely generic partition $a$ of the graph's nodes). 

In the introduction, we introduced the simple LPM in Definition \ref{DefLPM}. There are many more complicated versions of this model in the literature (see \textit{e.g.} \cite{raftery2012fast}, which includes observed features for each node). Many of the ideas in this paper could be used for those models, but we restrict focus to only the small generalization:

\begin{defn} \label{DefLPM2}
Fix a latent space $\Omega$, distribution $\pi_{\Omega}$ on $\Omega$, parameterized family $\{p_{\theta}\}_{\theta \in \Theta}$ of symmetric link functions $p_{\theta} \, : \, \Omega^{2} \to [0,1]$, meaning $p_{\theta}(x,y)=p_{\theta}(y,x)$ for all $x,y\in\Omega$, prior $\pi_{\Theta}$ on $\Theta$, and size $n$. The \textit{latent position model} associated with the choices $(\Omega, \pi_{\Omega},\{p_{\theta}\},\pi_{\Theta},n)$ is a distribution on parameter $\theta$, graphs $G=(V,E)$ and latent positions $Z_{1},\ldots,Z_{n} \in \Omega$, given by sampling $\theta \sim \pi_{\Theta}$, setting $V = [n]$, sampling $Z_{1},\ldots,Z_{n} \stackrel{iid}{\sim} \pi_{\Omega}$, and finally sampling the unordered edge set $E_{und} \subseteq \{\{i,j\}:1\leq i<j\leq n\}$ conditional on $Z_{1},\ldots,Z_{n}$ and $\theta$ according to the distribution:
\begingroup
\be \label{EqDefLPMLikBasic2}
\mathcal{P}[E_{und} | Z_{1},\ldots,Z_{n},\theta] = \prod_{\substack{1\leq i<j\leq n\\ \{i,j\} \in E}} p_{\theta}(Z_{i},Z_{j}) \prod_{\substack{1\leq i<j\leq n\\ \{i,j\} \notin E}} (1- p_{\theta}(Z_{i},Z_{j})).
\ee 
\endgroup
The products in this definition are over unordered pairs of distinct vertices. By a small abuse of notation, we will often call this the latent position model associated with link function $p$ (hiding the dependence on the other choices).
\end{defn}

From this point onward, whenever log-likelihoods or log-link functions are used, we restrict attention to parameter values and latent positions for which $0<p_{\theta}(x,y)<1$ on the relevant domain. This is the standing condition needed for the quantities $g_1(x,y)=\log p_{\theta}(x,y)$ and $g_0(x,y)=\log(1-p_{\theta}(x,y))$ to be finite.

Throughout this paper, we consider an undirected graph $G=(V,E_{\mathrm{und}})$ with node set $V = [n]$. We will always denote by $\tau$ an embedding of $n$ points in $\mathbb{R}^{2}$, so that $\tau_{i} \in \mathbb{R}^{2}$ for each $i \in [n]$. We reserve $Z=(Z_1,\ldots,Z_n)$ for the latent positions in Definitions \ref{DefLPM} and \ref{DefLPM2}; $\tau$ denotes a generic embedding, posterior draw, or chain state. For likelihood and moment calculations, it is convenient to use ordered pairs of distinct vertices. Define
\be \label{EqDirectedPairSet}
\mathcal{P}_{n} &= \{(i,j) \in [n]^{2}: i \neq j\};\\
\qquad E &= \{(i,j) \in \mathcal{P}_{n}: \{i,j\} \in E_{\mathrm{und}}\};\\
E^{0}&=\mathcal{P}_{n}\setminus E.
\ee
Thus each unordered edge and each unordered nonedge appears twice in $E$ and $E^{0}$, respectively. Since the link functions are symmetric, this directed convention is equivalent to the unordered graph likelihood.

Our MCMC steps always update \textit{either} an embedding \textit{or} a parameter - never both. For that reason, we treat $\theta$ as fixed until we introduce the full algorithm in Section \ref{SecAllBigAlgsInAlgNotation}. When viewing $\theta$ as fixed, we introduce the shorthand $g_{1}(x,y) = \log (p_{\theta}(x,y))$ and $g_{0}(x,y) = \log (1- p_{\theta}(x,y))$. This lets us write the  log-likelihood of the model \eqref{EqDefLPMLikBasic2} as:

\be \label{EqLogLikSimple}
L(\tau) = \frac{1}{2}\sum_{(i,j) \in E} g_{1}(\tau_{i}, \tau_{j}) +\frac{1}{2} \sum_{(i,j) \in E^{0}} g_{0}(\tau_{i},\tau_{j}).
\ee 

When viewing the full algorithm with non-fixed $\theta$, we write $L_{\theta}(\tau)$ for the same function.

All of our algorithms require a choice of a partition $a \, : \, [n] \to [K]$ of the vertices of the graph and a Taylor order $\kappa$.  The multi-indices associated with pairs of points will be denoted by
\be \label{EqDefASet}
A=A_{\kappa}=\{\alpha=(\alpha_1,\alpha_2,\alpha_3,\alpha_4)\in\mathbb{N}_{0}^{4}: |\alpha|\leq \kappa\}
\ee
and the projected sets for single points will be denoted by
\be \label{EqAProjDef}
A_{proj}=A_{proj,\kappa} =\{\gamma=(\gamma_1,\gamma_2)\in\mathbb{N}_{0}^{2}: |\gamma|\leq \kappa\}.
\ee
We describe how $a$ and $\kappa$ may be chosen in Section \ref{SecPreproc}; until then, we treat them as generic.

\section{Main Moment Calculations}\label{SecMainMomentCalcs}

Our goal is to write down an estimator for the function $L$ in Equation \eqref{EqLogLikSimple} in terms of a relatively small collection of moments. We introduce the ``main" moments used, and write down a Taylor expansion in terms of these moments. In this section, we use the notation ``$f \approx g$" to indicate that $g$ is formally obtained by truncating a series expansion of $f$, without stating conditions under which $f$ and $g$ are actually pointwise close to each other. These calculations serve to motivate our algorithm, which is analyzed more carefully in Section \ref{SecAllTheory}.

\subsection{Main Moments}

We introduce the moments used in this approximation. It turns out to be computationally difficult to keep \textit{only} these moments updated while running the algorithm; in Section \ref{SecDStoKeep}, we will introduce a slightly larger family of moments that are easier to update together.

For indices $s,t \in [K]$ and multi-index $\alpha \in A$, define:

\be \label{Moment_once}
M_{s,t,\alpha}(\tau) = \sum_{(i,j) \in P_{s,t}} \tau_{i}(1)^{\alpha_{1}} \tau_{i}(2)^{\alpha_{2}}  \tau_{j}(1)^{\alpha_{3}}  \tau_{j}(2)^{\alpha_{4}}, 
\ee 
where $P_{s,t} = \{(i,j) \in E \, : \, a(i) = s, a(j) = t\}$. These are the mixed moments associated with observed edges. Similarly, define the moments associated with ``missing" edges:

\be 
M_{s,t,\alpha}^{(0)}(\tau) =  \sum_{(i,j) \in P_{s,t}^{0}} \tau_{i}(1)^{\alpha_{1}} \tau_{i}(2)^{\alpha_{2}}  \tau_{j}(1)^{\alpha_{3}}  \tau_{j}(2)^{\alpha_{4}},
\ee 
where $P_{s,t}^{0} = \{(i,j) \in E^{0} \, : \, a(i) = s, a(j) = t\}$. Finally, define the moments associated with the complete  graph without self-loops:

\be\label{Moment_total}
M_{s,t,\alpha}^{(c)}(\tau) &=  \sum_{(i,j) \in \mathcal{P}_{n} \, : \, a(i) = s, a(j) = t} \tau_{i}(1)^{\alpha_{1}} \tau_{i}(2)^{\alpha_{2}}  \tau_{j}(1)^{\alpha_{3}}  \tau_{j}(2)^{\alpha_{4}}.
\ee 

It is natural to write estimates for Equation \eqref{EqLogLikSimple} in terms of moments of the form $M_{s,t,\alpha}$ and $M_{s,t,\alpha}^{(0)}$. However, in practice, it is easier to keep  $M_{s,t,\alpha}$ and $M_{s,t,\alpha}^{(c)}$ updated. With the no-self-loop convention above, we have the exact identity

\be \label{EqTranslateMomentsCompleteMissing}
M_{s,t,\alpha}^{(0)}(\tau) = M_{s,t,\alpha}^{(c)}(\tau) - M_{s,t,\alpha}(\tau).
\ee 

The ``centers" of blocks are prominent in any discussion of Taylor expansions, so we give them special notation even though they do not need to be computed separately. For $s \in [K]$, define the vector $y_{s}$ by 

\be \label{EqDefCenters}
y_{s}(\tau) = \frac{1}{|\{i \, : \, a(i) = s\}|} \sum_{i\, :\, a(i)=s} \tau_i.
\ee

\subsection{Taylor expansions} \label{SecTaylor}

For this section, fix an embedding $\tau \in (\mathbb{R}^{2})^{n}$. For each $s \in [K]$, fix a point $y_s \in \mathbb{R}^2$; we will typically choose $y_{s}$ according to Equation \eqref{EqDefCenters}, though this is not necessary for the development in the current section.  

We recall the usual $\kappa$-th Taylor expansion of $g_1$ around $(y_s,y_t)$, written in compact multi-index notation:
\be
g_1(\tau_i,\tau_j)
\;\approx\;
\sum_{\substack{|\alpha|\leq \kappa}}
\frac{D^{\alpha}g_1(y_s,y_t)}{\alpha!}\,
\left[ (\tau_i,\tau_j)-(y_s,y_t) \right]^{\alpha}.
\ee
Since 
\be
\left[ (\tau_i,\tau_j)-(y_s,y_t) \right]^{\alpha}=\sum_{\substack{\beta\in\mathbb{N}_{0}^{4} \\ \beta \le \alpha}}
\binom{\alpha}{\beta}\,
\bigl(-(y_s,y_t)\bigr)^{\alpha - \beta}\,
(\tau_i,\tau_j)^\beta,
\ee
we can expand this as:

\be
\sum_{(i,j)\in P_{s,t}} g_1(\tau_i,\tau_j)&\approx\sum_{\substack{|\alpha|\leq \kappa}}
\frac{D^{\alpha}g_1(y_s,y_t)}{\alpha!}\,\sum_{\substack{\beta\in\mathbb{N}_{0}^{4} \\ \beta \le \alpha}}
\binom{\alpha}{\beta}\,
\bigl(-(y_s,y_t)\bigr)^{\alpha - \beta}\,
\sum_{(i,j)\in P_{s,t}}(\tau_i,\tau_j)^\beta\\
&=\sum_{\substack{|\alpha|\leq \kappa}}
\frac{D^{\alpha}g_1(y_s,y_t)}{\alpha!}\,\sum_{\substack{\beta\in\mathbb{N}_{0}^{4} \\ \beta \le \alpha}}
\binom{\alpha}{\beta}\,
\bigl(-(y_s,y_t)\bigr)^{\alpha - \beta}\,M_{s,t,\beta}(\tau).
\ee
Using the same calculation and then applying Equation \eqref{EqTranslateMomentsCompleteMissing},
\be
\sum_{(i,j)\in P_{s,t}^{0}}g_0(\tau_i,\tau_j)&\approx\sum_{\substack{|\alpha|\leq \kappa}}
\frac{D^{\alpha}g_0(y_s,y_t)}{\alpha!}\,\sum_{\substack{\beta\in\mathbb{N}_{0}^{4} \\ \beta \le \alpha}}
\binom{\alpha}{\beta}\,
\bigl(-(y_s,y_t)\bigr)^{\alpha - \beta}\,
\sum_{(i,j)\in P_{s,t}^{0}}(\tau_i,\tau_j)^\beta\\
&=\sum_{\substack{|\alpha|\leq \kappa}}
\frac{D^{\alpha}g_0(y_s,y_t)}{\alpha!}\,\sum_{\substack{\beta\in\mathbb{N}_{0}^{4} \\ \beta \le \alpha}}
\binom{\alpha}{\beta}\,
\bigl(-(y_s,y_t)\bigr)^{\alpha - \beta}\,M^{(0)}_{s,t,\beta}(\tau)\\
&=\sum_{\substack{|\alpha|\leq \kappa}}
\frac{D^{\alpha}g_0(y_s,y_t)}{\alpha!}\,\sum_{\substack{\beta\in\mathbb{N}_{0}^{4} \\ \beta \le \alpha}}
\binom{\alpha}{\beta}\,
\bigl(-(y_s,y_t)\bigr)^{\alpha - \beta}\,(M^{(c)}_{s,t,\beta}(\tau)-M_{s,t,\beta}(\tau)).
\ee
To simplify later notation, for functions $g$ defined on a neighborhood of the relevant center pairs, vectors $y_{1}, y_{2}\in\mathbb{R}^{2}$, integers $\kappa$, and collections of points $\{M_{\beta}\}_{\beta \in \mathbb{N}_{0}^{4} \, : \, |\beta| \leq \kappa}$, we define:
\be \label{EqDefTSimpler}
T(g,y_{1},y_{2},\kappa,M):&=\sum_{\substack{|\alpha|\leq \kappa}}
\frac{D^{\alpha}g(y_{1},y_{2})}{\alpha!}\,\sum_{\substack{\beta\in\mathbb{N}_{0}^{4} \\ \beta \le \alpha}}
\binom{\alpha}{\beta}\,
\bigl(-(y_{1},y_{2})\bigr)^{\alpha - \beta}\,M_{\beta}.
\ee
This suggests the following approximation to the log-likelihood in Equation \eqref{EqLogLikSimple}:
\be\label{approx_log_like}
\tilde{L}(\tau)&=\frac{1}{2} \sum_{s,t} \left[ \sum_{\substack{|\alpha|\leq \kappa}}
\frac{D^{\alpha}g_1(y_s,y_t)}{\alpha!}\,\sum_{\substack{\beta\in\mathbb{N}_{0}^{4} \\ \beta \le \alpha}}
\binom{\alpha}{\beta}\,
\bigl(-(y_s,y_t)\bigr)^{\alpha - \beta}\,M_{s,t,\beta}(\tau)\right ]\\
&+ \frac{1}{2} \sum_{s,t} \left[\sum_{\substack{|\alpha|\leq \kappa}}
\frac{D^{\alpha}g_0(y_s,y_t)}{\alpha!}\,\sum_{\substack{\beta\in\mathbb{N}_{0}^{4} \\ \beta \le \alpha}}
\binom{\alpha}{\beta}\,
\bigl(-(y_s,y_t)\bigr)^{\alpha - \beta}\,(M^{(c)}_{s,t,\beta}(\tau)-M_{s,t,\beta}(\tau)) \right ]. \\
&=\frac{1}{2} \sum_{s,t} T(g_1,y_{s}(\tau),y_{t}(\tau),\kappa,\{M_{s,t,\cdot}(\tau)\})
+ \frac{1}{2} \sum_{s,t} T(g_0,y_{s}(\tau),y_{t}(\tau),\kappa,\{M_{s,t,\cdot}^{(0)}(\tau)\}).
\ee

\subsection{Efficient Calculations of Changes to Log-Likelihood} \label{SecEffLLChangeCalc}

If we change an embedding $\tau$ to another embedding $\tau^{*}$ that differs in only a single coordinate, it is possible to compute the change in Equation \eqref{approx_log_like} without explicitly recomputing the full value of $\tilde{L}$ at both points. We give the details of this computation.

Fix an embedding $\tau \in (\mathbb{R}^{2})^{n}$, an index $i \in [n]$, and a displacement $\delta \in \mathbb{R}^{2}$. Denote by $\tau^{*}$ the embedding:
\be 
\tau_{i}^{*} &= \tau_{i} + \delta \\
\tau_{j}^{*} &= \tau_{j}, \qquad j \neq i.\\
\ee 
As shorthand, for any function $f$ with domain $(\mathbb{R}^{2})^{n}$, we define $\Delta f = f(\tau^{*}) - f(\tau),$ so that e.g.
\be 
\Delta M_{s,t,\beta} &=  M_{s,t,\beta}(\tau^{*}) -  M_{s,t,\beta}(\tau),  \\
\Delta \tilde{L} &= \tilde{L}(\tau^{*}) -  \tilde{L}(\tau),
\ee 
and so on. Let $r=a(i)$ and define the affected ordered block set
\be
\mathcal{I}_{r}=\{(s,t)\in [K]^{2}:s=r \text{ or } t=r\}.
\ee
For $(s,t)\notin \mathcal{I}_{r}$, neither the center pair $(y_s,y_t)$ nor the moments $M_{s,t,\cdot}$ and $M^{(c)}_{s,t,\cdot}$ change. Therefore the change in the approximate log-likelihood is computed by directly recomputing the affected block terms:
\be
\Delta  \tilde{L} &=\frac{1}{2} \sum_{(s,t)\in \mathcal{I}_{r}} \Bigl[ T(g_1,y_{s}(\tau^{*}),y_{t}(\tau^{*}),\kappa,\{M_{s,t,\cdot}(\tau^{*})\})
- T(g_1,y_{s}(\tau),y_{t}(\tau),\kappa,\{M_{s,t,\cdot}(\tau)\})\Bigr]\\
&\quad +\frac{1}{2} \sum_{(s,t)\in \mathcal{I}_{r}} \Bigl[ T(g_0,y_{s}(\tau^{*}),y_{t}(\tau^{*}),\kappa,\{M^{(c)}_{s,t,\cdot}(\tau^{*})-M_{s,t,\cdot}(\tau^{*})\})\\
&\hspace{3.35cm} - T(g_0,y_{s}(\tau),y_{t}(\tau),\kappa,\{M^{(c)}_{s,t,\cdot}(\tau)-M_{s,t,\cdot}(\tau)\})\Bigr].\\
 \label{delta_L}
\ee 
Since $|\mathcal{I}_{r}|=O(K)$ and the evaluation of each $T$ term takes time $O(|A|^{2})$, evaluation of Equation \eqref{delta_L} takes time $O(K\,|A|^{2})$.

\section{Update Rules and Complexity Estimates} \label{SecMainMomentDS}

It is not straightforward to efficiently update all values of $M_{s,t,\alpha}(\tau)$ as the embedding $\tau$ changes. To this end, we introduce a slightly larger data structure to keep track of while the MCMC algorithm runs, then describe its ``update rules" and their complexity.

\subsection{ List of Data Structures to Keep} \label{SecDStoKeep}

In addition to the moments $\{M_{s,t,\alpha}\}_{s,t \in [K], \alpha \in A}$ and $\{M_{s,t,\alpha}^{(c)}\}_{s,t \in [K], \alpha \in A}$, our main moment-based algorithm, Algorithm \ref{alg_MCMC}, will keep track of two other families of moments.

The first is:

\be \label{Q_once}
Q_{i,t,\alpha}(\tau) = \sum_{j \in p_{i,t} } \tau_{j}(1)^{\alpha_{1}} \tau_{j}(2)^{\alpha_{2}},
\ee 
where $i$ ranges over $[n]$, $t$ ranges over $[K]$, $\alpha$ ranges over $A_{proj}$, and $p_{i,t} = \{j \, : \, (i,j) \in E, a(j) = t\}$.

The second is:

\be \label{Q_total}
Q^{(c)}_{t,\alpha}(\tau)=\sum_{j:a(j)=t} \tau_{j}(1)^{\alpha_{1}} \tau_{j}(2)^{\alpha_{2}},
\ee
where $t$ ranges over $[K]$ and $\alpha$ ranges over $A_{proj}$.

While we do \textit{not} keep track of them in a separate data structure, we also introduce notation for a similar variable:
\be
Q_{i,t,\alpha}^{(0)}(\tau) = \sum_{j \in p_{i,t}^{(0)} } \tau_{j}(1)^{\alpha_{1}} \tau_{j}(2)^{\alpha_{2}},
\ee
where $i$ ranges over $[n]$, $t$ ranges over $[K]$, $\alpha$ ranges over $A_{proj}$, and $p_{i,t}^{(0)} = \{j \, : \, (i,j) \in E^{0}, a(j) = t\}$. We do not retain these variables because they can be computed in terms of the retained variables. With the no-self-pair convention,
\be \label{EqQ0FromQcQ}
Q_{i,t,\alpha}^{(0)}(\tau)=Q^{(c)}_{t,\alpha}(\tau)-Q_{i,t,\alpha}(\tau)-\mathbf{1}_{\{a(i)=t\}}\tau_{i}(1)^{\alpha_{1}} \tau_{i}(2)^{\alpha_{2}}.
\ee

\subsection{Update Rules and Complexity Estimates} \label{EqUpdateRulesAndComplexity}

Fix notation for $\tau$, $k$, $\delta$, $\tau^{*}$ and the operator $\Delta$ as in Section \ref{SecEffLLChangeCalc}, and write $r=a(k)$. By a small abuse of notation, define the function 
\be 
\tau_{k}(\alpha_1,\alpha_2) = \tau_{k}(1)^{\alpha_{1}} \tau_{k}(2)^{\alpha_{2}},
\ee 
so that we have the convenient shorthand:
\be\label{delta_M}
\Delta \tau_k(\delta,\alpha_1,\alpha_2)=(\tau_{k}(1) + \delta(1))^{\alpha_{1}} (\tau_{k}(2) +\delta(2))^{\alpha_{2}} - \tau_{k}(1)^{\alpha_{1}} \tau_{k}(2)^{\alpha_{2}}.
\ee

In this section, we give formulas for quickly calculating $\{\Delta M_{s,t,\alpha}\}$, $\{\Delta M_{s,t,\alpha}^{(c)}\}$, $\{ \Delta Q_{i,t,\alpha}\}$ and $\{ \Delta Q_{t,\alpha}^{(c)}\}$ in terms of $k,\delta, \{M_{s,t,\alpha}(\tau)\}$, $\{M_{s,t,\alpha}^{(c)}(\tau)\}$, $\{Q_{i,t,\alpha}(\tau)\}$ and $\{Q_{t,\alpha}^{(c)}(\tau)\}$. We also describe the complexity of these updates. The resulting rules are presented together as algorithms in Section \ref{SecMomentAlgs}.

We begin with update formulas for observed edges.

\textbf{Updating $Q_{i,t,\alpha}$:} for all $i \in [n]$, if $(i,k) \notin E$ or $a(k) \neq t$, then $Q_{i,t,\alpha}(\tau^{*}) = Q_{i,t,\alpha}(\tau)$ and so $\Delta Q_{i,t,\alpha} = 0$. In the remaining case that $(i,k) \in E$ and $a(k) = t$, we have:
\be \label{delta_Q_once}
\Delta Q_{i,t,\alpha} = \Delta \tau_k(\delta,\alpha_1,\alpha_2).
\ee 
The computational complexity of these updates is $\deg(k) \equiv |\{\forall \ i \,  : \, (i,k) \in E\}|$ for each tuple $\alpha \in A_{proj}$.

\textbf{Updating $M_{s,t,\alpha}$:} Before writing updates, we note 
\be 
M_{s,t,\alpha}(\tau) = \sum_{i \, : \, a(i) = s} \tau_{i}(1)^{\alpha_{1}}\tau_{i}(2)^{\alpha_{2}} Q_{i,t,(\alpha_{3},\alpha_{4})}(\tau).
\ee 
We use the symmetry
\be \label{EqMSymmRule}
M_{s,t,\alpha}(\tau) = M_{t,s,(\alpha_{3},\alpha_{4},\alpha_{1},\alpha_{2})}(\tau).
\ee 
For $s\neq t$, the two possible affected orientations are:
\begin{enumerate}
\item \textbf{Case 1:} $a(k)=s$. In this case,
\be \label{delta_case_1}
\Delta M_{s,t,\alpha} = \Delta \tau_k(\delta,\alpha_1,\alpha_2) Q_{k,t,(\alpha_{3},\alpha_{4})} (\tau).
\ee 
\item \textbf{Case 2:} $a(k)=t$. In this case,
\be \label{delta_case_2}
\Delta M_{s,t,\alpha} = Q_{k,s,(\alpha_{1},\alpha_{2})}(\tau)\Delta \tau_k(\delta,\alpha_3,\alpha_4).
\ee 
\end{enumerate}
For the within-block case $a(k)=s=t$, we have
\be\label{delta_case_3}
\Delta M_{s,s,\alpha}=\Delta \tau_k(\delta,\alpha_1,\alpha_2)Q_{k,s,(\alpha_3,\alpha_4)}(\tau)+Q_{k,s,(\alpha_1,\alpha_2)}(\tau)\Delta \tau_k(\delta,\alpha_3,\alpha_4).
\ee
Each of these updates has cost $O(1)$ for a fixed $\alpha$. Updating all affected observed-edge block moments for a proposed move of node $k$ costs $O(K|A|)$ after the $Q$ moments have been initialized.

\textbf{Updating $Q^{(c)}_{t,\alpha}$:} if $a(k) \neq t$, then $Q^{(c)}_{t,\alpha}(\tau^{*}) = Q^{(c)}_{t,\alpha}(\tau)$. In the remaining case $a(k) = t$, we have the update
\be \label{delta_Q_total}
\Delta Q^{(c)}_{t,\alpha} =\Delta \tau_k(\delta,\alpha_1,\alpha_2).
\ee
The computational complexity is $O(1)$ for each $\alpha \in A_{proj}$.

We now give rules for the missing-edge quantities. These are obtained from the complete-graph quantities and the observed-edge quantities.

\textbf{Updating $Q^{(0)}_{i,t,\alpha}$:} if $a(k) \neq t$, then $Q^{(0)}_{i,t,\alpha}(\tau^{*}) = Q^{(0)}_{i,t,\alpha}(\tau)$ and so $\Delta Q^{(0)}_{i,t,\alpha} = 0$. In the remaining case $a(k)=t$, Equation \eqref{EqQ0FromQcQ} gives
\be \label{delta_Q_miss}
\Delta Q^{(0)}_{i,t,\alpha} = \Delta Q^{(c)}_{t,\alpha}-\Delta Q_{i,t,\alpha}-\mathbf{1}_{\{i=k\}}\Delta \tau_k(\delta,\alpha_1,\alpha_2).
\ee

\par\textbf{Updating $M_{s,t,\alpha}^{(c)}$:}
Before updating $M_{s,t,\alpha}^{(c)}$, we note the useful formula
\be \label{EqMcFromQc}
M_{s,t,\alpha}^{(c)}(\tau)&=Q^{(c)}_{s,(\alpha_{1},\alpha_{2})} (\tau) Q^{(c)}_{t,(\alpha_{3},\alpha_{4})} (\tau) - \mathbf{1}_{\{s=t\}}Q^{(c)}_{s,(\alpha_1+\alpha_3,\alpha_2+\alpha_4)}(\tau).
\ee
For $s\neq t$, the two possible affected orientations are:
\begin{enumerate}
\item \textbf{Case 1:} $a(k)=s$. In this case,
\be \label{delta_case_1_total}
\Delta M_{s,t,\alpha}^{(c)} = \Delta \tau_k(\delta,\alpha_1,\alpha_2) Q^{(c)}_{t,(\alpha_{3},\alpha_{4})}(\tau).
\ee 
\item \textbf{Case 2:} $a(k)=t$. In this case, 
\be \label{delta_case_2_total}
\Delta M_{s,t,\alpha}^{(c)} =  Q^{(c)}_{s,(\alpha_{1},\alpha_{2})}(\tau)\Delta \tau_k(\delta,\alpha_3,\alpha_4).
\ee 
\end{enumerate}
For the within-block case $a(k)=s=t$, Equation \eqref{EqMcFromQc} gives
\be \label{delta_case_3_total}
\Delta M_{s,s,\alpha}^{(c)} &=Q_{s,(\alpha_1,\alpha_2)}^{(c)}(\tau)\Delta \tau_k(\delta,\alpha_3,\alpha_4)+Q_{s,(\alpha_3,\alpha_4)}^{(c)}(\tau)\Delta \tau_k(\delta,\alpha_1,\alpha_2) \\
&\quad +\Delta \tau_k(\delta,\alpha_1,\alpha_2)\,\Delta \tau_k(\delta,\alpha_3,\alpha_4)-\Delta \tau_k(\delta,\alpha_1+\alpha_3,\alpha_2+\alpha_4).
\ee 
Each of these complete-graph moment updates has cost $O(1)$ for a fixed $\alpha$. Updating all affected complete-graph block moments for a proposed move of node $i$ costs $O(|A|)$.

\subsection{Update Rules and Complexity Estimates - Faster Algorithm}

In the special case that $A=A_1=\{ \alpha \in \mathbb{N}_{0}^{4} \, : \, |\alpha| \leq 1\}$, we do not need to maintain the auxiliary $Q$ moments and there are much faster update rules. Define 
\[
E_{i,t} = |\{ j \,: \, a(j) = t, \, (i,j) \in E\}|,
\]
noting that this does not depend on the embedding and can be precomputed. Then
\be \label{EqFastMDef}
M_{s,t,(1,0,0,0)} &= \sum_{i \, : \, a(i) = s} \sum_{j \, : \, a(j) = t, (i,j) \in E} \tau_{i}(1) \\ 
&= \sum_{i \, : \, a(i) = s} E_{i,t} \tau_{i}(1). \\ 
\ee 
Thus, when index $k$ with $a(k) = s$ is updated, we can update with the rule:
\be \label{EqFastMDelta}
\Delta M_{s,t,(1,0,0,0)} = E_{k,t} \Delta \tau_{k}(\delta,1,0).
\ee 
The analogous formula with $\Delta \tau_k(\delta,0,1)$ updates $M_{s,t,(0,1,0,0)}$, and the remaining first-order moments follow from the symmetry \eqref{EqMSymmRule}. The cost of each update is $O(1)$ and we must do one update per value of $t$, for a total update cost of $O(K)$.

Similarly, for $s\neq t$ we have
\be \label{EqFastMDelta_Total}
\Delta M_{s,t,(1,0,0,0)}^{(c)} = |\{j \, : \, a(j) =t \}| \Delta \tau_{k}(\delta,1,0),
\ee
with the analogous formula for the second coordinate and the symmetric orientations. The within-block complete-graph first-order moments are updated by the specialization of Equation \eqref{delta_case_3_total}.

\section{Algorithms} \label{SecAllBigAlgsInAlgNotation}

We collect the main algorithms used in this paper.

\subsection{Preprocessing Algorithms} \label{SecPreproc}

Our main MCMC algorithms, Algorithms \ref{alg_MCMC_full} and \ref{alg_MCMC_full2}, require a function $a$ that partitions  the vertices in such a way that $\| \tau_{i} - \tau_{j}\|$ is ``small" whenever $a(i) = a(j)$ (though the converse need not hold). Finding a ``good" partition or clustering of the vertices of a graph is a well-studied problem that depends on the details of the generating process (see \textit{e.g.} \cite{berahmand2025comprehensivesurveyspectralclustering} for a broad survey of related methods, and \cite{von2008consistency} for the fundamental theory establishing rates of convergence for a collection of classical methods under various conditions). For this reason, improving on existing graph partition results is beyond the scope of this article. In Appendix \ref{SubsecNodePart}, we show how to construct a good partition from any sufficiently-good node embedding (see Lemma \ref{IneqBasicPartitionProperties}) and present a fairly simple partitioning strategy that has good properties for many popular choices of link function $p$ and prior $\pi_{\Omega}$ on latent positions.

\subsection{Moment Algorithms} \label{SecMomentAlgs}

We collect the main moment-based subalgorithms. Before stating them, we recall some shorthand used throughout this section.

We recall a useful shorthand introduced in Section \ref{SecEffLLChangeCalc}, which is used throughout this section. When the ``new" and ``old" embeddings $\tau, \tau^{*}$ are clear from the context, we use the shorthand $\Delta f = f(\tau^{*}) - f(\tau)$ for any function $f$ with domain $(\mathbb{R}^{2})^{n}$. In particular, we have e.g.
\be 
\Delta M_{s,t,\beta} &=  M_{s,t,\beta}(\tau^{*}) -  M_{s,t,\beta}(\tau)  \\
\Delta \tilde{L} &= \tilde{L}(\tau^{*}) -  \tilde{L}(\tau)
\ee 
so that
\be
 M_{s,t,\beta}(\tau^{*}) &=  M_{s,t,\beta}(\tau)+\Delta M_{s,t,\beta}\\
 \tilde{L}(\tau^{*})&= \tilde{L}(\tau)+\Delta \tilde{L}.
\ee

We use this $\Delta$ operator frequently in the following algorithms as a reminder that we typically calculate new moments by calculating these changes. We also use $A_{proj}$ as in Equation \eqref{EqAProjDef} throughout.

In the update algorithms below, quantities evaluated at $\tau^*$ are proposed values. They are computed temporarily and are committed to the stored data structures only if the Metropolis-Hastings proposal is accepted. When Equation \eqref{delta_L} is evaluated after moving node $k$, the proposed centers are computed as
\be
y_{a(k)}(\tau^*)&=y_{a(k)}(\tau)+\frac{\tau_k^*-\tau_k}{|\{j:a(j)=a(k)\}|},\\ 
y_s(\tau^*)&=y_s(\tau)\qquad \qquad(s\neq a(k)).
\label{update_y}\ee
In Algorithm \ref{alg_MCMC}, \textsc{UpdateM} is called before \textsc{UpdateQ}; this is intentional because the formulas for the proposed observed-edge moments use the old $Q$ moments.

\begin{algorithm}[H]
\caption{InitializeQM}
\label{alg_pre}
\begin{algorithmic}[1]
\REQUIRE{$\tau, E, a, A$}
\FOR{$s \in [K]$}
    \STATE{compute $n_s=|\{i:a(i)=s\}|$}
    \STATE{compute $y_s(\tau)$ by \eqref{EqDefCenters}}
\ENDFOR
\FOR{$t \in [K]$, $\alpha \in A_{proj}$}
    \STATE{set $Q^{(c)}_{t,\alpha}$ by \eqref{Q_total}}
\ENDFOR
\FOR{$i\in [n]$, $t \in [K]$, $\alpha \in A_{proj}$}
    \STATE{set $Q^{(1)}_{i,t,\alpha}$ by \eqref{Q_once}}
\ENDFOR
\FOR{$s,t \in [K]$, $\alpha \in A$}
    \STATE{calculate $M^{(1)}_{s,t,\alpha}$ by \eqref{Moment_once}}
    \STATE{calculate $M^{(c)}_{s,t,\alpha}$ by \eqref{EqMcFromQc}}
\ENDFOR
\RETURN{$y,M^{(1)},M^{(c)},Q^{(1)},Q^{(c)}$}
\end{algorithmic}
\end{algorithm}

\begin{algorithm}[H]
\caption{InitializeM}
\label{alg_pre_M}
\begin{algorithmic}[1]
\REQUIRE{$\tau,E,a,A_1,\{E_{i,t}\}_{i\in[n],t\in[K]}$}

\FOR{$s \in [K]$}
    \STATE{compute $n_s=|\{i:a(i)=s\}|$ and $S_{s,r}=\sum_{i:a(i)=s}\tau_i(r)$ for $r\in\{1,2\}$}
    \STATE{compute $y_s(\tau)$ by \eqref{EqDefCenters}}
\ENDFOR

\FOR{$s,t \in [K]$}
    \STATE{set $M^{(1)}_{s,t,(0,0,0,0)}=\sum_{i:a(i)=s} E_{i,t}$}
    \STATE{set $M^{(1)}_{s,t,(1,0,0,0)}=\sum_{i:a(i)=s} E_{i,t}\tau_i(1)$ and $M^{(1)}_{s,t,(0,1,0,0)}=\sum_{i:a(i)=s} E_{i,t}\tau_i(2)$}
    \STATE{set $M^{(1)}_{s,t,(0,0,1,0)}$ and $M^{(1)}_{s,t,(0,0,0,1)}$ by the symmetry \eqref{EqMSymmRule}}
    \STATE{set $M^{(c)}_{s,t,\alpha}$ for $\alpha\in A_1$ by the first-order specialization of \eqref{EqMcFromQc} using $n_s,n_t,S_{s,1},S_{s,2},S_{t,1},S_{t,2}$}
\ENDFOR
\RETURN{$y,M^{(1)},M^{(c)}$}
\end{algorithmic}
\end{algorithm}

\begin{algorithm}[H]
\caption{UpdateQ}\label{alg_update_Q}
    \begin{algorithmic}
       \REQUIRE{$\tau,k,\tau^*,E,a,A,Q^{(1)},Q^{(c)}$}
        \FOR{$j \in [n]\setminus \{k\}$ with $(j,k)\in E$, and $\alpha \in A_{proj}$}
            \STATE{set $Q_{j,a(k),\alpha}(\tau^*)$ by \eqref{delta_Q_once}}
        \ENDFOR
    \FOR{$\alpha \in A_{proj}$}
        \STATE{set $Q^{(c)}_{a(k),\alpha}(\tau^*)$ by \eqref{delta_Q_total}}
    \ENDFOR
        \RETURN{$Q^{(1)}(\tau^*),Q^{(c)}(\tau^*)$}
\end{algorithmic}
\end{algorithm}

\begin{algorithm}[H]
\caption{UpdateM}\label{alg_update_M}
    \begin{algorithmic}
       \REQUIRE{$\tau,\theta,k,\tau^*,E,a,A,y,M^{(1)},M^{(c)},Q^{(1)},Q^{(c)}$}
      
        \STATE{Compute $y_{a(k)}(\tau^*)$ by \eqref{update_y} }
        \FOR{$t\in [K]\setminus \{a(k)\}$, $\alpha \in A$}
            \STATE{Compute $M_{a(k),t,\alpha}(\tau^*)$ by \eqref{delta_case_1} and $M_{t,a(k),\alpha}(\tau^*)$ by \eqref{delta_case_2}}
            \STATE{Compute $M^{(c)}_{a(k),t,\alpha}(\tau^*)$ by \eqref{delta_case_1_total} and $M^{(c)}_{t,a(k),\alpha}(\tau^*)$ by \eqref{delta_case_2_total}}
        \ENDFOR
        \FOR{$\alpha \in A$}
            \STATE{Compute $M_{a(k),a(k),\alpha}(\tau^*)$ by \eqref{delta_case_3}}
            \STATE{Compute $M^{(c)}_{a(k),a(k),\alpha}(\tau^*)$ by \eqref{delta_case_3_total}}
        \ENDFOR
        \STATE{Compute $\Delta \tilde{L}(\tau^*)$ by \eqref{delta_L}}
        \RETURN{$y(\tau^*),M^{(1)}(\tau^*),M^{(c)}(\tau^*),\Delta \tilde{L}(\tau^*)$}
\end{algorithmic}
\end{algorithm}

\begin{algorithm}[H]
\caption{UpdateM2}\label{alg_update_M_Faster}
    \begin{algorithmic}
       \REQUIRE{$\tau,\theta,k,\tau^*,E,a,A_1,y,M^{(1)},M^{(c)},\{E_{i,t}\}_{i\in[n],t\in[K]}$ }
       \STATE{Compute $y_{a(k)}(\tau^*)$ by \eqref{update_y} };
        \FOR{$t\in [K]\setminus \{a(k)\}$, $\alpha \in A_{1}$ with $\| \alpha \|_{1} =1$}
            \STATE{Compute $M_{a(k),t,\alpha}(\tau^*)$ and $M_{t,a(k),\alpha}(\tau^*)$ by \eqref{EqFastMDelta} and symmetry}
            \STATE{Compute $M^{(c)}_{a(k),t,\alpha}(\tau^*)$ and $M^{(c)}_{t,a(k),\alpha}(\tau^*)$ by \eqref{EqFastMDelta_Total} and symmetry}
        \ENDFOR
        \FOR{$\alpha \in A_{1}$ with $\| \alpha \|_{1}=1$}
            \STATE{Compute $M_{a(k),a(k),\alpha}(\tau^*)$ by \eqref{EqFastMDelta} and symmetry}
            \STATE{Compute $M^{(c)}_{a(k),a(k),\alpha}(\tau^*)$ by \eqref{EqFastMDelta_Total} and symmetry}
        \ENDFOR
        \STATE{Compute $\Delta \tilde{L}(\tau^*)$ by \eqref{delta_L}}
        \RETURN{$y(\tau^*),M^{(1)}(\tau^*),M^{(c)}(\tau^*),\Delta \tilde{L}(\tau^*)$}
\end{algorithmic}
\end{algorithm}

\subsection{Full MCMC Algorithms}

We present the two main MCMC algorithms. Both algorithms rely on choices of base transition kernels and the initial partition $a$. Since the algorithms can be run for any choice of $a$, in this section we view $a$ as fixed; recall our discussion on algorithms for choosing $a$ in Section \ref{SecPreproc}. The base transition kernels $q_{emb}, q_{par}$ must satisfy the following conditions.

The embedding proposal $q_{emb}$ should be a collection of Markov transition kernels $\{q_{emb,i}\}_{i\in[n]}$ on $(\mathbb{R}^{2})^{n}$ with the following two properties:

\begin{enumerate}
    \item each $q_{emb,i}$ is reversible with respect to Lebesgue measure on the $i$th coordinate, and
    \item if $\tau^{*} \sim q_{emb,i}(\tau,\cdot)$, then $\tau_{j}^{*}=\tau_j$ for every $j\neq i$.
\end{enumerate}

Assume that $\pi_{\Omega}$ has a density with respect to Lebesgue measure on $\Omega$, and that $\pi_{\Theta}$ has a density with respect to some base measure on $\Theta$. We use the same symbols $\pi_{\Omega}$ and $\pi_{\Theta}$ for these densities in the Metropolis-Hastings ratios below. The base transition kernel $q_{par}$ should be any Markov transition kernel that is reversible with respect to the chosen base measure on $\Theta$.

We now state our algorithms for sampling new embeddings: our ``fast" algorithm (Algorithm \ref{alg_MCMC}) and our ``faster" algorithm (Algorithm  \ref{alg_MCMC2}). In the update routines below, every object carrying the argument $\tau^*$ is a tentative value computed for the proposal; the current stored moments and centers are overwritten only after the Metropolis-Hastings proposal is accepted.

\begin{algorithm}[H]
\caption{MCMC1}\label{alg_MCMC}
    \begin{algorithmic}
       \REQUIRE{$i,\pi_{\Omega},\tau,\theta,E,a,A,y,M^{(1)},M^{(c)},Q^{(1)},Q^{(c)},\tilde{L}, q_{emb}$}
        \STATE{Sample $\tau^{*} \sim q_{emb,i}(\tau,\cdot)$}
        \STATE{Compute $y(\tau^*),M^{(1)}(\tau^*),M^{(c)}(\tau^*),\Delta \tilde{L}$=UpdateM($\tau,\theta,i,\tau^*,E,a,A,y,M^{(1)},M^{(c)},Q^{(1)},Q^{(c)}$)}
        \STATE{Set $Q^{(1)}(\tau^*),Q^{(c)}(\tau^*)$=UpdateQ($\tau,i,\tau^*,E,a,A,Q^{(1)},Q^{(c)}$)}
        \STATE{Sample $U \sim \mathrm{Unif}(0,1)$}
        \IF{$U < \min\{1,\exp(\Delta \tilde{L})\,\pi_{\Omega}(\tau^{*}_{i})/\pi_{\Omega}(\tau_{i})\}$}
            \STATE{Return $\tau^{*},y(\tau^{*} ),M^{(1)}(\tau^*),M^{(c)}(\tau^*),Q^{(1)}(\tau^*),Q^{(c)}(\tau^*),\tilde{L} + \Delta \tilde{L}$}
        \ELSE
            \STATE{Return $\tau,y, M^{(1)},M^{(c)},Q^{(1)},Q^{(c)},\tilde{L}$}
        \ENDIF
\end{algorithmic}
\end{algorithm}

\begin{algorithm}[H]
\caption{MCMC2}\label{alg_MCMC2}
    \begin{algorithmic}
       \REQUIRE{$i,\pi_{\Omega},\tau,\theta,E,a,A,y,M^{(1)},M^{(c)},\{E_{i,t}\},\tilde{L}, q_{emb}$}
        \STATE{Sample $\tau^{*} \sim q_{emb,i}(\tau,\cdot)$}
        \STATE{Compute $y(\tau^*),M^{(1)}(\tau^*),M^{(c)}(\tau^*),\Delta \tilde{L}$=UpdateM2($\tau,\theta,i,\tau^*,E,a,A,y,M^{(1)},M^{(c)},\{E_{i,t}\}$)}
        \STATE{Sample $U \sim \mathrm{Unif}(0,1)$}
        \IF{$U < \min\{1,\exp(\Delta \tilde{L})\,\pi_{\Omega}(\tau^{*}_{i})/\pi_{\Omega}(\tau_{i})\}$}
            \STATE{Return $\tau^{*},y(\tau^{*}), M^{(1)}(\tau^*),M^{(c)}(\tau^*),\tilde{L} + \Delta \tilde{L}$}
        \ELSE
            \STATE{Return $\tau,y, M^{(1)},M^{(c)},\tilde{L}$}
        \ENDIF
\end{algorithmic}
\end{algorithm}

For future reference, we write down the usual Metropolis-within-Gibbs sampler for updating parameters:

\begin{algorithm}[H]
\caption{MCMCP}\label{alg_MCMC_param}
    \begin{algorithmic}
       \REQUIRE{$\pi_{\Theta},\theta,A,y,M^{(1)},M^{(c)},\tilde{L}, q_{par}$}
        \STATE{Sample $\theta^{*} \sim q_{par}(\theta,\cdot)$}
        \STATE{Compute $\tilde{L}^{*}$ by Equation \eqref{approx_log_like} with parameter $\theta^{*}$ and the current summaries $y,M^{(1)},M^{(c)}$}
        \STATE{Sample $U \sim \mathrm{Unif}(0,1)$}
        \IF{$U < \min\{1,\exp(\tilde{L}^{*} - \tilde{L})\,\pi_{\Theta}(\theta^{*})/\pi_{\Theta}(\theta)\}$}
            \STATE{Return $\theta^{*},\tilde{L}^{*}$}
        \ELSE
            \STATE{Return $\theta,\tilde{L}$}
        \ENDIF
\end{algorithmic}
\end{algorithm}

We combine these to get an associated pair of ``full" algorithms, Algorithms \ref{alg_MCMC_full} and \ref{alg_MCMC_full2}:

\begin{algorithm}[H]
\caption{Fast MCMC Algorithm}\label{alg_MCMC_full}
\begin{algorithmic}
\REQUIRE{$\pi_{\Omega},\pi_{\Theta},\tau^{(0)},\theta^{(0)},E,a,A,T, q_{emb}, q_{par}$}
\STATE{$y,M^{(1)},M^{(c)},Q^{(1)},Q^{(c)}$=InitializeQM($\tau^{(0)},E,a,A$)}
\STATE{Compute $\tilde{L}$ by Equation \eqref{approx_log_like} with parameter $\theta^{(0)}$ and the initialized summaries}
\FOR{$t=1,2,\ldots,T$}
\STATE{$\theta^{(t)}, \tilde{L} =$MCMCP$(\pi_{\Theta},\theta^{(t-1)},A,y,M^{(1)},M^{(c)},\tilde{L}, q_{par})$}
\STATE{$\tau^{(t)} = \tau^{(t-1)}$}
    \FOR{$i=1,2,\dots,n$}
        \STATE{$\tau^{(t)},y,M^{(1)},M^{(c)},Q^{(1)},Q^{(c)},\tilde{L}$=MCMC1($i,\pi_{\Omega},\tau^{(t)},\theta^{(t)},E,a,A,y,M^{(1)},M^{(c)},Q^{(1)},Q^{(c)},\tilde{L},q_{emb}$)}
    \ENDFOR
\ENDFOR
\RETURN{$(\tau^{(T)},\theta^{(T)})$}
\end{algorithmic}
\end{algorithm}

\begin{algorithm}[H]
\caption{Faster MCMC Algorithm}\label{alg_MCMC_full2}
\begin{algorithmic}
\REQUIRE{$\pi_{\Omega},\pi_{\Theta},\tau^{(0)},\theta^{(0)},E,a,A,T, q_{emb}, q_{par}$ with $A=A_1$}
\STATE{Compute $\{E_{i,t}\}_{i\in[n],t\in[K]}$}
\STATE{$y,M^{(1)},M^{(c)}$=InitializeM($\tau^{(0)},E,a,A,\{E_{i,t}\}$)}
\STATE{Compute $\tilde{L}$ by Equation \eqref{approx_log_like} with parameter $\theta^{(0)}$ and the initialized summaries}
\FOR{$t=1,2,\ldots,T$}
\STATE{$\theta^{(t)},\tilde{L} =$MCMCP$(\pi_{\Theta},\theta^{(t-1)},A,y,M^{(1)},M^{(c)},\tilde{L}, q_{par})$}
\STATE{$\tau^{(t)} = \tau^{(t-1)}$}
    \FOR{$i=1,2,\dots,n$}
        \STATE{$\tau^{(t)},y,M^{(1)},M^{(c)},\tilde{L}$=MCMC2($i,\pi_{\Omega},\tau^{(t)},\theta^{(t)},E,a,A,y,M^{(1)},M^{(c)},\{E_{i,t}\},\tilde{L},q_{emb}$)}
    \ENDFOR
\ENDFOR
\RETURN{$(\tau^{(T)},\theta^{(T)})$}
\end{algorithmic}
\end{algorithm}

\section{Theoretical Guarantees} \label{SecAllTheory}

We collect our main theoretical results: the stationary measure is ``close" to the true posterior (Section \ref{sec-Thms}) and the algorithms are ``fast" (Section \ref{sec-Thms2}).

\subsection{Theoretical guarantees: error estimates }\label{sec-Thms}

We present guarantees on the accuracy of Algorithms \ref{alg_MCMC_full} and \ref{alg_MCMC_full2} under certain assumptions.

\subsubsection{Assumptions About the Latent Position Model}

Following \cite{rastelli2018computationally}, we restrict our attention to LPMs with compact two-dimensional state spaces and compact parameter spaces: 

\begin{assumption}\label{assum1}
 $\Theta$ is a compact subset of $\mathbb{R}^{d_{\Theta}}$ for some finite $d_{\Theta}$, and there exists $0<\mathcal{H}<\infty$ such that $\Omega\subset[-\mathcal{H},\mathcal{H}]\times[-\mathcal{H},\mathcal{H}]$.
\end{assumption}

\begin{remark}
Our algorithms can be run for high-dimensional LPMs, but low-dimensional LPMs are vastly more popular than high-dimensional LPMs and the relevant asymptotic regimes look quite different in these two cases. For this reason, we restrict our analysis only to low-dimensional LPMs.
\end{remark}

Continuing to follow \cite{rastelli2018computationally}, we make the following somewhat strong assumptions on our link function:

\begin{assumption}\label{assum2}
    Fix a family of link functions $\{p_{\theta}\}_{\theta \in \Theta}$. We assume:
    \begin{enumerate}
    \item \textbf{Depends Only on Distances}: \par $p_{\theta}$ is of the form:
\be \label{EqLinkForm}
p_{\theta}(x,y) = f(\|x-y\|^{2};\theta).
\ee 
        \item \textbf{Nondegenerate Probabilities}: \par for all $d \geq 0$ and $\theta \in \Theta$, $$f(d;\theta)\in [p^L,p^U] \subset (0,1).$$

        \item \begingroup\textbf{Analytic log-link bound}: For each fixed $\theta \in \Theta$, define
        $$g_{1,\theta}(x,y)=\log p_{\theta}(x,y), \qquad g_{0,\theta}(x,y)=\log(1-p_{\theta}(x,y)).$$
        There is an open neighborhood $\mathcal{U}\subset (\mathbb{R}^{2})^{2}$ of $[-\mathcal{H},\mathcal{H}]^{2}\times[-\mathcal{H},\mathcal{H}]^{2}$ such that, for every $\theta\in\Theta$, the functions $g_{0,\theta}$ and $g_{1,\theta}$ extend to $C^{\infty}$ functions on $\mathcal{U}$. Moreover, there are finite constants $M_g,\rho_g>0$ such that, for every integer $m\geq 0$,
        \be \label{assum4}
        D_{m,g}:=\max_{\ell\in\{0,1\}}\sup_{\theta\in\Theta}\sup_{z\in\mathcal{U}}\sup_{|\alpha|=m}|D^{\alpha}g_{\ell,\theta}(z)|\leq M_g\,m!\,\rho_g^{-m}. 
        \ee
        \endgroup
    \end{enumerate}
\end{assumption}

\begin{remark} [Weakening These Assumptions] \label{RemWeakening}
Our assumption that the log-link function is analytic is useful to obtain cleaner theorem statements. This allowed us to summarize our theory in Section \ref {SecAsymptoticsComp} as saying that Algorithm \ref{alg_MCMC_full} has total variation error of $O(|V|^{2}(B_g b)^{\kappa +1})$, where the implied constant does not depend on $|V|, b$ or $\kappa$.

In the common situation that we fix $\kappa$ and consider what happens as $|V|, b$ change, we can replace Inequality \eqref{assum4} by the much weaker assumption that $g_{\ell,\theta}(z)$ has continuous $\alpha$'th partial derivatives for all multi-indices $\alpha$ of size $|\alpha| \leq \kappa$. This substitution does not change any of the proofs, but does mean that the implied constant in the error bound may depend on the order $\kappa$.

\end{remark}

\begin{remark} [Satisfying Assumptions and Comparison to Assumptions of \cite{rastelli2018computationally}]
Our new assumptions are weaker in one way (we no longer require that $f$ be monotone) and typically stronger in another (even with the modification in Remark \ref{RemWeakening}, we require $f$ have $\kappa$ continuous partial derivatives). 

Our assumptions do hold for reasonable link functions used in practice. In particular, Inequality \eqref{assum4} holds for the Gaussian link \eqref{Gaussian_link} when the parameter space is compact, $\sigma$ is bounded away from zero, and $p_\theta$ and $1-p_\theta$ are uniformly bounded away from zero.
\end{remark}

Finally, we assume that our partition is ``good" in the sense that nodes in the same part of the partition should have similar latent positions. We first define a ``good" partition with respect to a particular embedding:

\begin{defn} \label{DefBGood}
Fix an embedding $\tau \in (\mathbb{R}^{2})^{n}$ and a partition $a \, : \, [n] \to [K]$. We say that $a$ is \textit{b-good} with respect to $\tau$ if 
\be 
\max_{s \in [K]} (\max_{i, j \in [n] \, : \, a(i) = a(j) = s} \| \tau_{i} - \tau_{j} \|) \leq b.
\ee 
\end{defn}

In practice, we don't think of $G$ and $\phi$ as fixed. Instead, we think of the observed graph $G$ as random, and the partition $\phi$ as being computed according to some strategy (\textit{e.g.} the strategy in Appendix \ref{SubsecNodePart}) based on the observed graph. Thus, it makes sense to use the following extended definition of ``goodness" in place of the deterministic Definition \ref{DefBGood}:

\begingroup
\begin{defn} \label{DefBGood2}
Fix $n \in \mathbb{N}$, a family of link functions $\{p_{\theta}\}_{\theta \in \Theta}$, and priors $\pi_{\Theta}, \pi_{\Omega}$. Fix a graph $G$ and denote by $\pi_{G}$ the posterior distribution on $(\tau,\theta)$ given the observed graph $G$ under the model in Definition \ref{DefLPM2}. When $\pi_G$ is applied to events depending only on $\tau$, as below, it denotes the corresponding $\tau$-marginal. Let $\phi$ be a function taking as input a graph and returning as output a partition of the vertices of the graph.

For constants $0 < \epsilon_{\mathrm{post}}, b < \infty$, say that $\phi$ is $(\epsilon_{\mathrm{post}}, b)$-good with respect to the graph $G$ and the priors $\pi_{\Theta},\pi_{\Omega}$ if

\be 
\pi_{G}[\phi(G) \text{ is b-good with respect to } \tau] > 1-\epsilon_{\mathrm{post}}.
\ee 
\end{defn}

If a strategy $\phi$ is good with respect to a particular graph $G$, then simply truncating the posterior distributions to embeddings $\tau$ that are $b$-good does not substantially influence the posterior. A convenient-to-state assumption is:

\begin{assumption}\label{assum3}
The function $\phi$ is $(\epsilon_{\mathrm{post}},b)$-good with respect to the observed graph $G$.
\end{assumption}

Since the graph $G$ is random under the LPM, it is also useful to assume that $\phi$ is good with high probability over the sampled graph:

\begin{assumption}\label{assum3_alt}
Fix notation as in Definition \ref{DefBGood2}. We say that $\phi$ is $(\epsilon_{\mathrm{post}},\delta_{\mathrm{graph}}, b)$-good with respect to the associated LPM model if, when $G$ is sampled from the LPM model, $\phi$ is $(\epsilon_{\mathrm{post}},b)$-good with respect to $G$ with probability at least $1-\delta_{\mathrm{graph}}$.
\end{assumption}
\endgroup

\begin{remark} [On typical values of $b$] \label{RemDiscAssum3}
It is natural to ask: for what values of $b$ and $\epsilon_{\mathrm{post}}$ should we expect Assumption \ref{assum3} to hold, and what algorithms $\phi$ achieve these bounds? If we can find a good embedding of the vertices, we can find a good partition of the vertices (see Lemma \ref{IneqBasicPartitionProperties}), so it is sufficient to answer the question: what is the rate of convergence $b = b(n)$ for point estimates of graph embedding algorithms?

This turns out to be a difficult question to answer in full generality, and there is a large literature on the subject. The first standard approach in the literature is to construct a ``Laplacian" matrix from the adjacency matrix of the graph, then embed the vectors of the graph using the eigenvectors of this Laplacian with the highest few eigenvalues. One can then use a perturbation bound (such as the Davis-Kahan theorem - see \textit{e.g.} \cite{yu2014usefulvariantdaviskahantheorem} for an explanation) to obtain error bounds for the procedure. In simple settings, this approach is known to achieve the usual statistical rate of convergence of $b(n) = O \left(\frac{1}{\sqrt{n}} \right)$ (see \textit{e.g.} Example 1 of \cite{von2008consistency}). In many other settings, small modifications of this algorithm are known to achieve nearly-optimal rates of $b(n) = O \left(\frac{polylog(n)}{\sqrt{n}} \right)$ (see \textit{e.g.} \cite{natik2021consistencyspectralseriation}). There is a great deal of ongoing research in finding accurate embeddings of graphs and related objects - see \textit{e.g.} \cite{lu2023contextual,Gallagher02072024} for two very different families of modern work in the area.

For the purposes of this paper, we don't need such a strong bound in order to achieve good results - Theorem \ref{Thm1} and the random-graph corollary following it give strong estimates for sufficiently large $\kappa$ as long as $b(n) = O(n^{-c})$ for \textit{some} $c > 0$. Such bounds can be achieved by directly applying perturbation bounds such as Theorem 16 of \cite{von2008consistency}. 

\end{remark}

\begin{remark} [Our Assumptions and Unimodality]
If we insisted on being able to find a single partition $a$ such that the strong condition ``$\| \tau_{i} - \tau_{j} \|$ is small if and only if $a(i)= a(j)$" holds for most embeddings $\tau$ sampled from the posterior, we would be very close to insisting that the posterior is unimodal. While unimodal posteriors are fairly common in practice, this would be a fairly strong assumption. Fortunately, we have two ways around this difficulty:

\begin{enumerate}
    \item Even as-written, we only insist on the much weaker condition ``$\| \tau_{i} - \tau_{j} \|$ is small if $a(i)= a(j)$." This condition turns out to be much weaker than unimodality - indeed, it can be satisfied even for posteriors with a moderate number of very strongly-separated modes. 
    \item While we insist on the above condition holding for a \textit{single} pre-chosen partition $a$, this insistence is largely to keep the focus of our (already somewhat-lengthy) algorithms on the main ideas. If we allowed some re-initializations of $a$ during the run of our algorithm, it is possible to accommodate a broader class of posterior distributions.
\end{enumerate}

See Section \ref{RemMulti} for further discussion.
\end{remark}

\subsubsection{Main Error Bounds} \label{SecMainErrorBounds}

We introduce notation for embeddings that ``respect" a partition $a$, then explain and state our main theoretical error bounds. 

\begingroup
We note that there are two important sources of randomness our results. First, the sampled graph $G$ is random; once $G$ is observed, the partition $a=\phi(G)$, the posterior $\pi_G$, and the stationary measure of the approximate chain are deterministic functions of $G$. Second, after conditioning on $G$, samples drawn from either $\pi_G$ or the approximate stationary measure are random. In particular, both the stationary measure of our algorithm and the true posterior are random functions of the graph. Our main results say, informally, that these two measures are close \textit{with high probability,} but allow some nonzero chance of failure. This should not be alarming, as \textit{e.g.} it is possible for the observed graph itself to be very ill-behaved, or even empty, with some small probability.
\endgroup

\begingroup
\begin{defn} [$(a,b)$-Respecting Embeddings]
Fix $n,K \in \mathbb{N}$, $a \, : \, [n] \to [K]$, and constant $b > 0$. Define the set of $(a,b)$-\textit{respecting} embeddings:
\be 
S_{a,b} = \{ \tau \in \Omega^{n} \, : \, \max_{s \in [K]} \max_{i, j \in [n] \, : \, a(i) = a(j)=s} \| \tau_{i} - \tau_{j} \| \leq b\},
\ee 
and say that a distribution $\pi$ on $\Omega^{n}$ is supported on $(a,b)$-respecting embeddings if 
\be 
\pi(S_{a,b}) = 1.
\ee 
\end{defn}
\endgroup

\begingroup
Denote by $\pi_{G}$ the posterior distribution on $(\tau,\theta)$ induced by the likelihood $L$ in Equation \eqref{EqLogLikSimple}. For fixed $a$, $b$, and $\kappa$, define the truncated approximate posterior $\hat{\pi}_{G}$ by
\be\label{EqTruncatedApproxPosterior}
\hat{\pi}_{G}(B)=\frac{\displaystyle \int_{B\cap (S_{a,b}\times\Theta)} \exp(\tilde{L}_{\theta}(\tau))\,\pi_{\Omega}^{\otimes n}(d\tau)\,\pi_{\Theta}(d\theta)}{\displaystyle \int_{S_{a,b}\times\Theta} \exp(\tilde{L}_{\theta}(\tau))\,\pi_{\Omega}^{\otimes n}(d\tau)\,\pi_{\Theta}(d\theta)}
\ee
for measurable $B\subset \Omega^n\times\Theta$. We suppress the dependence of $\hat\pi_G$ on $a,b$, and $\kappa$.
\endgroup

We split our main results into three estimates: a deterministic bound on the error due to Taylor expansions, a bound for fixed good partition, and the final result.

\begingroup
\begin{prop} \label{DetTaylorBound}
Fix $n,K,\kappa \in \mathbb{N}$ and $a \, : \, [n] \to [K]$. Under Assumptions \ref{assum1} and \ref{assum2}, for all $\theta\in\Theta$ and $\tau\in S_{a,b}$,
\begingroup
\be \label{boundlike}
\begin{aligned}
|\tilde{L}_{\theta}(\tau)-L_{\theta}(\tau)|&\leq R_{n,\kappa}(b),\\
R_{n,\kappa}(b)&:=\frac{n(n-1)}{2}M_g(B_g b)^{\kappa+1},\qquad B_g:=\frac{2}{\rho_g}.
\end{aligned}
\ee
\endgroup
Consequently,
\be \label{boundlike_mult}
e^{-R_{n,\kappa}(b)}\leq \frac{\exp(\tilde{L}_{\theta}(\tau))}{\exp(L_{\theta}(\tau))}\leq e^{R_{n,\kappa}(b)}.
\ee
\end{prop}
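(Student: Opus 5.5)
The plan is to bound the error pair by pair: $\tilde L_\theta$ is, by construction, the sum over ordered pairs of the $\kappa$-th Taylor polynomial of $g_\ell$ centered at the block centers, so the difference $\tilde L_\theta-L_\theta$ splits into a sum of Taylor remainders. First I would verify the exact identity. Since $T$ is linear in the moment argument and $M_{s,t,\beta}=\sum_{(i,j)\in P_{s,t}}(\tau_i,\tau_j)^\beta$, the binomial expansion in Section \ref{SecTaylor} read backwards gives
\be
T(g_1,y_s,y_t,\kappa,\{M_{s,t,\cdot}\})=\sum_{(i,j)\in P_{s,t}}\sum_{|\alpha|\le\kappa}\frac{D^\alpha g_1(y_s,y_t)}{\alpha!}\bigl[(\tau_i,\tau_j)-(y_s,y_t)\bigr]^\alpha .
\ee
The same holds for $g_0$ with $M^{(0)}=M^{(c)}-M$ by \eqref{EqTranslateMomentsCompleteMissing}. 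Hence
\be
\tilde L_\theta(\tau)-L_\theta(\tau)=-\frac12\sum_{(i,j)\in\mathcal P_n}\mathcal R_{\kappa}\bigl(g_{\ell(i,j),\theta};(y_{a(i)},y_{a(j)}),(\tau_i,\tau_j)\bigr),
\ee
where $\ell(i,j)=1$ on $E$ and $0$ on $E^0$, and $\mathcal R_\kappa$ denotes the Taylor remainder. Note that the pairs in $\mathcal P_n$ are ordered, so there are $n(n-1)$ of them.

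Next I would bound each remainder. Because $\tau\in S_{a,b}$ and $y_s$ is the average of the $\tau_i$ with $a(i)=s$, each coordinate block satisfies $\|\tau_i-y_{a(i)}\|\le b$: the center lies in the convex hull of the block, and every block point is within $b$ of every other. The centers and the segment joining $(y_s,y_t)$ to $(\tau_i,\tau_j)$ lie in $[-\mathcal H,\mathcal H]^4$ by convexity (centers of points in $\Omega\subset[-\mathcal H,\mathcal H]^2$), hence in $\mathcal U$. I would use the Lagrange form along the segment, writing $h=(\tau_i,\tau_j)-(y_s,y_t)\in\mathbb R^4$:
\be
\mathcal R_\kappa=\sum_{|\alpha|=\kappa+1}\frac{D^\alpha g(\xi)}{\alpha!}h^\alpha .
\ee
Bounding $|D^\alpha g|\le M_g(\kappa+1)!\rho_g^{-(\kappa+1)}$ by \eqref{assum4} and using the multinomial identity $\sum_{|\alpha|=m}\frac{m!}{\alpha!}|h^\alpha|=\|h\|_1^m$ gives $|\mathcal R_\kappa|\le M_g\rho_g^{-(\kappa+1)}\|h\|_1^{\kappa+1}$. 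Finally, $\|h\|_1\le\sqrt2\,\|\tau_i-y_s\|+\sqrt2\,\|\tau_j-y_t\|\le 2\sqrt2\, b$.

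The main obstacle is that this crude $\ell^1$ bound gives a factor $(2\sqrt2/\rho_g)^{\kappa+1}$ rather than $B_g=2/\rho_g$. To fix this, I would instead bound the remainder via the one-variable function $\varphi(u)=g((y_s,y_t)+uh)$. Its $m$-th derivative is a directional derivative, and I would aim to control it by $M_g m!\,(\|h\|/\rho_g)^m$ times a constant. Alternatively, I would treat $\rho_g$ as a radius of analyticity in the Euclidean sense, which yields $|\mathcal R_\kappa|\le M_g(\|h\|_2/\rho_g)^{\kappa+1}$ with $\|h\|_2\le\sqrt2 b\le 2b$, giving the claimed $(B_gb)^{\kappa+1}$. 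If only the coordinatewise bound \eqref{assum4} is available, I would note that the constant absorbs a dimension factor and adjust $B_g$ accordingly. I would present the Euclidean-direction argument, since the stated constant $2/\rho_g$ suggests it: $\|h\|_2\le 2b$, comfortably.

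Summing over the $n(n-1)$ ordered pairs with the factor $\tfrac12$ gives $|\tilde L_\theta-L_\theta|\le \frac{n(n-1)}2M_g(B_gb)^{\kappa+1}=R_{n,\kappa}(b)$ uniformly in $\theta$, since \eqref{assum4} is uniform over $\Theta$. Exponentiating then yields \eqref{boundlike_mult} immediately.
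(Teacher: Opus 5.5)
Your skeleton matches the paper's proof: writing $\tilde L_\theta-L_\theta$ as $-\frac12$ times a sum of $n(n-1)$ per-pair Taylor remainders, using the Lagrange form and the multinomial identity to get $|\mathcal R_\kappa|\le M_g(\|h\|_1/\rho_g)^{\kappa+1}$, bounding $\|h\|_1\le 2\sqrt2\,b$, and summing. The gap is the step you use to reach $B_g=2/\rho_g$. Inequality \eqref{assum4} bounds coordinate mixed partials only, and passing to $\varphi(u)=g((y_s,y_t)+uh)$ gains nothing. Indeed $\varphi^{(m)}=\sum_{|\alpha|=m}\frac{m!}{\alpha!}h^{\alpha}D^{\alpha}g$, so \eqref{assum4} returns exactly $|\varphi^{(m)}|/m!\le M_g(\|h\|_1/\rho_g)^m$. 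Equivalently, for a Euclidean unit vector $\omega\in\mathbb{R}^4$ you only get $|\partial_\omega^m g|\le M_g\,m!\,\rho_g^{-m}\|\omega\|_1^m$, with $\|\omega\|_1$ as large as $2$. So the ``constant'' you propose to absorb is $(\|h\|_1/\|h\|_2)^{m}$, which grows geometrically in $m$ and changes $B_g$. Treating $\rho_g$ as a Euclidean radius of analyticity is therefore an extra hypothesis. From \eqref{assum4} alone, the per-pair factor $2\sqrt2/\rho_g$ cannot be improved in general. For example, $g(z)=M_g/(1+((z_1+z_2+z_3+z_4)/\rho_g)^2)$ satisfies \eqref{assum4} on all of $\mathbb{R}^4$. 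Expanding at a point with $c_1+c_2+c_3+c_4=0$, with $h=\frac{b}{\sqrt2}(1,1,1,1)$ (so each node is displaced by $b$) and $\kappa=1$, the remainder has size $M_g\frac{8b^2/\rho_g^2}{1+8b^2/\rho_g^2}$. This exceeds $M_g(2b/\rho_g)^2$ whenever $b<\rho_g/(2\sqrt2)$, and still does for $h$ slightly shorter.

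The paper's proof has the same hole in a different form. It runs exactly your $\ell^1$ estimate, obtains $M_g(2\sqrt2\,b/\rho_g)^{\kappa+1}$, and then writes this as $M_g(B_gb)^{\kappa+1}$ with $B_g=2/\rho_g$, which is an arithmetic slip. So the discrepancy you noticed is real, and both arguments actually prove the proposition with $B_g=2\sqrt2/\rho_g$, which is your fallback.

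If you want to keep $2/\rho_g$, the ingredient to use is Assumption \ref{assum2}(1), which your proposal never invokes. It gives $g_{\ell,\theta}(x,y)=G(x-y)$ with $G(v)=\psi(\|v\|^2)$ radial. The pair remainder is then the one-variable remainder of $u\mapsto G(v_0+ud)$, where $v_0=y_{a(i)}-y_{a(j)}$ and $d=(\tau_i-y_{a(i)})-(\tau_j-y_{a(j)})$ with $\|d\|_2\le 2b$; this is where a factor $2$ naturally comes from. Rotation invariance gives $\partial^m_{\omega}G(w)=\partial_1^mG(Rw)$ for a rotation $R$ with $R\omega=e_1$. Inequality \eqref{assum4} bounds the right side by $M_g\,m!\,\rho_g^{-m}$ provided $Rw$ lies in $\{x-y:(x,y)\in\mathcal U\}$. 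That holds when the segment from $v_0$ to $\tau_i-\tau_j$ stays in the disk of radius $2\mathcal H$, for instance if $\Omega$ lies in the disk of radius $\mathcal H$. It is not automatic for the square $[-\mathcal H,\mathcal H]^2$, where differences reach norm $2\sqrt2\,\mathcal H$. So either state the result with $2\sqrt2/\rho_g$, or add such a hypothesis explicitly.
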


\begin{theorem} \label{Thm1}
Fix $G$, $a$, $b$, and $\kappa$. Under Assumptions \ref{assum1} and \ref{assum2}, if
\be
\pi_G(S_{a,b}\times\Theta)\geq 1-\epsilon_{\mathrm{post}},
\ee
then
\be \label{tv}
\|\pi_{G}-\hat{\pi}_{G}\|_{TV}\leq e^{2R_{n,\kappa}(b)}-1 + \epsilon_{\mathrm{post}}, 
\ee
where $R_{n,\kappa}$ is defined in Equation \eqref{boundlike}.
\end{theorem}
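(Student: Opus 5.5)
We prove Theorem \ref{Thm1} by inserting the truncated exact posterior as an intermediate measure. Define $\pi_G^{S}$ as $\pi_G$ conditioned on $S_{a,b}\times\Theta$:
\[
\pi_G^{S}(B)=\frac{\pi_G(B\cap(S_{a,b}\times\Theta))}{\pi_G(S_{a,b}\times\Theta)}.
\]
The triangle inequality gives
\[
\|\pi_G-\hat\pi_G\|_{TV}\le \|\pi_G-\pi_G^{S}\|_{TV}+\|\pi_G^{S}-\hat\pi_G\|_{TV},
\]
and we bound the two terms separately.

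\textbf{First term.} This is the standard conditioning estimate. Write $p=\pi_G(S_{a,b}\times\Theta)\ge 1-\epsilon_{\mathrm{post}}$. For any measurable $B$, $\pi_G^{S}(B)-\pi_G(B)$ equals
\[
\pi_G(B\cap S)\left(\tfrac1p-1\right)-\pi_G(B\cap S^c).
\]
This quantity lies in $[-(1-p),\,1-p]$. Hence $\|\pi_G-\pi_G^{S}\|_{TV}=1-p\le\epsilon_{\mathrm{post}}$.

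\textbf{Second term.} Both $\pi_G^{S}$ and $\hat\pi_G$ have densities with respect to the common reference measure $\mu=\pi_\Omega^{\otimes n}\otimes\pi_\Theta$ restricted to $S_{a,b}\times\Theta$. These densities are $e^{L_\theta(\tau)}/Z$ and $e^{\tilde L_\theta(\tau)}/\tilde Z$, where $Z$ and $\tilde Z$ are the integrals of $e^{L}$ and $e^{\tilde L}$ over $S_{a,b}\times\Theta$.

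Proposition \ref{DetTaylorBound} applies pointwise on $S_{a,b}$ for every $\theta$. Write $R=R_{n,\kappa}(b)$. The proposition gives $e^{-R}e^{L}\le e^{\tilde L}\le e^{R}e^{L}$. Integrating over $S_{a,b}\times\Theta$ gives $e^{-R}Z\le\tilde Z\le e^{R}Z$. The likelihood ratio $h=d\hat\pi_G/d\pi_G^{S}=(e^{\tilde L}/\tilde Z)/(e^{L}/Z)$ therefore satisfies $e^{-2R}\le h\le e^{2R}$ everywhere on the support. Then
\[
\|\pi_G^{S}-\hat\pi_G\|_{TV}=\int (h-1)_+\,d\pi_G^{S}\le e^{2R}-1.
\]
Combining the two terms yields \eqref{tv}.

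\textbf{Main obstacle.} The only delicate point is measure-theoretic bookkeeping. We must check that $Z$ and $\tilde Z$ are finite and positive. Finiteness follows because $L$ and $\tilde L$ are bounded on the compact set $\Omega^n\times\Theta$ by Assumptions \ref{assum1} and \ref{assum2}. Positivity follows because $p>0$, which we may assume, since otherwise the bound is trivial once $\epsilon_{\mathrm{post}}\ge1$. We must also make sure that $\tilde L_\theta$ is measurable in $(\tau,\theta)$. The substantive analytic work (Taylor remainders) has already been done in Proposition \ref{DetTaylorBound}, so nothing new is needed there.
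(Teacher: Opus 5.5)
Your proof is correct and matches the paper's argument essentially step for step: the same intermediate measure $\pi_G^{S}$, the same triangle inequality with $\|\pi_G-\pi_G^{S}\|_{TV}=1-\pi_G(S)$, the same $e^{\pm 2R}$ density-ratio bound from combining the pointwise bound with the normalizing-constant bound, and the same $\int(h-1)_+$ estimate, which the paper packages as Lemma~\ref{Thm_1_lemma_4}. One small correction to your bookkeeping: $\Omega$ is only assumed to lie in a compact box, not to be compact, but $Z$ and $\tilde Z$ are still finite because $L$ is bounded via $p_\theta\in[p^L,p^U]$ and $|\tilde L-L|\le R$ on $S_{a,b}$.
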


\begin{corollary} \label{CorrMain}
If $a=\phi(G)$ and $\phi$ is $(\epsilon_{\mathrm{post}},\delta_{\mathrm{graph}},b)$-good with respect to the LPM model, then
\be \label{tv_alt}
\mathbb{P}\left[\|\pi_{G}-\hat{\pi}_{G}\|_{TV}\leq e^{2R_{n,\kappa}(b)}-1 + \epsilon_{\mathrm{post}}\right] \geq 1 - \delta_{\mathrm{graph}},
\ee
where the probability is with respect to the sampled graph $G$ and $R_{n,\kappa}$ is defined in Equation \eqref{boundlike}.
\end{corollary}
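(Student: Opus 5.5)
The corollary follows directly from Theorem \ref{Thm1}, applied on a high-probability event over the sampled graph. The plan is to identify that event, check that it delivers the hypothesis of Theorem \ref{Thm1} pointwise in $G$, and then transfer the probability bound.

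First, define the event
\[
\mathcal{E}=\{G : \phi \text{ is } (\epsilon_{\mathrm{post}},b)\text{-good with respect to } G\}.
\]
By Assumption \ref{assum3_alt}, i.e. the definition of $(\epsilon_{\mathrm{post}},\delta_{\mathrm{graph}},b)$-goodness, we have $\mathbb{P}[\mathcal{E}]\geq 1-\delta_{\mathrm{graph}}$ when $G$ is sampled from the LPM.

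Second, fix any $G\in\mathcal{E}$ and set $a=\phi(G)$. Since $\mathcal{E}$ holds, Definition \ref{DefBGood2} gives
\[
\pi_G[\,a \text{ is } b\text{-good w.r.t. } \tau\,]>1-\epsilon_{\mathrm{post}}.
\]
The event that $a$ is $b$-good with respect to $\tau$ is exactly $\tau\in S_{a,b}$. One small point needs checking here: Definition \ref{DefBGood} does not require $\tau\in\Omega^n$, whereas $S_{a,b}\subset\Omega^n$. The posterior is supported on $\Omega^n$, so the two events agree $\pi_G$-almost surely. Since this event concerns only the $\tau$-marginal, we obtain $\pi_G(S_{a,b}\times\Theta)\geq 1-\epsilon_{\mathrm{post}}$, which is the hypothesis of Theorem \ref{Thm1}.

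Third, Assumptions \ref{assum1} and \ref{assum2} are model-level assumptions and do not depend on $G$. Theorem \ref{Thm1} therefore applies to this fixed $G$, with $a$, $b$ and $\kappa$ fixed, and yields
\[
\|\pi_G-\hat\pi_G\|_{TV}\leq e^{2R_{n,\kappa}(b)}-1+\epsilon_{\mathrm{post}}.
\]
Hence $\mathcal{E}$ is contained in the event inside the probability in \eqref{tv_alt}, and monotonicity of probability gives the claim.

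The only subtle step is measurability. The TV event in \eqref{tv_alt} should be measurable in $G$, and this is automatic because $G$ ranges over the finite set of graphs on $[n]$. If one prefers not to rely on that, it suffices to note that the probability is bounded below via the measurable subevent $\mathcal{E}$, so no real obstacle arises.
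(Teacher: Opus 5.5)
Your proposal is correct and is exactly the intended argument: on the event, of probability at least $1-\delta_{\mathrm{graph}}$, that $\phi$ is $(\epsilon_{\mathrm{post}},b)$-good for $G$, the hypothesis $\pi_G(S_{a,b}\times\Theta)\geq 1-\epsilon_{\mathrm{post}}$ of Theorem \ref{Thm1} holds, so its conclusion holds there and monotonicity of probability gives the claim. The paper leaves this step implicit after proving Theorem \ref{Thm1}, and your extra checks are all sound: the strict inequality implies the non-strict one, $b$-goodness coincides with $S_{a,b}$ on the support $\Omega^n$, and measurability holds because the graph space is finite.
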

\endgroup

\begingroup
\begin{proof}
    The proofs are in Section \ref{proof1}.
\end{proof}

\begin{remark-non} [Interpretation]
The term $e^{2R_{n,\kappa}(b)}-1$ is the deterministic Taylor approximation error accumulated over all vertex pairs, while $\epsilon_{\mathrm{post}}$ is the posterior probability that the partition fails to be $b$-good. The random-graph corollary adds only the graph-level failure probability $\delta_{\mathrm{graph}}$.
\end{remark-non}

\begin{remark} [On the Truncated Target and Computations]
The notation $\hat{\pi}_{G}$ above is deliberately used for the truncated approximate posterior. In an implementation, the same Metropolis-Hastings ratio applies if $\tilde{L}$ is evaluated as the extended approximate log-likelihood that equals Equation \eqref{approx_log_like} on $S_{a,b}$ and equals $-\infty$ outside $S_{a,b}$. Proposed states outside $S_{a,b}$ then have zero acceptance probability through the factor $\exp(\Delta\tilde{L})$, so no separate rejection step is part of the algorithmic description.
\end{remark}
\endgroup

\subsection{Theoretical guarantees on running times }\label{sec-Thms2}

We have the following bound on the main inner loop of Algorithm \ref{alg_MCMC_full}. In the usual case that sampling new parameters from the chain $q_{par}$ has negligible cost, this is the amortized cost (or, equivalently, asymptotic running time per sample) of Algorithm \ref{alg_MCMC_full}.

\begin{theorem} \label{ThemAmRunning}
The cost of running the innermost for loop of Algorithm \ref{alg_MCMC_full} is $O(|E| \, |A|+ n K \, |A|^{2})$.

\end{theorem}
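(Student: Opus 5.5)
The plan is to bound the cost of a single call to MCMC1 for node $k$, writing $r=a(k)$, and then sum over $k\in[n]$. The innermost loop performs $n$ such calls, one per node. Each call does a constant number of things: sampling $\tau^*$ from $q_{emb,k}$ (taken as $O(1)$), calling UpdateM, calling UpdateQ, and running an $O(1)$ Metropolis--Hastings accept/reject step with $O(1)$ prior evaluations. On acceptance, the tentative values are committed. We assume committing costs no more than computing them, since only the entries touched by the updates change. So it suffices to bound UpdateM and UpdateQ separately.

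\textbf{UpdateQ.} By \eqref{delta_Q_once}, only entries $Q_{j,r,\alpha}$ with $(j,k)\in E$ change, and by \eqref{delta_Q_total} only $Q^{(c)}_{r,\alpha}$ changes. Each change is an evaluation of $\Delta\tau_k(\delta,\alpha_1,\alpha_2)$. This quantity does not depend on $j$, so we precompute it once for all $\alpha\in A_{proj}$, at cost $O(|A_{proj}|\kappa)$ or $O(|A_{proj}|)$ by recursive powers. The update then costs $O(\deg(k)\,|A_{proj}|+|A_{proj}|)$. Since $|A_{proj}|\le|A|$, summing over $k$ and using $\sum_k\deg(k)=|E|$ gives $O(|E|\,|A|+n|A|)$.

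\textbf{UpdateM.} Updating $y_r$ by \eqref{update_y} is $O(1)$. The loop over $t\neq r$ and $\alpha\in A$ uses \eqref{delta_case_1}, \eqref{delta_case_2}, \eqref{delta_case_1_total} and \eqref{delta_case_2_total}. Each is a product of a precomputed $\Delta\tau_k$ value and a stored $Q$ or $Q^{(c)}$ entry, so the loop costs $O(K|A|)$. The diagonal block uses \eqref{delta_case_3} and \eqref{delta_case_3_total}, costing $O(|A|)$; here $\Delta\tau_k(\delta,\alpha_1+\alpha_3,\alpha_2+\alpha_4)$ has total degree at most $\kappa$ and so is again in the precomputed table. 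The dominant step is evaluating $\Delta\tilde L$ by \eqref{delta_L}. It requires recomputing $T(\cdot)$ for the $|\mathcal{I}_r|=2K-1$ affected block pairs, at old and new values, for both $g_0$ and $g_1$. Each $T$ in \eqref{EqDefTSimpler} is a double sum over $\alpha\in A$ and $\beta\le\alpha$, which has at most $|A|^2$ terms. Each term needs a derivative $D^\alpha g(y_s,y_t)$, a binomial coefficient and a monomial in the centers. For each block pair these can be tabulated once in $O(|A|)$ or $O(|A|\kappa)$ time: derivatives are evaluated once per $\alpha$, and powers of the centers once per exponent. That table is dominated by the $|A|^2$ double sum. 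Hence $\Delta\tilde L$ costs $O(K|A|^2)$ per call, and $O(nK|A|^2)$ over the sweep.

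Adding the two contributions gives $O(|E|\,|A|+n|A|+nK|A|+nK|A|^2)=O(|E|\,|A|+nK|A|^2)$, which is the claimed bound. The main obstacle is the accounting inside $T$: one must argue that derivatives and center monomials are evaluated as $O(1)$ primitive operations per multi-index, as in the paper's footnote convention, rather than contributing extra factors. The step I would check most carefully is that the derivatives $D^\alpha g$ at the new centers cost $O(|A|)$ per block pair and thus do not exceed the $|A|^2$ budget. A secondary point is that $\Delta\tau_k$ must be precomputed per call, so that it is not recomputed for each neighbour $j$. Without that, UpdateQ would pick up an extra factor of $\kappa$.
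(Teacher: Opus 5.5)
Your proposal is correct and follows essentially the same route as the paper. You bound UpdateQ by $O(|A|\deg(k))$ per node, which sums to $O(|E|\,|A|)$; you bound UpdateM by its $O(K|A|)$ block-moment updates plus the $O(K|A|^{2})$ recomputation of the $2K-1$ affected $T$-terms in $\Delta\tilde{L}$; and you sum over $k\in[n]$. Your extra bookkeeping (precomputing $\Delta\tau_k$ once per call, tabulating derivatives and center monomials per block pair, and tracking the lower-order $O(n|A|)$ and $O(nK|A|)$ terms) makes explicit what the paper's proof takes for granted under its step-counting convention.
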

\begin{proof}

The proof amounts to adding up the complexity estimates in Sections \ref{SecEffLLChangeCalc} and \ref{EqUpdateRulesAndComplexity}. Note that a single inner loop of Algorithm \ref{alg_MCMC_full} requires one call to Algorithm \ref{alg_MCMC} for every $k \in [n]$. Algorithm \ref{alg_MCMC} then calls Algorithms \ref{alg_update_Q} and \ref{alg_update_M}.

Inspecting Section \ref{EqUpdateRulesAndComplexity}, the running time of Algorithm \ref{alg_update_Q} is dominated by calls to the update rule in Equation \eqref{delta_Q_once}. For each fixed $k \in [n]$ these have a total cost of $O(|A| deg(k))$, and thus summing across $k \in [n]$ have a total cost of $O(|A| \, |E|)$.

Inspecting Section \ref{EqUpdateRulesAndComplexity} again, Algorithm \ref{alg_update_M} has three steps that could potentially dominate the running time: calls to Equation \eqref{delta_case_1}, calls to Equation \eqref{delta_case_3}, and calls to Equation \eqref{delta_L}. The remaining update rules are all update rules for moments associated with the ``complete" graph, which are much faster than the update rules for moments associated only with ``observed" edges.

Taking these in turn, Equation \eqref{delta_case_1} is called once per block $t$ and per $\alpha \in A$. Thus it has a running time of $O(K \, |A|)$ for each $k \in [n]$, and summing across $k \in [n]$ a total cost of $O(n K \, |A|)$. Equation \eqref{delta_case_3} is called once per $\alpha \in A$, so it has a running time of $O(|A|)$ for each $k \in [n]$, and summing across $k \in [n]$ a total cost of $O(n \, |A|)$. Finally, as computed at the end of Section \ref{SecEffLLChangeCalc}, evaluation of Equation \eqref{delta_L} has a running time of $O(K \, |A|^{2})$ for each $k \in [n]$, and summing across $k \in [n]$ a total cost of $O(n K \, |A|^{2})$. 

Combining all of the terms in the previous two paragraphs, the total running time is $O(|E| \, |A|+n K \, |A|^{2})$ as desired. 

\end{proof}

Next, we give analogous results for the likelihood used in Algorithm \ref{alg_MCMC2}:

\begin{theorem} \label{ThemAmRunning2}
If $A=A_1$, the cost of running the innermost for loop of Algorithm \ref{alg_MCMC_full2} is $O(n K)$.
\end{theorem}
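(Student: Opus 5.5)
The plan is to mirror the proof of Theorem \ref{ThemAmRunning}, adding up the per-step complexity estimates. We now use the first-order rules from the faster-algorithm subsection of Section \ref{EqUpdateRulesAndComplexity} in place of the general $Q$-based rules. One pass of the innermost loop of Algorithm \ref{alg_MCMC_full2} makes exactly one call to Algorithm \ref{alg_MCMC2} for each $k\in[n]$. Each such call makes one call to Algorithm \ref{alg_update_M_Faster}, draws one proposal from $q_{emb,k}$ and one uniform variable, and performs one accept/reject comparison. The proposal draw, the uniform draw, the prior ratio $\pi_{\Omega}(\tau^*_k)/\pi_{\Omega}(\tau_k)$ and the comparison cost $O(1)$, under the same convention as Theorem \ref{ThemAmRunning} that sampling from the base kernel is negligible. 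It therefore suffices to show that one call to Algorithm \ref{alg_update_M_Faster} costs $O(K)$. Summing over $k\in[n]$ then gives $O(nK)$.

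Inside Algorithm \ref{alg_update_M_Faster}, I will bound the cost step by step.
\begin{enumerate}
\item Updating the center $y_{a(k)}$ by \eqref{update_y} costs $O(1)$, given the stored block sizes $n_s$.
\item For each $t\neq a(k)$ and each of the four first-order $\alpha$, the observed-edge moment changes \eqref{EqFastMDelta} use only the precomputed scalar $E_{k,t}$ and $\Delta\tau_k(\delta,\cdot,\cdot)$. The symmetric orientation follows from \eqref{EqMSymmRule}. Each costs $O(1)$, for $O(K)$ in total.
\item The complete-graph changes \eqref{EqFastMDelta_Total} use only $n_t$ and cost $O(1)$ each, again $O(K)$ in total.
\item The within-block moments cost $O(1)$. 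For \eqref{delta_case_3_total} specialized to $|\alpha|\le 1$, the needed $Q^{(c)}_{s,\gamma}$ with $|\gamma|\le1$ are just $n_s$ and the block coordinate sums $S_{s,r}=n_s y_s(r)$. These are available from the stored centers, so no $Q$ structure is required. The term $\Delta\tau_k(\delta,\alpha_1+\alpha_3,\alpha_2+\alpha_4)$ has degree at most one here, so it is cheap.
\item The change $\Delta\tilde L$ by \eqref{delta_L} sums over $|\mathcal I_{r}|=2K-1=O(K)$ block pairs. Each $T$ term costs $O(|A_1|^2)=O(1)$, since $|A_1|=5$, given $D^{\alpha}g_\ell$ at the centers.
\end{enumerate}

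The one point needing care, and the likely obstacle, is confirming that every quantity consumed in these updates is either stored or $O(1)$-computable. The potentially costly ones are the within-block complete moment and the derivative evaluations $D^{\alpha}g_{\ell}(y_s,y_t)$ for $|\alpha|\le1$. Under Assumption \ref{assum2} these derivatives are fixed closed-form functions of $\theta$ and the two centers, so each is one $O(1)$ evaluation. The within-block term was handled via $n_s$ and $S_{s,r}$ as described above. No step touches $\deg(k)$, because the edge information enters only through the precomputed $E_{k,t}$. This is precisely why the $|E|$ term of Theorem \ref{ThemAmRunning} disappears.

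Finally, I will note that the preprocessing of $\{E_{i,t}\}$ and the call to Algorithm \ref{alg_pre_M} lie outside the innermost loop and are excluded from the statement. Combining the pieces gives $O(1)+O(K)$ per node, hence $O(nK)$ for the loop.
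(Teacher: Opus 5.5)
Your proposal is correct and follows the same route as the paper's proof: one call to Algorithm \ref{alg_update_M_Faster} per node, whose cost is dominated by the $O(K)$ first-order moment updates through the precomputed $E_{k,t}$ and by the $O(K)$ block terms of \eqref{delta_L}, each costing $O(|A_1|^2)=O(1)$. Your extra bookkeeping makes explicit what the paper leaves implicit; this covers the within-block complete-graph term via $n_s$, and unit-cost evaluation of $D^{\alpha}g_{\ell}$, which is the paper's step-counting convention rather than something Assumption \ref{assum2} guarantees.
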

\begin{proof}
The proof again amounts to adding up the complexity estimates in Sections \ref{SecEffLLChangeCalc} and \ref{EqUpdateRulesAndComplexity}. Note that a single inner loop of Algorithm \ref{alg_MCMC_full2} requires one call to Algorithm \ref{alg_MCMC2} for every $k \in [n]$. Algorithm \ref{alg_MCMC2} then calls Algorithm \ref{alg_update_M_Faster}.

Inspecting Algorithm \ref{alg_update_M_Faster}, the running time is dominated by calls to Equations \eqref{EqFastMDelta} and \eqref{delta_L}.

Taking these in turn, Equation \eqref{EqFastMDelta} is called once per block $t$ for every $k \in [n]$, and thus has a cost of $O(K)$ for each fixed $k \in [n]$ and a total cost of $O(nK)$. By the same analysis as in the proof of Theorem \ref{ThemAmRunning}, calls to Equation \eqref{delta_L} have a total cost of $O(n K )$. Combining these terms, the total cost is $O( n K )$ as desired. 

\end{proof}

\section{Empirical Results }\label{sec-exp}

In this section, we present three simulation studies to analyze the error introduced by our approximation algorithm. Since this paper is concerned with numerical approximation of a popular model (rather than introducing a new model), our empirical work focuses entirely on evaluating the strength of the approximation rather than the strength of the model.

We consider a latent position model (LPM) with global parameters $\beta_0$, $\beta_1$, and $\sigma$, where the edge probability is given by the Gaussian link
\begin{equation}\label{Gaussian_link_empirical}
p(\tau_i,\tau_j; \beta_0,\beta_1,\sigma)
= \beta_0 + \beta_1 \exp\!\left(-\frac{\|\tau_i-\tau_j\|^2}{2\sigma^2}\right).
\end{equation}
The latent positions are assumed to be IID and follow a truncated Gaussian distribution on $[0,1]^2$. The proposal distributions used in the sampler are also truncated Gaussian random walks, restricted to regions where the Taylor approximation error remains below a prescribed threshold.

\subsection{Rainbow plots} \label{SubsecRainbow}

In the first study, we compare the \textit{global} structure of the posterior distribution of the LPM to the stationary measures of the Markov chains defined in Algorithms~\ref{alg_MCMC_full} and \ref{alg_MCMC_full2}. Our goal is to check that, for the purposes of visualizing a full dataset, even this ``faster" algorithm does well.

We generate synthetic LPM networks with parameters $\beta_0=0.1$, $\beta_1=0.7$, and $\sigma=0.6$. We consider networks with $N \in \{2000,30000\}$ nodes. For each network size, we run the exact Metropolis-within-Gibbs (MwG) sampler for $10{,}000$ iterations, discard the first $5{,}000$ iterations as burn-in, and retain every $20$th sample thereafter. For Algorithm~\ref{alg_MCMC_full}, we use Taylor expansion order {$\kappa=4$}. 

To visually compare the recovered latent geometry, we align the posterior mean embedding obtained from each approximate method to the ground-truth posterior mean using Procrustes transformation. The nodes are then colored according to their true latent positions, so that distortions in the recovered geometry can be visually assessed.

\begin{figure}[htbp]
    \centering
    \begin{subfigure}{0.48\linewidth}
        \centering
        \includegraphics[width=\linewidth,height=0.32\textheight,keepaspectratio]{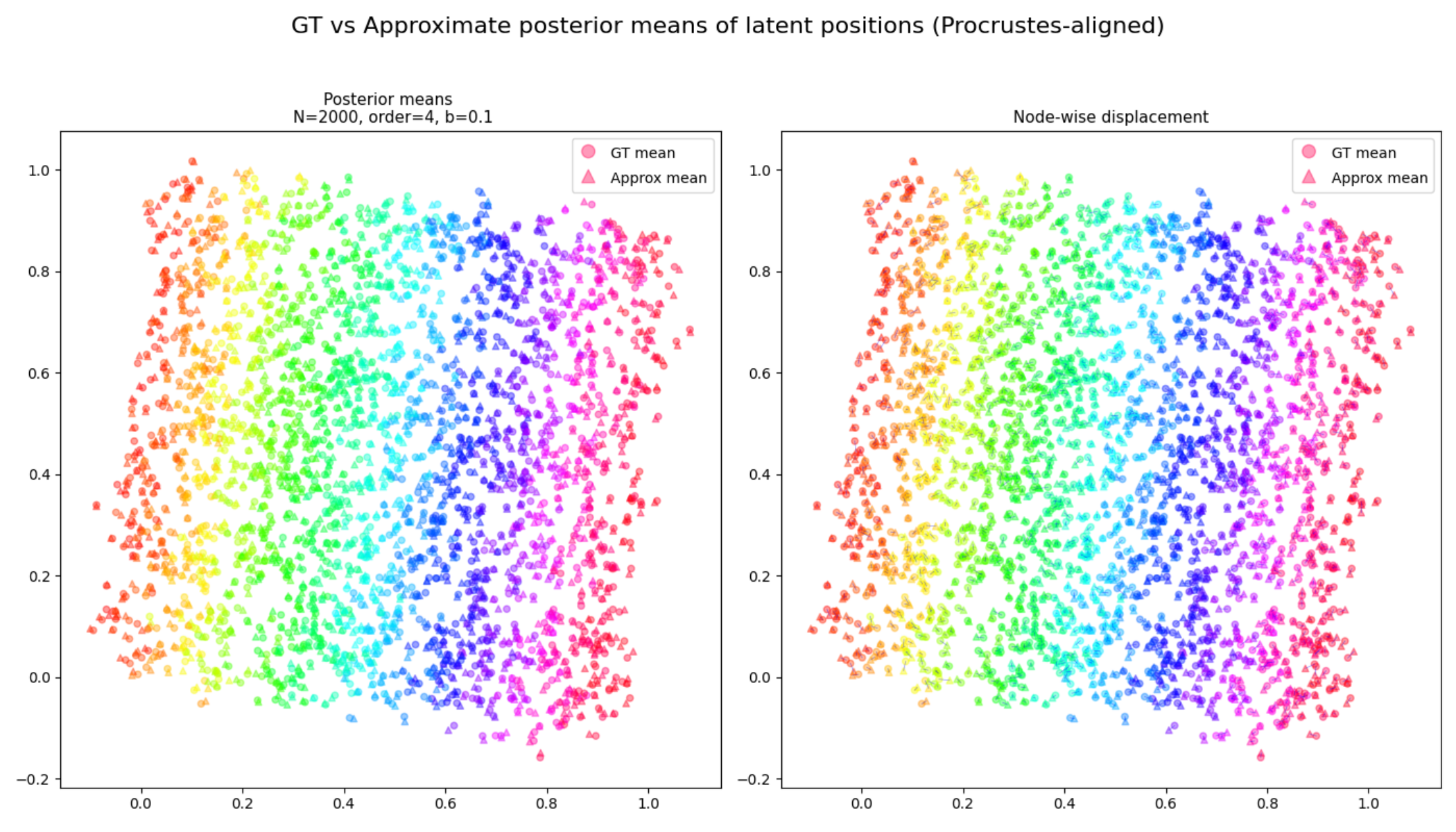}
        \caption{ Algorithm {\ref{alg_MCMC_full}} ({$\kappa=4, b=0.1$})}
        \label{fig:rainbow_order4}
    \end{subfigure}
    \hfill
    \begin{subfigure}{0.48\linewidth}
        \centering
        \includegraphics[width=\linewidth,height=0.32\textheight,keepaspectratio]{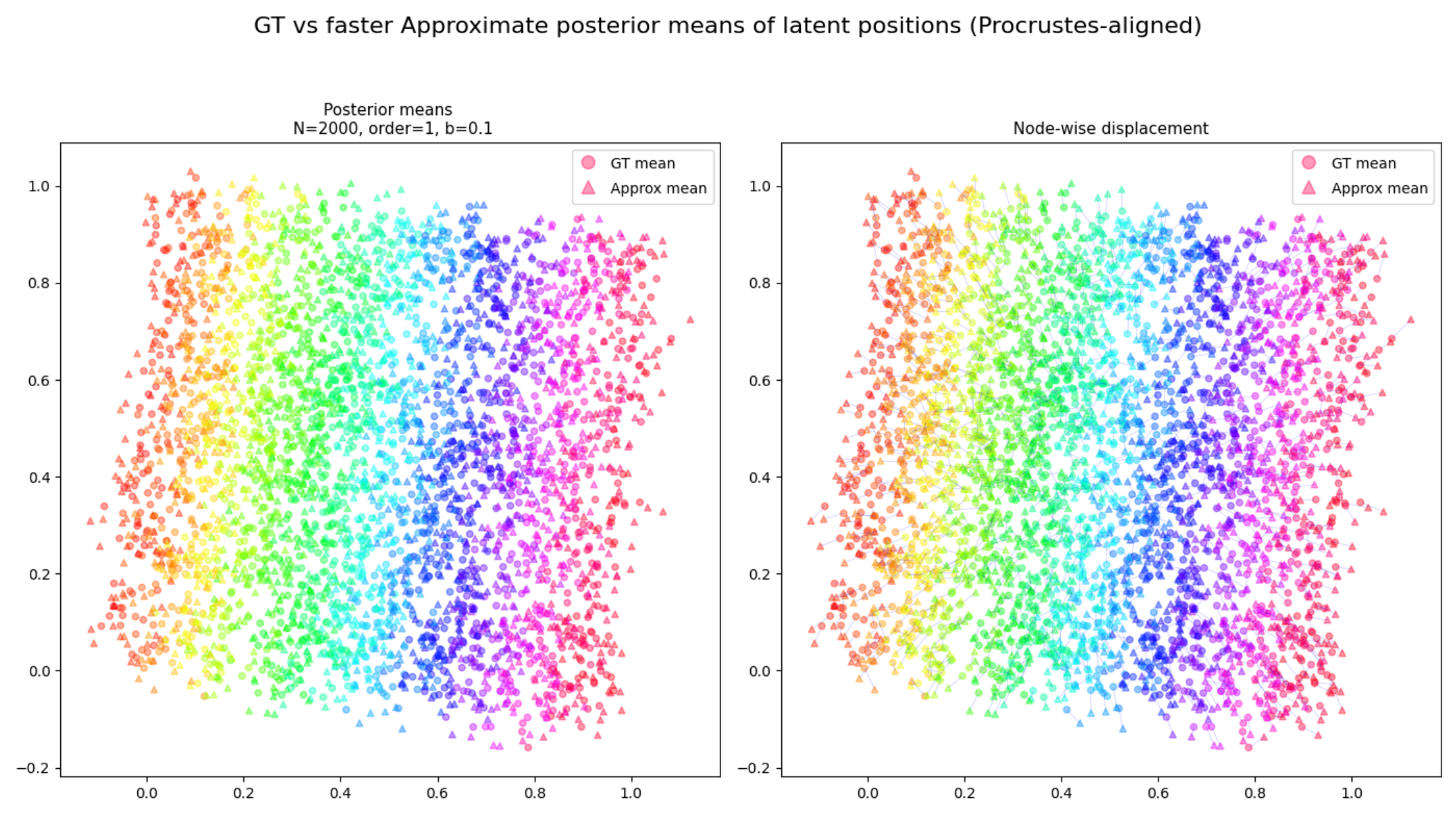}
        \caption{Algorithm {\ref{alg_MCMC_full2}} {$(\kappa=1, b=0.1$})}
        \label{fig:rainbow_order1}
    \end{subfigure}
    \caption{Rainbow plots comparing posterior mean latent positions obtained by the approximate samplers against the ground-truth MwG sampler. $N=2000$}
    \label{fig:rainbow_plots}
\end{figure}
\begin{figure}[htbp]
    \centering
    \begin{subfigure}{0.48\linewidth}
        \centering
        \includegraphics[width=\linewidth,height=0.32\textheight,keepaspectratio]{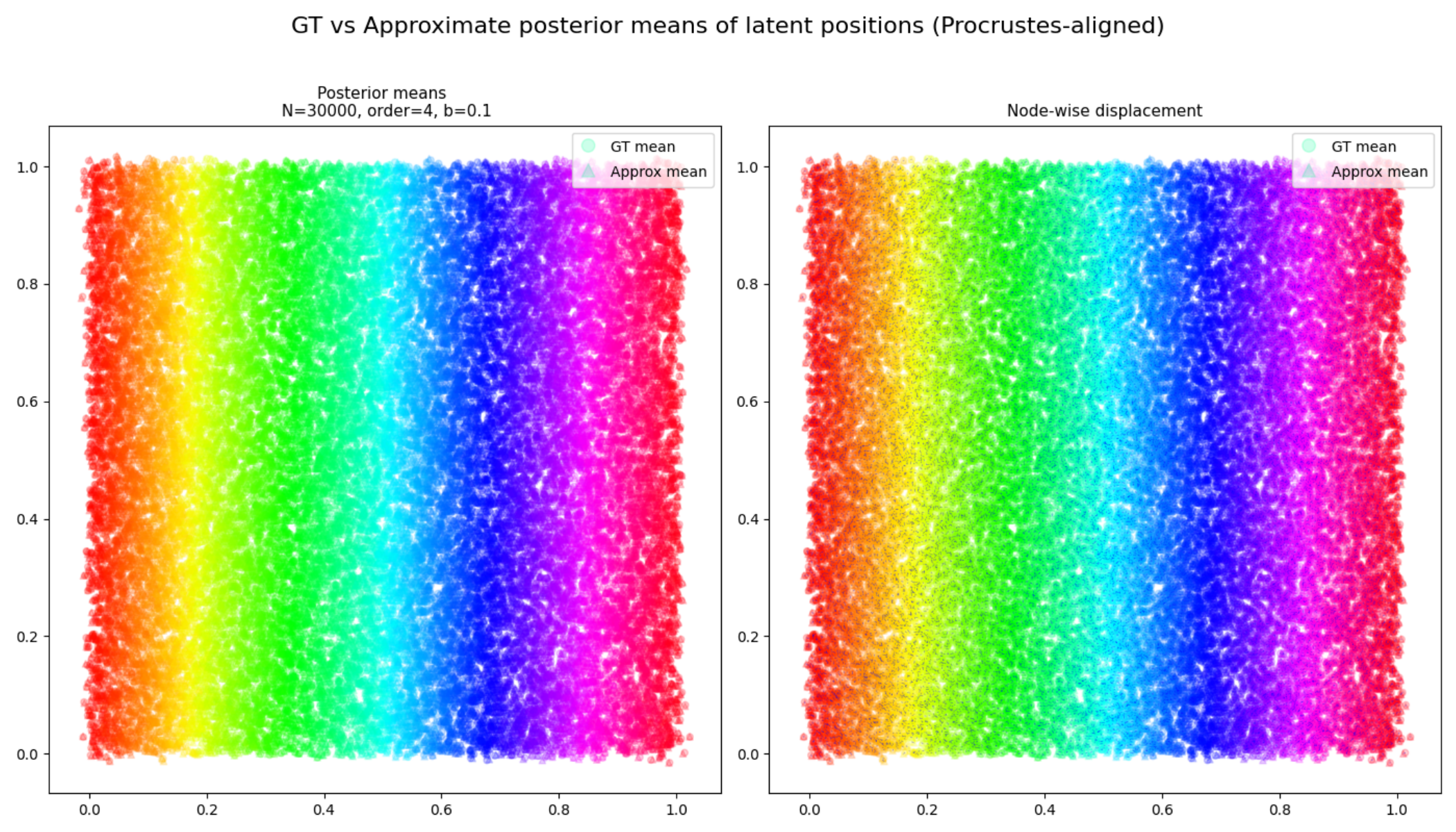}
        \caption{Algorithm {\ref{alg_MCMC_full}} ({$\kappa=4, b=0.1$})}
        \label{fig:rainbow_order4_2}
    \end{subfigure}
    \hfill
    \begin{subfigure}{0.48\linewidth}
        \centering
        \includegraphics[width=\linewidth,height=0.32\textheight,keepaspectratio]{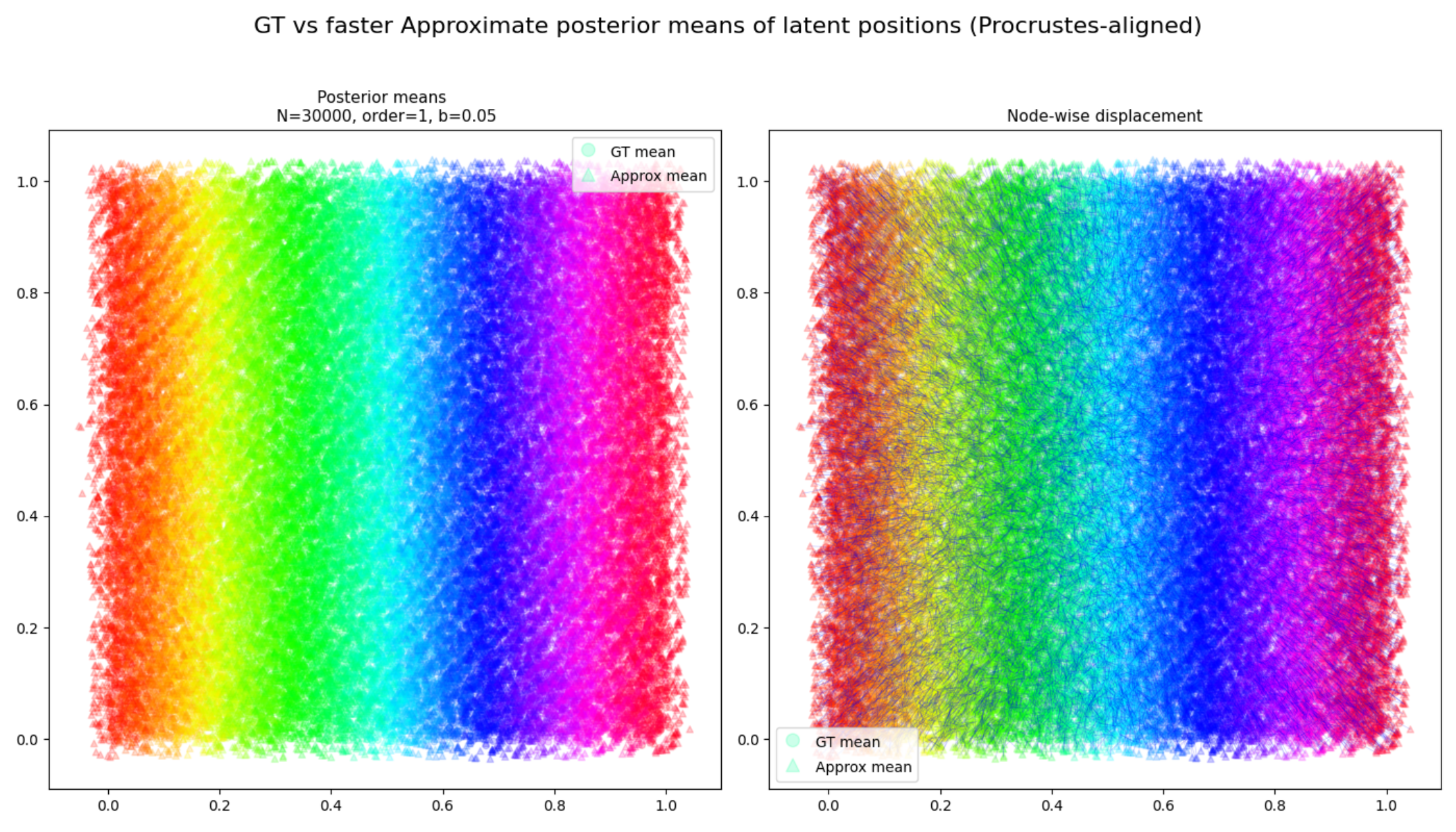}
        \caption{Algorithm {\ref{alg_MCMC_full2}} ({$\kappa=1, b=0.05$})}
        \label{fig:rainbow_order1_2}
    \end{subfigure}
    \caption{Rainbow plots comparing posterior mean latent positions obtained by the approximate samplers against the ground-truth MwG sampler. $N=30000$}
    \label{fig:rainbow_plots_2}
\end{figure}
Figures~\ref{fig:rainbow_plots} and~\ref{fig:rainbow_plots_2} compare the posterior mean latent positions obtained from the approximate samplers with those obtained from the ground-truth MwG sampler, after Procrustes alignment. The rainbow coloring provides a qualitative diagnostic of whether the global latent geometry is preserved.

For both N=2000 and N=30000, the approximate posterior means recover the main large-scale structure of the latent space. In particular, Algorithm~\ref{alg_MCMC_full} with $\kappa=4$ produces posterior mean embeddings that are visually close to the ground-truth MwG embedding. The faster first-order method, Algorithm~\ref{alg_MCMC_full2}, also captures the broad geometry, but exhibits more visible distortion, especially in the larger graph.

\begin{figure}[htbp]
    \centering
    \includegraphics[width=0.9\linewidth,height=1.5\textheight,keepaspectratio]{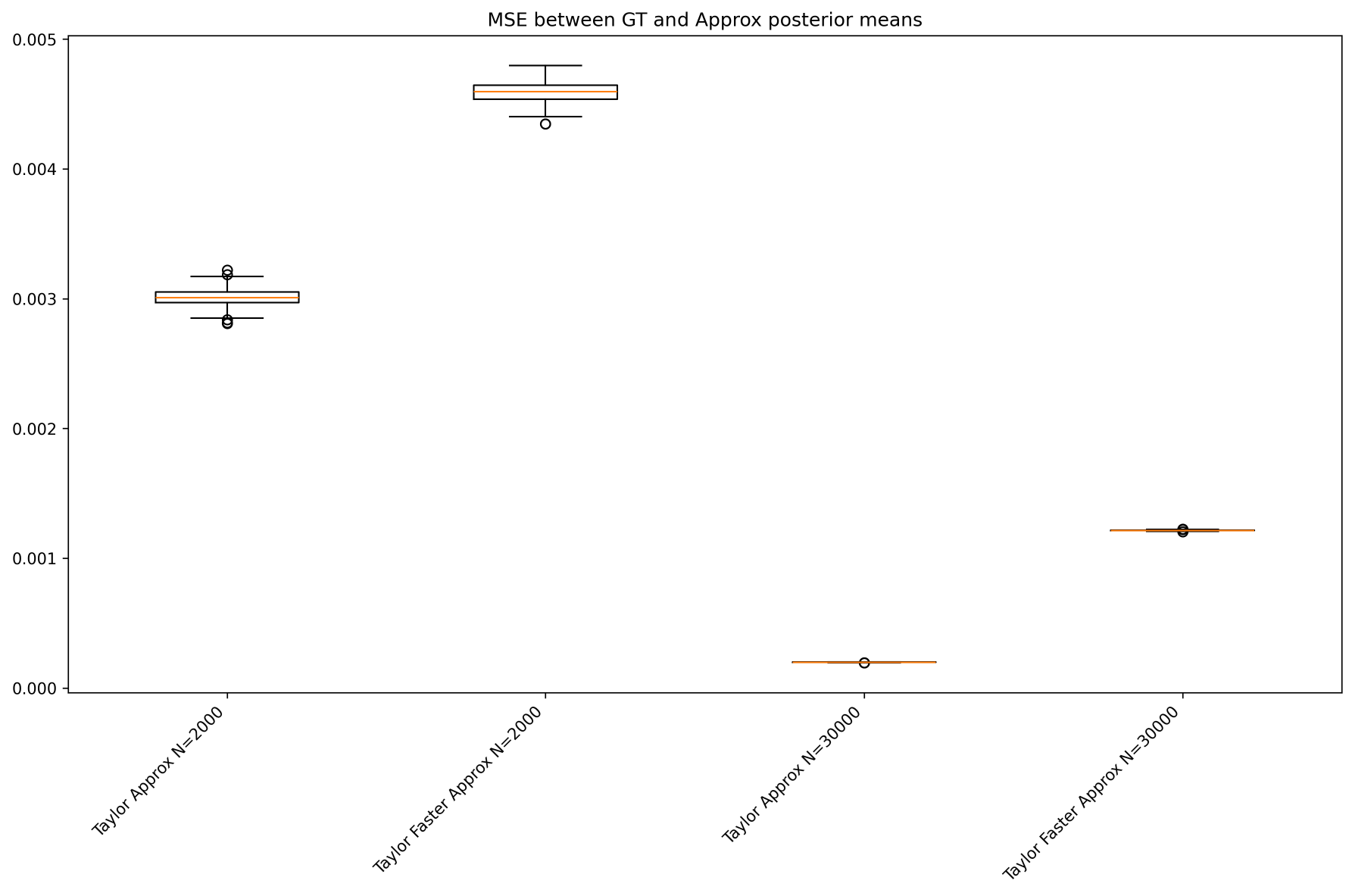}
    \caption{Mean squared error compared with the ground-truth posterior means.}
    \label{fig:MSE}
\end{figure}
The above visual conclusion is supported by the mean-squared-error comparison in Figure~\ref{fig:MSE}. The MSEs remain small across all settings, indicating that both approximate methods provide reasonable posterior mean estimates. However, the error for the faster first-order approximation is noticeably larger than that of the fourth-order method. This suggests that, although the first-order method is computationally attractive, its approximation error is no longer negligible relative to the posterior uncertainty in some settings.

\subsection{Single-node posterior contour}

In the second study, we fix the latent positions of all but one node and investigate the conditional distribution of the one unfrozen node. It is common to look at the per-point uncertainty in dimension-reduction algorithm, so this conditional distribution is important in its own right. It also allows us to more easily visualize errors in the likelihood, since high-dimensional functions are generally quite hard to visualize. \par
\begin{figure}[hbtp]
    \centering
    \captionsetup[subfigure]{font=small, labelfont=bf}

    \begin{subfigure}[b]{0.31\textwidth}
        \centering
        \includegraphics[width=\linewidth]{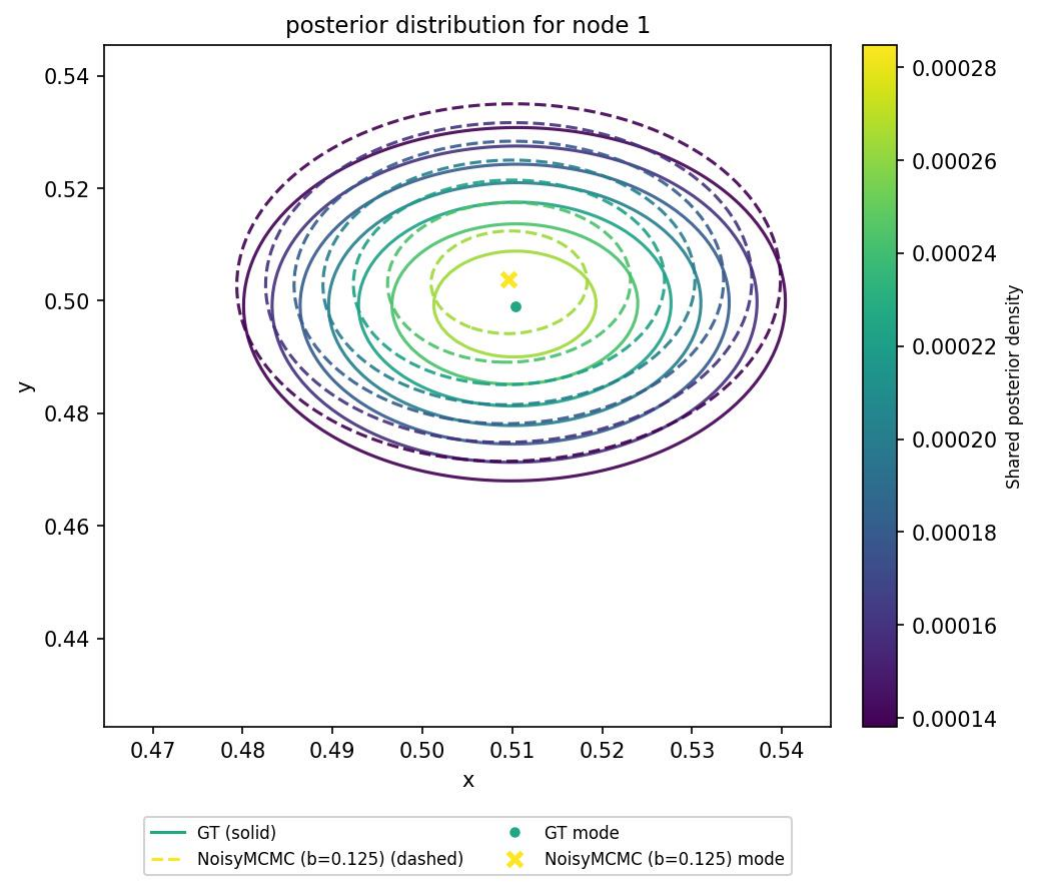}
        \caption{$b=0.125$, Noisy MCMC}
        \label{fig:single_node_b0125_friel}
    \end{subfigure}
    \hfill
    \begin{subfigure}[b]{0.31\textwidth}
        \centering
        \includegraphics[width=\linewidth]{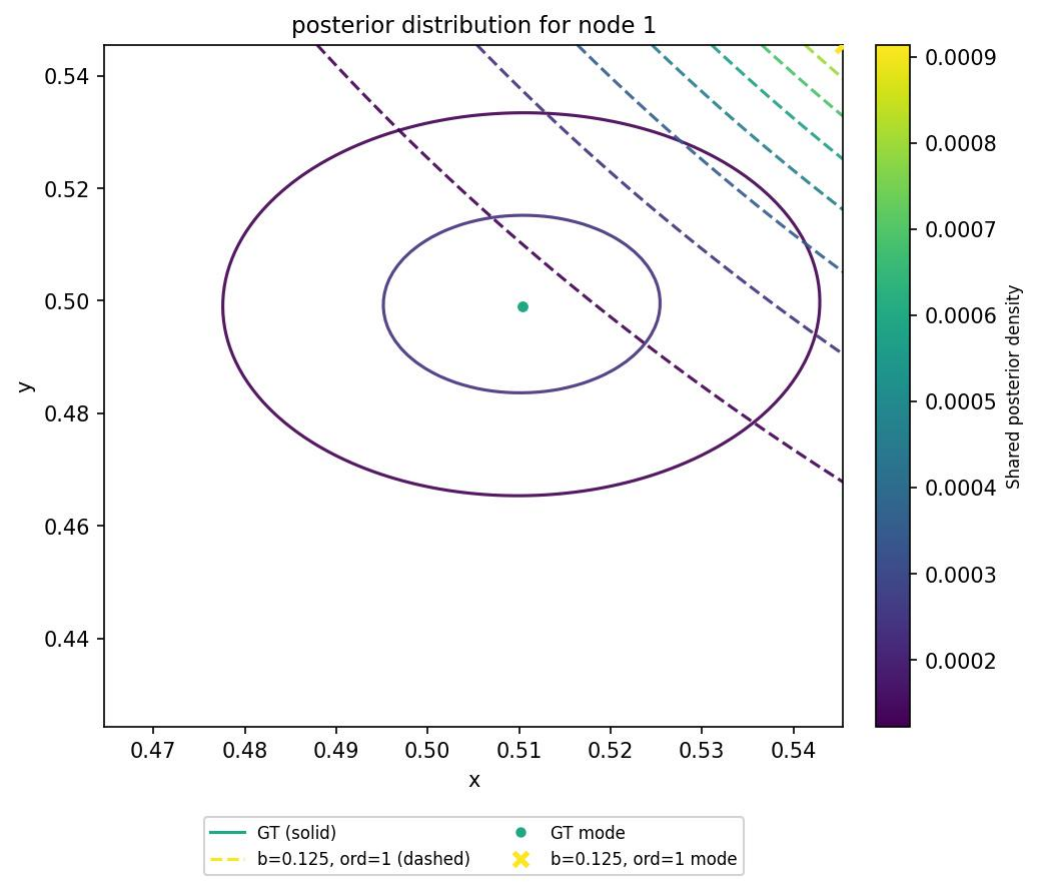}
        \caption{$b=0.125$, Algorithm {\ref{alg_MCMC_full2}}, {$\kappa=1$}}
        \label{fig:single_node_b0125_order1}
    \end{subfigure}
    \hfill
    \begin{subfigure}[b]{0.31\textwidth}
        \centering
        \includegraphics[width=\linewidth]{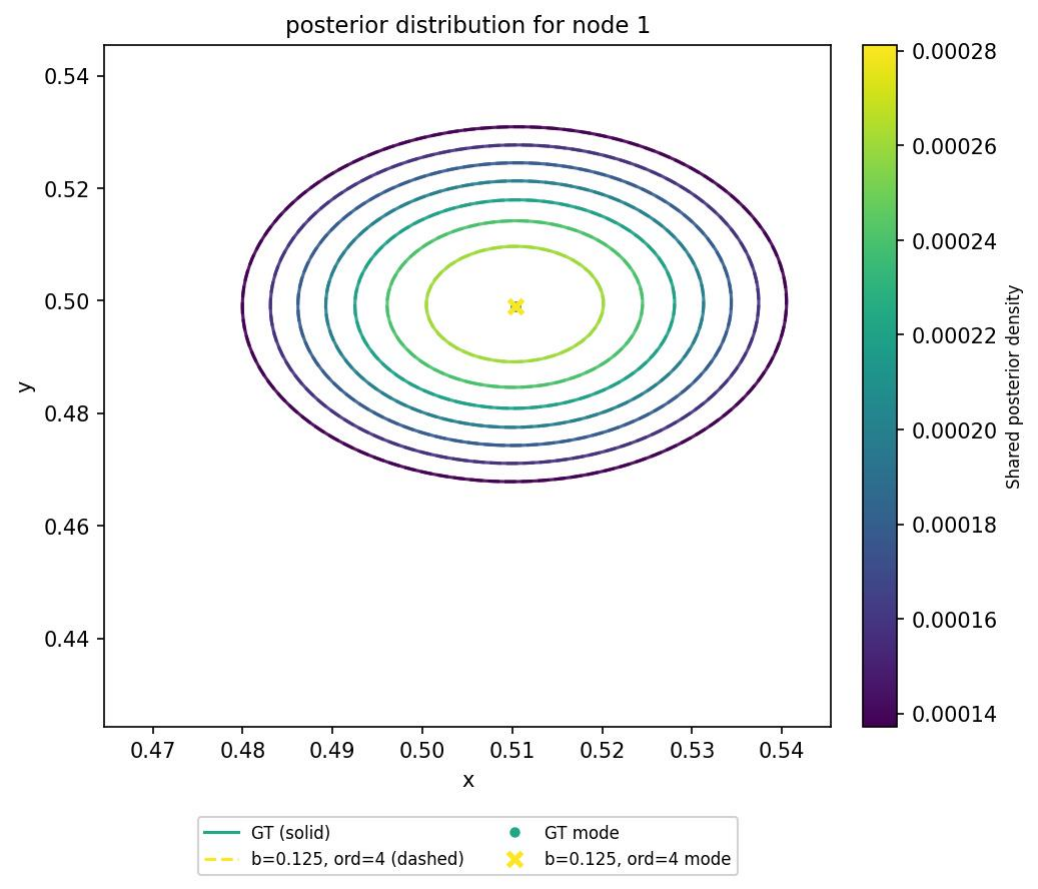}
        \caption{$b=0.125$, Algorithm {\ref{alg_MCMC_full}}, {$\kappa=4$}}
        \label{fig:single_node_b0125_order4}
    \end{subfigure}

    \vspace{0.6em}

    \begin{subfigure}[b]{0.31\textwidth}
        \centering
        \includegraphics[width=\linewidth]{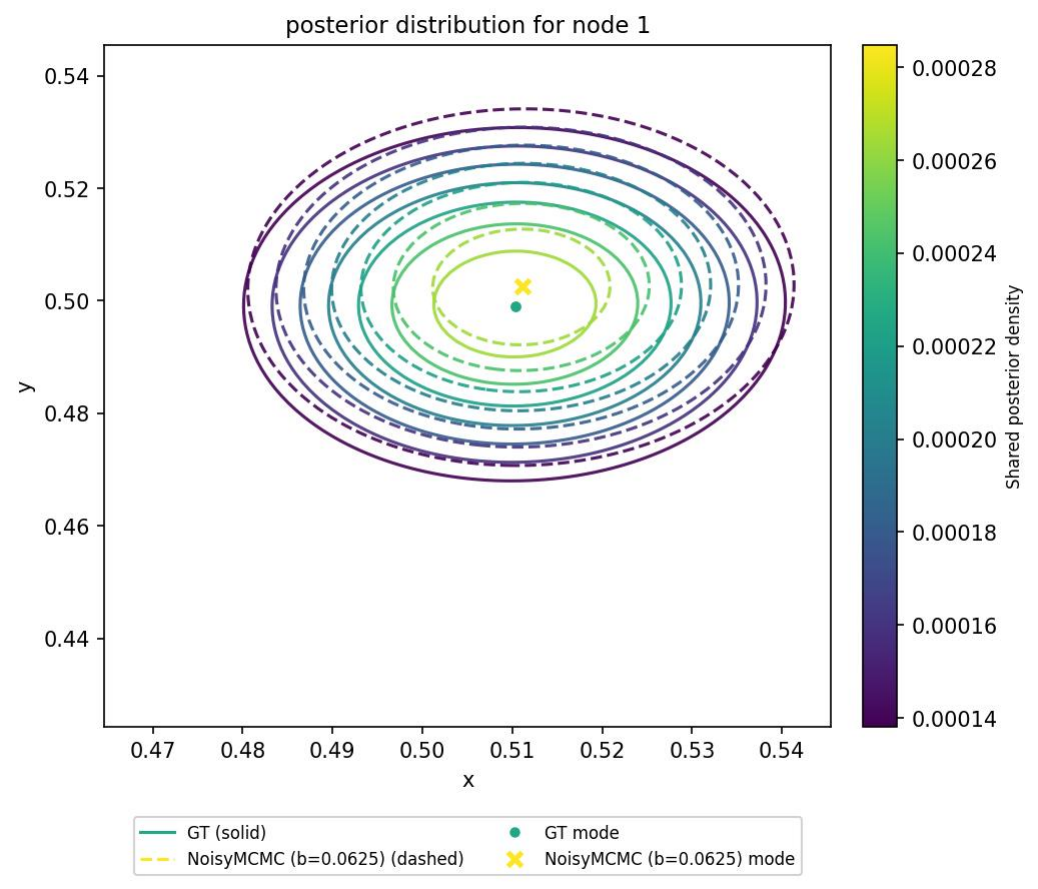}
        \caption{$b=0.0625$, Noisy MCMC}
        \label{fig:single_node_b00625_friel}
    \end{subfigure}
    \hfill
    \begin{subfigure}[b]{0.31\textwidth}
        \centering
        \includegraphics[width=\linewidth]{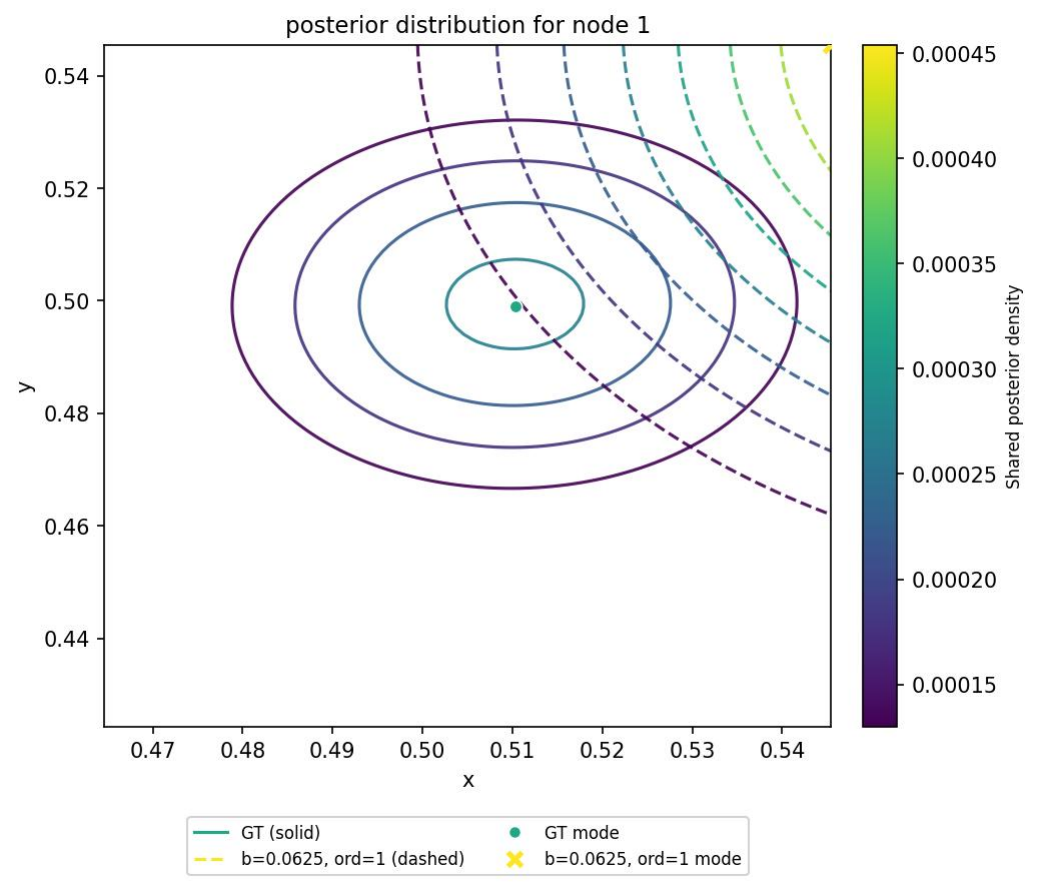}
        \caption{$b=0.0625$, Algorithm {\ref{alg_MCMC_full2}}, {$\kappa=1$}}
        \label{fig:single_node_b00625_order1}
    \end{subfigure}
    \hfill
    \begin{subfigure}[b]{0.31\textwidth}
        \centering
        \includegraphics[width=\linewidth]{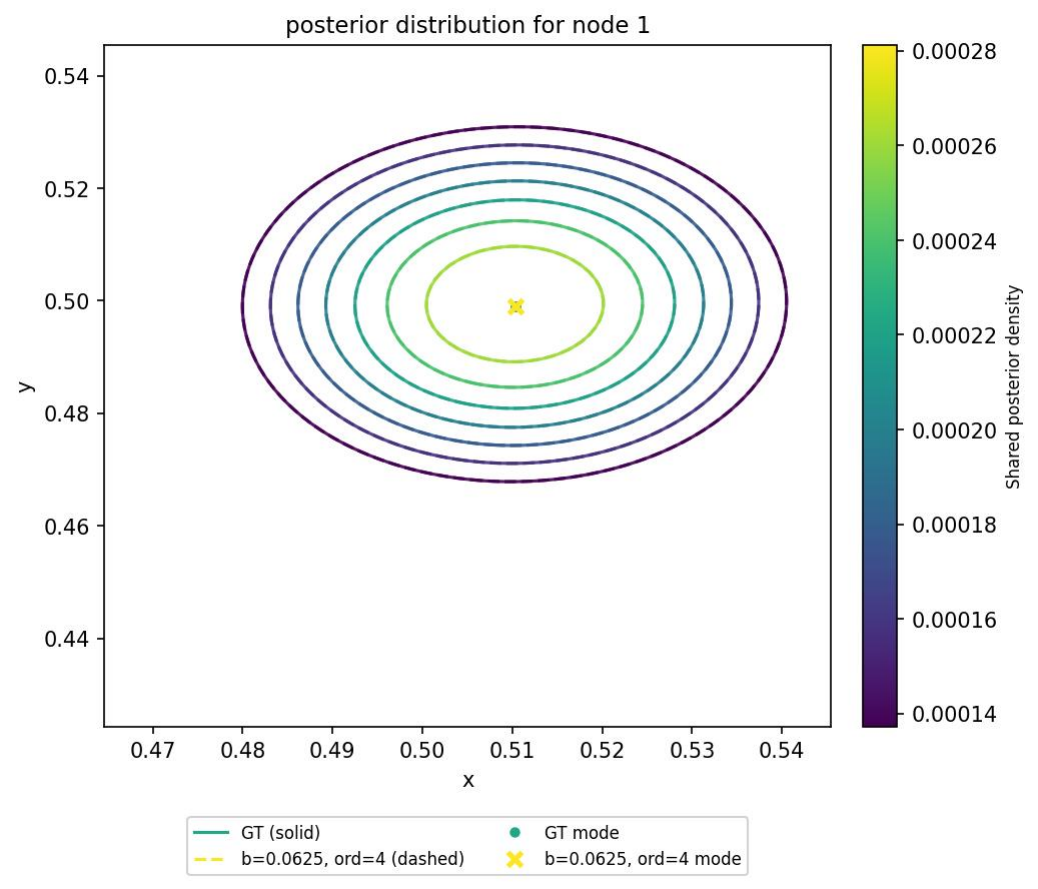}
        \caption{$b=0.0625$, Algorithm {\ref{alg_MCMC_full}}, {$\kappa=4$}}
        \label{fig:single_node_b00625_order4}
    \end{subfigure}

    \vspace{0.6em}

    \begin{subfigure}[b]{0.31\textwidth}
        \centering
        \includegraphics[width=\linewidth]{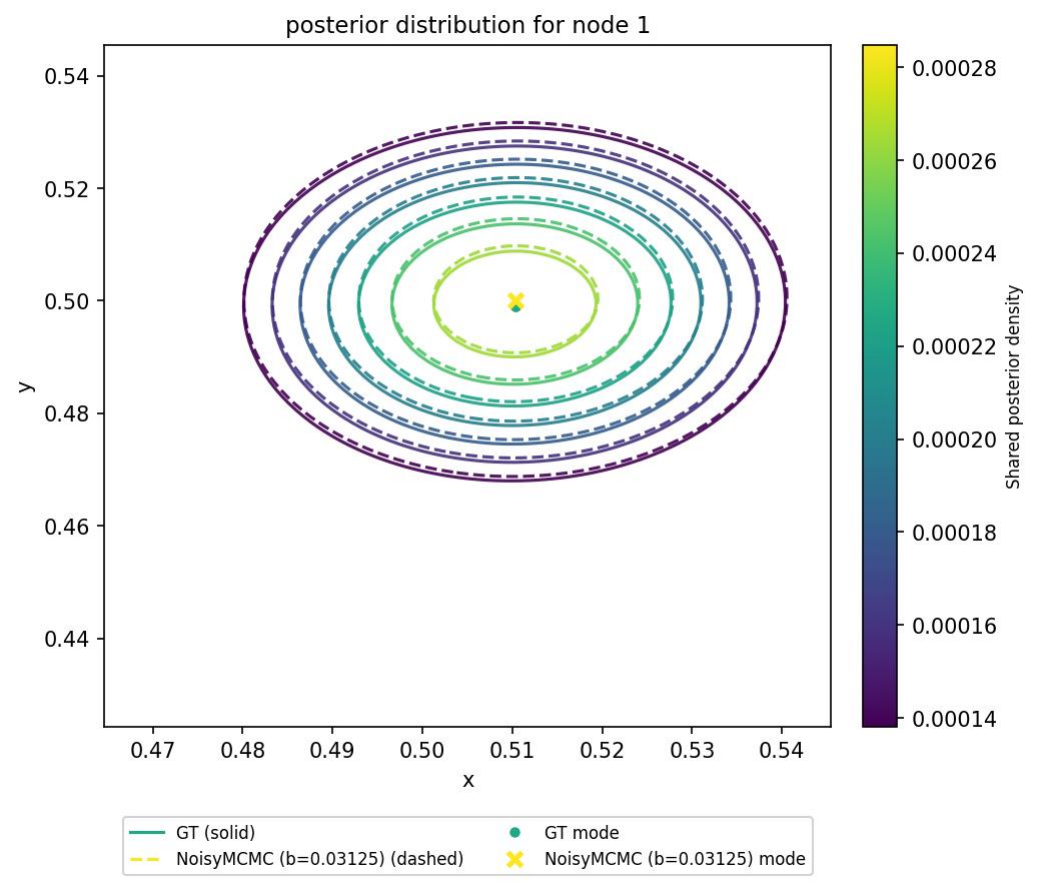}
        \caption{$b=0.03125$, Noisy MCMC}
        \label{fig:single_node_b003125_friel}
    \end{subfigure}
    \hfill
    \begin{subfigure}[b]{0.31\textwidth}
        \centering
        \includegraphics[width=\linewidth]{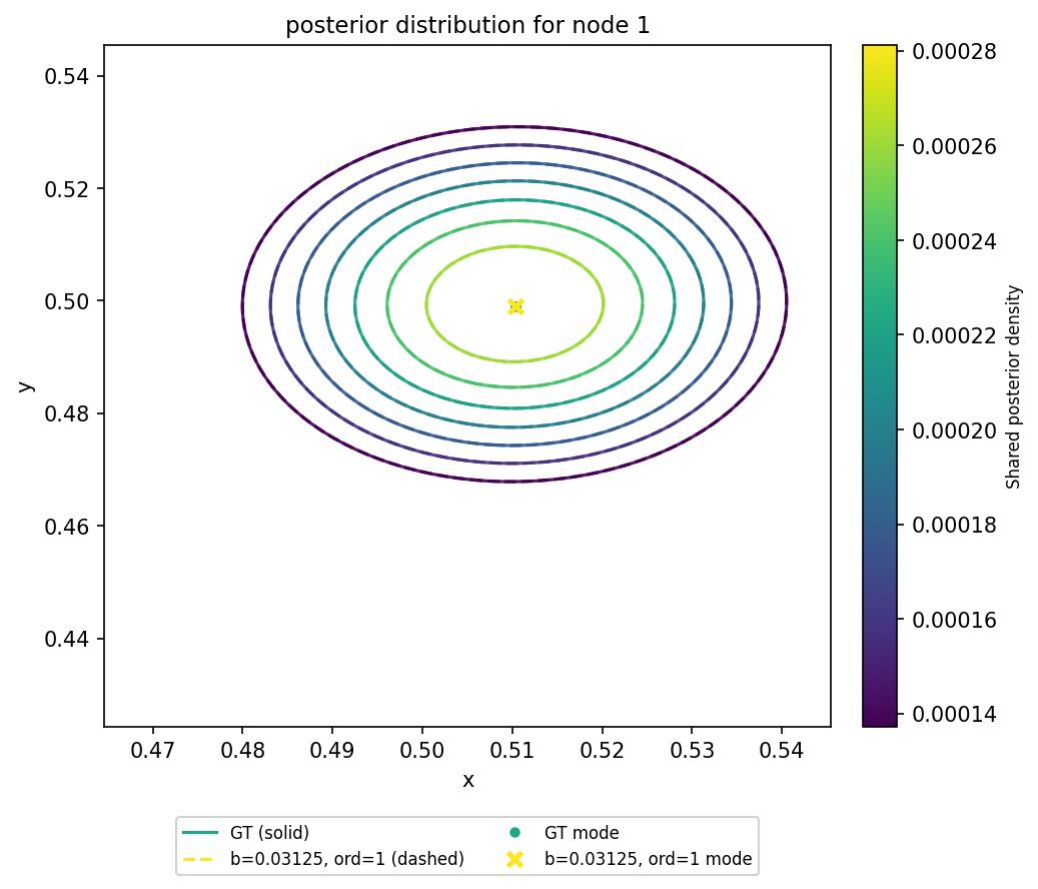}
        \caption{$b=0.03125$, Algorithm {\ref{alg_MCMC_full2}}, {$\kappa=1$}}
        \label{fig:single_node_b003125_order1}
    \end{subfigure}
    \hfill
    \begin{subfigure}[b]{0.31\textwidth}
        \centering
        \includegraphics[width=\linewidth]{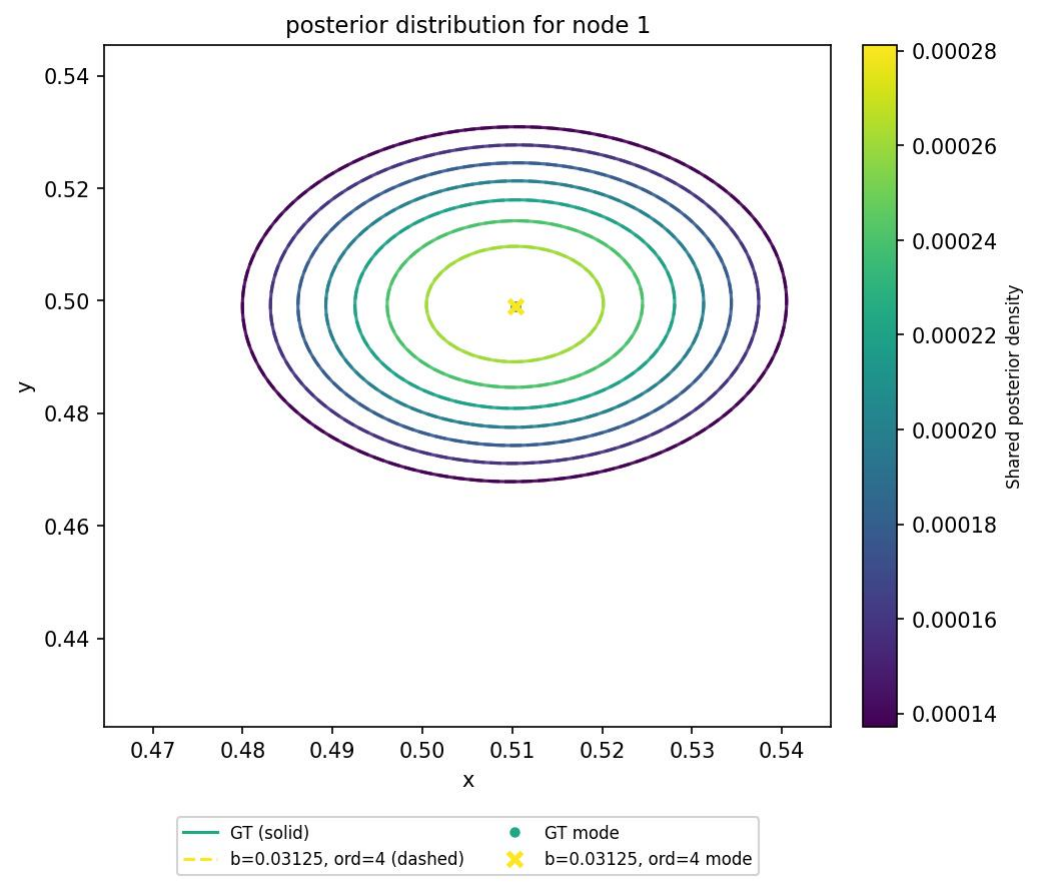}
        \caption{$b=0.03125$, Algorithm {\ref{alg_MCMC_full}}, {$\kappa=4$}}
        \label{fig:single_node_b003125_order4}
    \end{subfigure}

    \caption{Posterior distribution of node  under different block lengths and approximation methods}
    \label{fig:single_node_all}
\end{figure}
We use the same parameter setting as in Section \ref{SubsecRainbow} and simulate a network with $N=2000$ nodes, yielding a graph density close to $50\%$. We compare three methods to the ground truth MwG: the grid-partition noisy MCMC of \cite{rastelli2018computationally}, and our moments-based approximation in Algorithm~\ref{alg_MCMC_full} and Algorithm~\ref{alg_MCMC_full2}.\par
For a meaningful comparison, we select a node whose true latent position lies close to the center of the latent space, so that its posterior is not dominated by boundary effects. 

For this selected node, we evaluate the marginal posterior on a two-dimensional grid and visualize the resulting posterior contour under different approximation settings. We consider several block lengths $b \in \{0.125, 0.0625, 0.03125\}$, and compare first-order and fourth-order moments-based approximations against the noisy MCMC benchmark.

Figure~\ref{fig:single_node_all}  shows the posterior contours for the selected node with different methods, and there is a clear dependence on both the block length $b$ and the Taylor order $\kappa$. For the noisy MCMC benchmark, the posterior contours are broadly similar to the ground truth, although some discrepancies are visible when the block length is large. For the moments-based approximation, the first-order method with $\kappa=1$ can produce substantial distortion when $b$ is large, including shifts in the posterior mode and deformation of the contour shape. As $b$ decreases, the first-order approximation improves, and for $b=0.03125$ the fitted posterior becomes much closer to the reference distribution.

In contrast, the fourth-order approximation with $\kappa=4$ is stable across all tested block lengths. Its contours align closely with the ground-truth posterior even for relatively coarse partitions. This indicates that increasing the Taylor order substantially reduces sensitivity to the block length and improves local posterior accuracy.

\subsection{Distance PDF}\label{dist_pdf}

 In the third study, we assess whether the approximate sampler preserves the distribution of pairwise distances in the latent space. This is crucial because many inferential features of latent position models depend more directly on inter-point distances than on absolute coordinates.

Using posterior samples from both the ground-truth and approximate samplers, we compute the empirical distribution of distances between node (the one closet to the center of space) and three other nodes chosen from different structural regimes: a nearby node, an intermediate-distance node, and a well-separated node. This allows us to evaluate the approximation across both local and global distance scales.
\begin{figure}[htbp]
    \centering
    \begin{subfigure}[b]{0.32\linewidth}
        \centering
        \includegraphics[width=\linewidth]{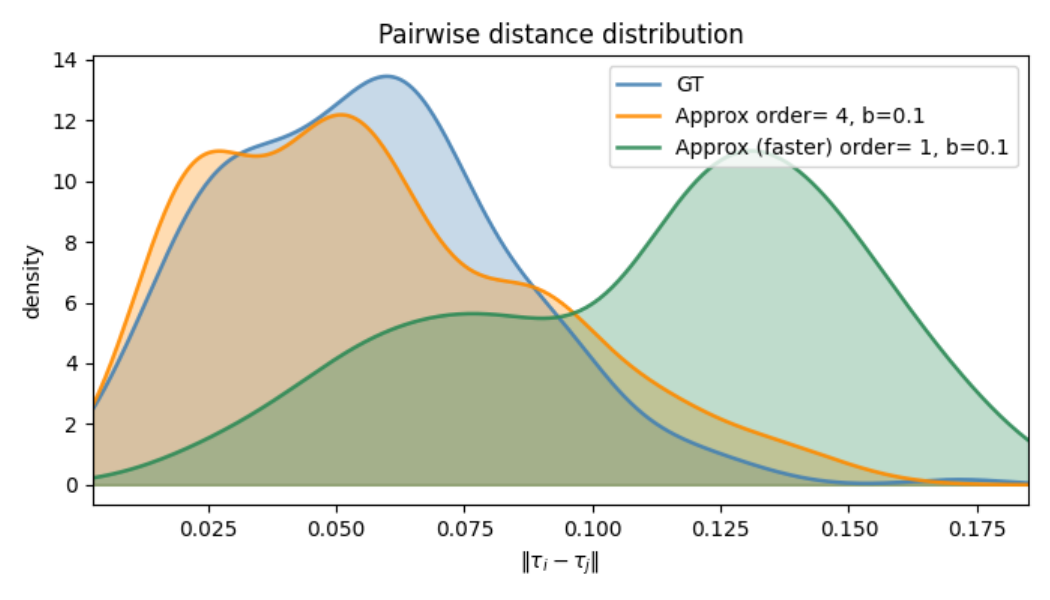}
        \caption{nearby nodes}
        \label{fig:small_graph_distribution_near}
    \end{subfigure}
    \hfill
    \begin{subfigure}[b]{0.32\linewidth}
        \centering
        \includegraphics[width=\linewidth]{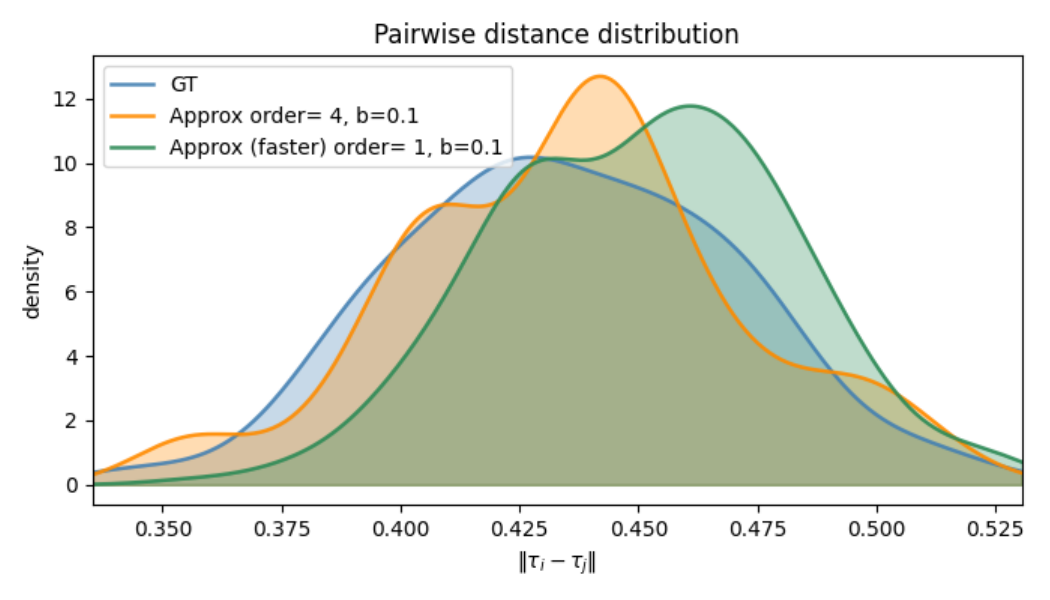}
        \caption{intermediate distance}
        \label{fig:small_graph_distribution_intermediate}
    \end{subfigure}
    \hfill
    \begin{subfigure}[b]{0.32\linewidth}
        \centering
        \includegraphics[width=\linewidth]{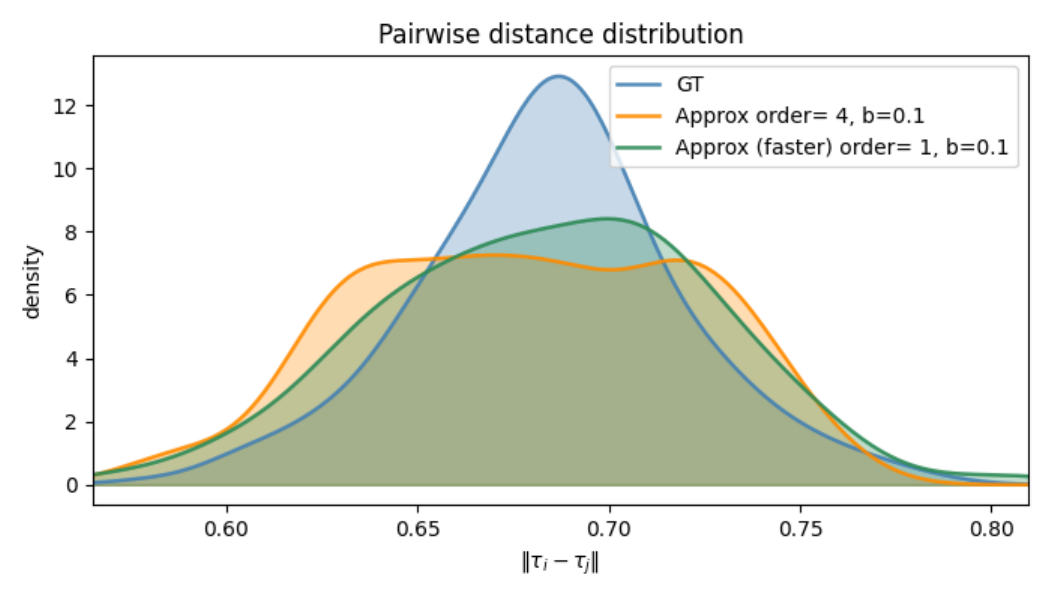}
        \caption{well-separated nodes}
        \label{fig:small_graph_distribution_far}
    \end{subfigure}
    \caption{Empirical PDFs of posterior pairwise distances for selected node pairs under the ground-truth and approximate samplers with nodes $N=2000$.}
    \label{fig:distance_pdf_small}
\end{figure}
\begin{figure}[htbp]
    \centering
    \begin{subfigure}[b]{0.32\linewidth}
        \centering
        \includegraphics[width=\linewidth]{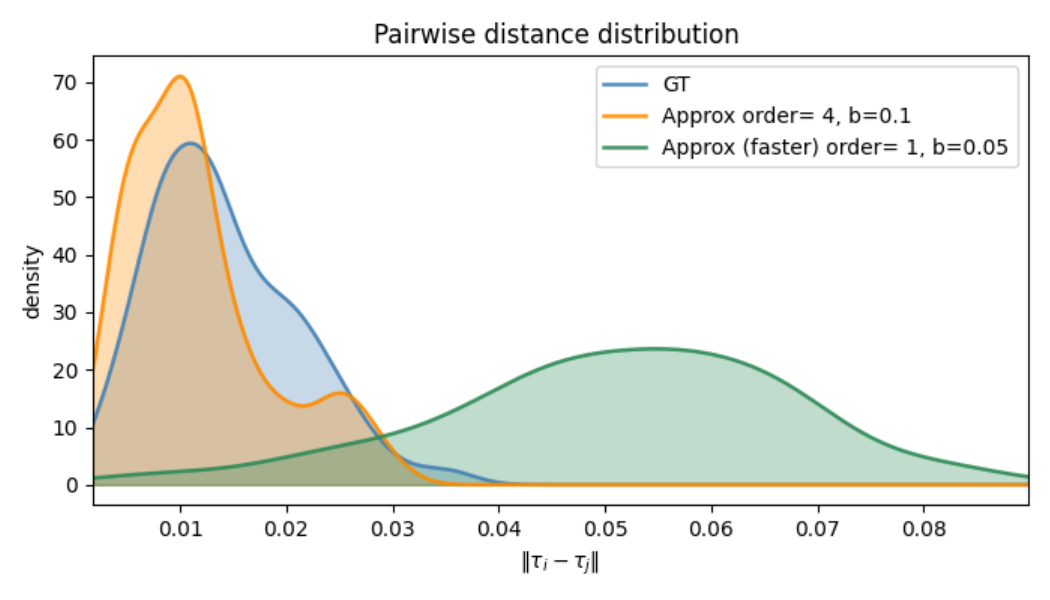}
        \caption{nearby nodes}
        \label{fig:large_graph_distribution_near}
    \end{subfigure}
    \hfill
    \begin{subfigure}[b]{0.32\linewidth}
        \centering
        \includegraphics[width=\linewidth]{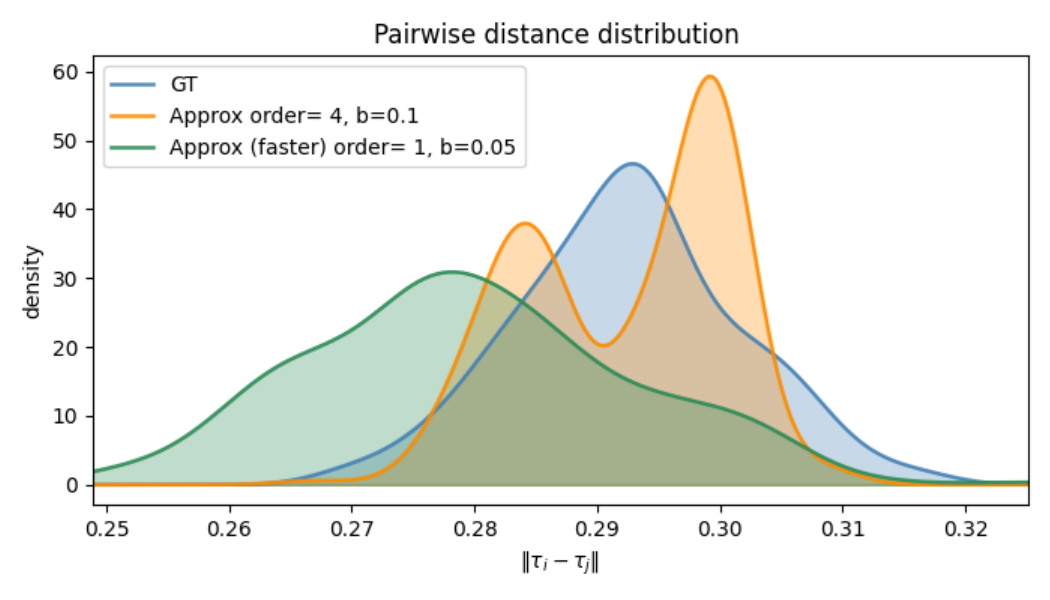}
        \caption{intermediate distance}
        \label{fig:large_graph_distribution_intermediate}
    \end{subfigure}
    \hfill
    \begin{subfigure}[b]{0.32\linewidth}
        \centering
        \includegraphics[width=\linewidth]{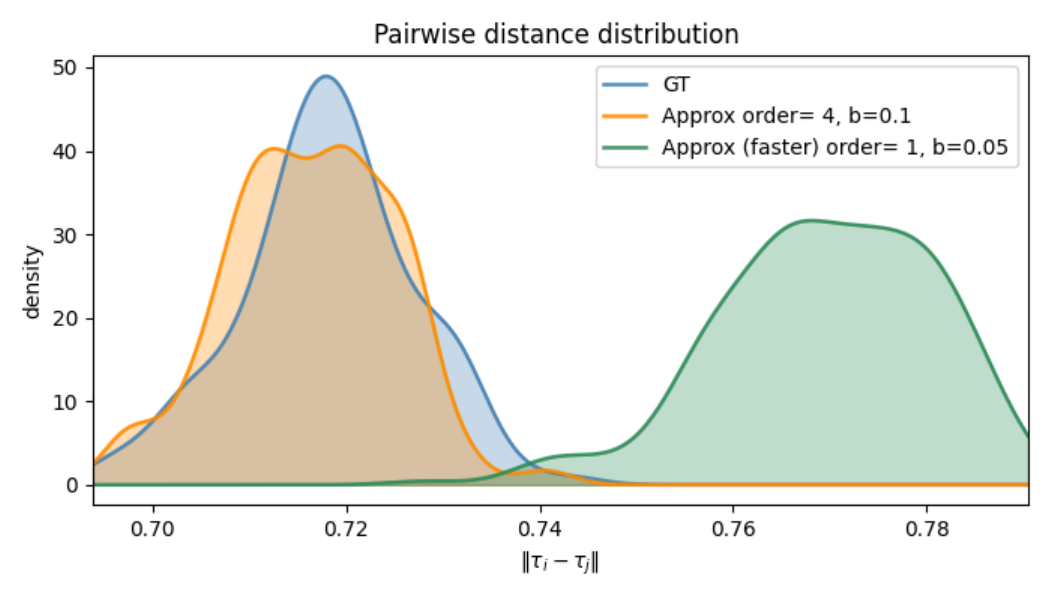}
        \caption{well-separated nodes}
        \label{fig:large_graph_distribution_far}
    \end{subfigure}
    \caption{Empirical PDFs of posterior pairwise distances for selected node pairs under the ground-truth and approximate samplers with nodes $N=30000$.}
    \label{fig:distance_pdf_large}
\end{figure}

Figures~\ref{fig:distance_pdf_small} and~\ref{fig:distance_pdf_large} evaluate whether the approximate samplers preserve posterior uncertainty in pairwise latent distances. 
Here fourth-order approximation generally matches the ground-truth pairwise-distance distributions well, both in terms of modal location and overall density shape. The agreement is especially strong for intermediate and well-separated pairs, where the posterior distance distributions are relatively stable. The first-order approximation captures the rough scale of the distances, but shows more noticeable bias and spread. These discrepancies are more pronounced in the $N=30000 $ experiment, suggesting that the first-order approximation may become less reliable for fine-scale distance uncertainty in larger graphs.
\paragraph{\textbf{Order statistics of nearest-neighbor distances}}
As a supplementary sanity check on the local neighborhood structure, we examine the order statistics of nearest-neighbor distances for the central node. We sort the latent distances from the selected node to all other nodes and compare the resulting empirical distributions between the ground-truth and approximate samplers.

\begin{figure}[htbp]
    \centering
    \begin{subfigure}[b]{0.49\linewidth}
        \centering
        \includegraphics[width=\linewidth]{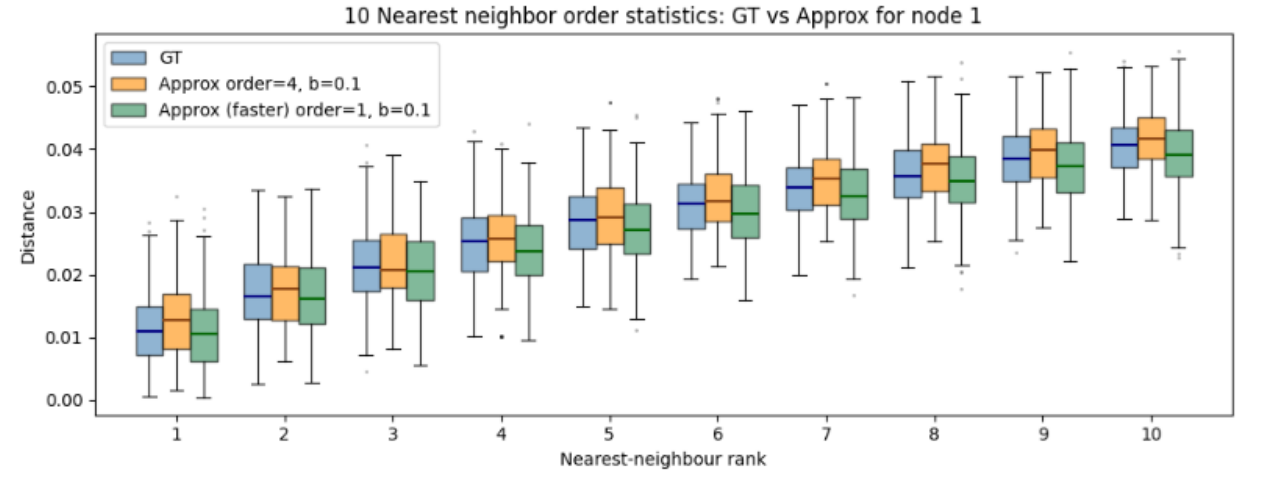}
        \caption{small graph $N=2000$}
        \label{fig:order_stat_small}
    \end{subfigure}
    \hfill
    \begin{subfigure}[b]{0.49\linewidth}
        \centering
       \includegraphics[width=\linewidth]{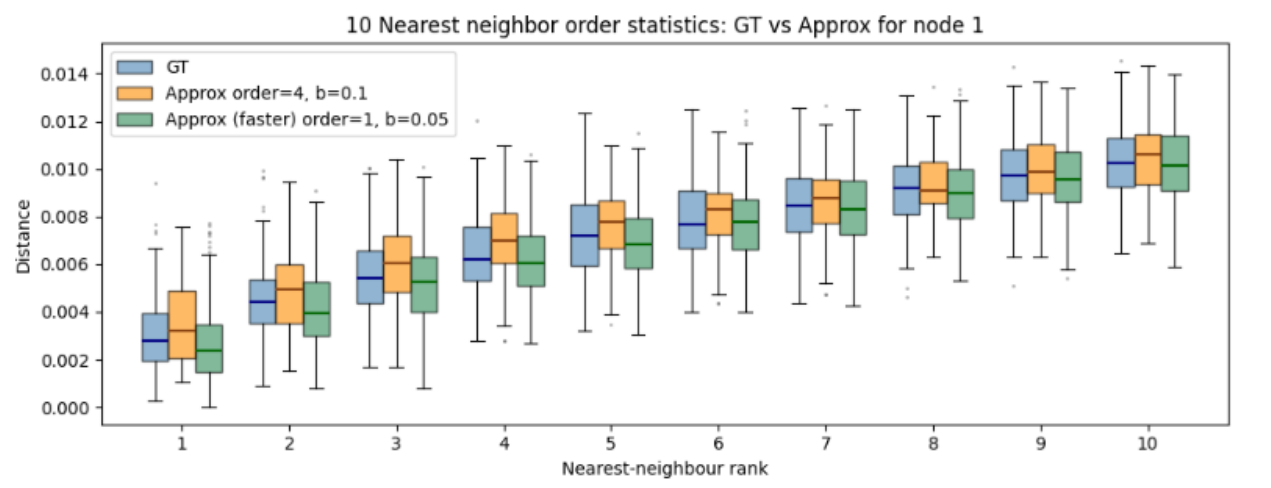}
        \caption{large graph $N=30000$}
        \label{fig:order_stat_large}
    \end{subfigure}
    
    \caption{Order statistics of nearest-neighbor distances for selected node under the ground-truth and approximate samplers}
    \label{fig:order_stat}
\end{figure}

 In Figure~\ref{fig:order_stat}, the fourth-order approximation closely follows this pattern for both $N=2000$ and $N=30000$, matching the ground truth in median, spread, and monotone rank-wise trend. This indicates that the approximation preserves the density of small pairwise distances and hence the local neighborhood geometry. The first-order approximation captures the qualitative increasing trend, but exhibits larger shifts and variability. Especially in the small graph, it shows large differences in density and range compared with the ground truth. For the large graph, even though we take a smaller partition with $b=0.05$ for $\kappa=1$ (while using $b=0.1$ is hard to capture the global structure), the first order expansion still differs in the rank $1-3$, then improves for larger ranks. These deviations are consistent with the discrepancies observed in the pairwise-distance PDFs in Figures~\ref{fig:distance_pdf_small} and~\ref{fig:distance_pdf_large}.
\section{Conclusions} \label{sec-conc}

We have presented an MCMC algorithm for fitting LPMs, and proved that it is both fast and accurate under moderately strong assumptions that provably hold for certain popular models. We mention some natural open questions left by this work.

\subsection{Application to Other Models}

Our general strategy of speeding up an MCMC algorithm by roughly chopping up its state space is quite popular (see \textit{e.g.} the survey \cite{rudolf2024perturbationsmarkovchains}). This paper proposes a small twist: rather than keeping track of the approximate location of observations within this partition, instead keep track of some nearly-sufficient statistics associated with the partition. This general idea has been used with some success for simple regression-like models (see \textit{e.g. } \cite{huggpass}), and we believe it can be applied fruitfully to more complex dependence structures. While we have focused on Taylor expansions, we suspect that other expansions (such as the Legendre polynomials commonly used in Gaussian quadrature) will be more appropriate in other contexts.

\subsection{Applications to Complicated LPMs} \label{RemMulti}

In the interest of keeping the presentation of our algorithms simple, we've restricted our attention to rather simple LPMs: we don't allow for node features, we assume that all embeddings with high posterior probability are close to (some isometry of) a single good embedding, and so on. These simple LPMs appear to be very popular in practice, but of course not all LPM-like models have these properties.

While we don't think it is possible to fit \textit{all} LPMs efficiently using the methods in our paper, we do think it is possible to substantially extend these methods. Consider the case of ``slightly" multimodal posteriors, where the posterior can be written as a mixture of a small number of unimodal distributions $\pi^{(1)}, \ldots \pi^{(k)}$. In this case, if you can find partitions $a_{1}, \ldots, a_{k}$ that are good (in the sense of Assumption \ref{assum3}) for each mixture component, then the intersection $a$ of these partitions is good (again in the sense of Assumption \ref{assum3}) for the posterior. As long as $k$ does not grow with $n$, taking intersections in this way does not affect any of our theorems about asymptotic complexity or error; a $k$ that grows slowly would have only a small impact. Of course, ``severe" multimodality is possible, but this sort of ``gentle" multimodality seems to be quite important \cite{Priebe_2019}. 

Even if it is difficult to find a small fixed number of partitions $a_{1},\ldots,a_{k}$ that can be intersected to satisfy  Assumption \ref{assum3}, it may be possible to use our algorithms. The simplest approach here is to simply re-initialize the algorithm occasionally, finding a partition that is \textit{locally} appropriate.

\bibliographystyle{alpha}
\bibliography{ref}

\newpage 

\appendix 

\section{Node Partition Construction}  \label{SubsecNodePart}

Fix a graph $G = (V,E)$ with $n$ vertices. Let $\tau \in (\mathbb{R}^{2})^{n}$ be a single embedding of the vertices of this graph. In this section, we construct a simple partition $a$ of $V$ based on $\tau$ and show that it has certain good properties as long as $\tau$ is sufficiently accurate - see Lemma \ref{IneqBasicPartitionProperties}. The construction in this section works for any $\tau$, but we think of it as coming from a point estimator with good theoretical properties such as the spectral embedding \cite{von2008consistency}. 

We begin by defining a ``system of boxes." Fix a desired accuracy $b > 0$. For $g,h \in \mathbb{R}$, define $B[g,h]$ to be the square with corners located at $(bg-b,bh-b)$, $(bg-b,bh)$, $(bg,bh)$ and $(bg,bh-b)$. Then define $C[g,h]=(bg-\frac{b}{2},bh-\frac{b}{2})$, the center of $B[g,h]$. We fix a set of special coordinates:
    \be \label{def_T}
    \mathbb{T} = \left\{\frac{z}{3} \, : \, z \in \mathbb{Z} \right\}
    \ee 
and finally the collection of overlapping boxes $\mathfrak{B} = \{B[g,h]\}_{g,h \in \mathbb{T}}$.

We next fix an assignment function $\mathcal{A} \, : \, [n] \to \mathbb{T}^{2}$ by: 
\be \label{assign}
\mathcal{A}(i) = \text{argmin}_{ (g,h) \in \mathbb{T}^{2}} \| \tau_{i} - C[g,h] \|_{\infty},
\ee 

breaking ties arbitrarily.

This construction is illustrated in Figure \ref{OverlappingBoxes3}. The main idea is that no node $i$ is close to the boundary of its assigned box $B[\mathcal{A}(i)]$.

\begin{figure}[H]
    \caption{Example of overlapping boxes system $\mathfrak{B}$}
    \label{OverlappingBoxes3}
    \centering
    \begin{subfigure}[b]{0.3\textwidth}
        \centering
        \begin{tikzpicture}[scale=0.4]
            \draw (0,0) rectangle (2.4,2.4);
            \coordinate (a) at (0,2);
            \draw[fill=black] (0,2) circle(0.4em);
            \node[below right] at (0,2) {$v_1$};
            \coordinate (b) at (2.4,2.4);
            \draw[fill=black] (2.4,2.4) circle(0.4em);
            \node[below right] at (2.4,2.4) {$v_2$};
        \end{tikzpicture}
        \caption{$v_1, v_2$ are nodes in the boundary of black boxes}
    \end{subfigure}
    \hfill
    \begin{subfigure}[b]{0.3\textwidth}
        \centering
        {\begin{tikzpicture}[scale=0.4]
            \draw (0,0) rectangle (2.4,2.4);
            \coordinate (a) at (0,2);
            \draw[fill=black] (0,2) circle(0.4em);
            \node[below right](a) at (0,2) {$v_1$};
            \coordinate (b) at (2.4,2.4);
            \draw[fill=black] (2.4,2.4) circle(0.4em);
            \node[below right](b) at (2.4,2.4) {$v_2$};
            \draw[help lines, color=red!60] (-1.6,0.8)--(3.2,0.8) ;
            \draw[help lines, color=red!60] (-1.6,3.2)--(3.2,3.2) ;
            \draw[help lines, color=red!60] (0.8,-1.6)--(0.8,3.2) ;
            \draw[help lines, color=red!60] (3.2,-1.6)--(3.2,3.2) ;
            \draw[help lines, color=red!60] (-1.6,-1.6)--(-1.6,3.2) ;
            \draw[help lines, color=red!60] (-1.6,-1.6)--(3.2,-1.6) ;
            \coordinate (c) at (0.8,0.1);
            \draw[fill=black] (0.8,0.1) circle(0.4em);
            \node[below right](c) at (0.8,0.1) {$v_3$};
        \end{tikzpicture}}
        \caption{$v_1,v_2$ now are inside the red boxes, but in this condition $v_3$ is on the boundary of two red boxes}
    \end{subfigure}
    \hfill
    \begin{subfigure}[b]{0.3\textwidth}
        \centering
        \begin{tikzpicture}[scale=0.4]
            \draw (0,0) rectangle (2.4,2.4);
            \draw[help lines, color=red!60] (-1.6,0.8)--(3.2,0.8) ;
            \draw[help lines, color=red!60] (-1.6,3.2)--(3.2,3.2) ;
            \draw[help lines, color=red!60] (0.8,-1.6)--(0.8,3.2) ;
            \draw[help lines, color=red!60] (3.2,-1.6)--(3.2,3.2) ;
            \draw[help lines, color=red!60] (-1.6,-1.6)--(-1.6,3.2) ;
            \draw[help lines, color=red!60] (-1.6,-1.6)--(3.2,-1.6) ;
            \coordinate (c) at (0.8,0);
            \draw[fill=black] (0.8,0) circle(0.2em);
            \node[below right](c) at (0.8,0.1) {$v_3$};
            \draw[help lines, color=blue!60] (-0.8,1.6)--(4,1.6) ;
            \draw[help lines, color=blue!60] (-0.8,4)--(4,4) ;
            \draw[help lines, color=blue!60] (1.6,-0.8)--(1.6,4) ;
            \draw[help lines, color=blue!60] (4,-0.8)--(4,4) ;
            \draw[help lines, color=blue!60] (-0.8,-0.8)--(-0.8,4) ;
            \draw[help lines, color=blue!60] (-0.8,-0.8)--(4,-0.8) ;
        \end{tikzpicture}
        \caption{we introduce the blue boxes, now $v_3$ is inside the blue box}
    \end{subfigure}
\end{figure}

After the assignment, we discard boxes with no assigned vertices. More precisely, define
\be \label{def_box}
\mathcal{B} = \{ B[g,h] \, : \, \exists i \in [n] \, \text{s.t.} \, \mathcal{A}(i) = (g,h)\}
\ee

To simplify notation, we write this list in the arbitrary order $\mathcal{B}=\{B_1,B_2,\dots, B_{m}\}$. Finally, define the function $a \, : \, [n] \to [m]$ to be the function mapping a node $i$ to the index of the box $B[\mathcal{A}(i)]$. 

We have the following basic guarantee:

\begin{lemma} \label{IneqBasicPartitionProperties}
Fix notation as in this section. Then 
\be 
\max_{i, j \, : \, a(i) = a(j)} \| \tau_{i} - \tau_{j} \| \leq \frac{\sqrt{2}}{3} b.
\ee
\end{lemma}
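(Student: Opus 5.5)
The plan is to reduce the statement to a sup-norm estimate: every node lies within $\ell^\infty$-distance $b/6$ of the center of its assigned box. Two nodes sharing a box are then within $\ell^\infty$-distance $b/3$ of each other, and converting to the Euclidean norm in $\mathbb{R}^{2}$ costs a factor $\sqrt{2}$.

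\textbf{Step 1.} I will first check that $a(i)=a(j)$ forces $\mathcal{A}(i)=\mathcal{A}(j)$. The box $B[g,h]$ determines its center $C[g,h]=(bg-\frac{b}{2},bh-\frac{b}{2})$, and since $b>0$ the center determines $(g,h)$. So the map $(g,h)\mapsto B[g,h]$ is injective, and distinct assignments give distinct entries of the list $\mathcal{B}$. Hence nodes with the same label $a$ share the same assignment, say $\mathcal{A}(i)=\mathcal{A}(j)=(g,h)$.

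\textbf{Step 2.} Next I will bound $\|\tau_i - C[\mathcal{A}(i)]\|_\infty \le b/6$. The set of centers is $\{(bz_1/3 - b/2,\; bz_2/3-b/2): z_1,z_2\in\mathbb{Z}\}$, which is a square lattice with spacing $b/3$ in each coordinate. For each coordinate $r\in\{1,2\}$, pick $z_r$ so that $bz_r/3 - b/2$ is a nearest point of the one-dimensional lattice $\frac{b}{3}\mathbb{Z}-\frac{b}{2}$ to $\tau_i(r)$. That one-dimensional distance is at most half the spacing, namely $b/6$. The resulting center $C[z_1/3,z_2/3]$ is within sup-distance $b/6$ of $\tau_i$. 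Since $\mathcal{A}(i)$ minimizes the sup distance over all $(g,h)\in\mathbb{T}^2$, the minimizer does at least as well, and this holds regardless of how ties are broken.

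\textbf{Step 3.} By the triangle inequality, $\|\tau_i-\tau_j\|_\infty \le \|\tau_i - C[g,h]\|_\infty + \|C[g,h]-\tau_j\|_\infty \le b/3$. In $\mathbb{R}^2$ we have $\|x\|\le\sqrt{2}\,\|x\|_\infty$, which gives $\|\tau_i-\tau_j\|\le \frac{\sqrt{2}}{3}b$.

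None of these steps is a real obstacle. The only point needing care is Step 2: the bound must come from the lattice structure of the centers, not from the sizes of the boxes (side $b$). I also need the argmin to be attained, which holds because the centers form a discrete lattice.
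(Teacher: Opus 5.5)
Your proposal is correct and follows the paper's argument. The paper likewise bounds the distance from each $\tau_i$ to its center $C[\mathcal{A}(i)]$ using the spacing-$b/3$ lattice of centers, stating the bound directly in Euclidean norm as $\frac{b}{3\sqrt{2}}=\sqrt{2}\cdot\frac{b}{6}$, and then applies the triangle inequality; you convert from $\|\cdot\|_\infty$ to the Euclidean norm after the triangle inequality rather than before, which gives the same constant, and your Step 1 (that $a(i)=a(j)$ forces $\mathcal{A}(i)=\mathcal{A}(j)$) makes explicit a point the paper leaves implicit.
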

\begin{proof}
By inspection of Equation \eqref{assign}, we have for all vertices $i$ that 
\be 
\| \tau_{i} - C[\mathcal{A}(i)] \| \leq \frac{b}{3 \sqrt{2}}.
\ee 
Thus, by the triangle inequality,
\be 
\max_{ j \, : \, a(i) = a(j)} \| \tau_{i} - \tau_{j} \| \leq 2 \frac{b}{3 \sqrt{2}}.
\ee 

\end{proof}

\section{Proofs of Error Guarantees}

We prove the estimates in Section \ref{SecMainErrorBounds}. We begin with some generic preliminary bounds (Section \ref{SecPrelimRes}), then prove the deterministic likelihood bound in Proposition \ref{DetTaylorBound}, Theorem \ref{Thm1}, and finally Corollary \ref{CorrMain} (Section \ref{proof1}).

\subsection{Preliminary results} \label{SecPrelimRes}

We collect generic estimates that are useful in proving our main results.

\begin{lemma}\label{Thm_1_lemma_4}
Fix $\eta > 0$ and two probability densities $f,g$ with respect to a common measure. If the pointwise multiplicative bounds
\[
\frac{g}{f}\leq 1+\eta, \qquad \frac{f}{g}\leq 1+\eta
\]
hold wherever the ratios are defined, then
\be\label{tv_bound_lemma}
\|f-g\|_{TV}\leq \eta.
\ee
\end{lemma}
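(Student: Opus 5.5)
We prove the bound directly from the definition of total variation as half the $L^{1}$ distance, $\|f-g\|_{TV}=\frac{1}{2}\int |f-g|\,d\mu$, or equivalently $\|f-g\|_{TV}=\int (g-f)_{+}\,d\mu$. The latter identity holds because $\int (f-g)\,d\mu = 0$ for two probability densities. We work with the one-sided form and use only the bound $g/f\leq 1+\eta$; the bound $f/g\le 1+\eta$ gives the symmetric statement by swapping the roles of $f$ and $g$.

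First we handle the sets where the ratios are undefined or degenerate. On $\{f=0,\,g>0\}$ the ratio $f/g$ equals $0$ and is defined, but $g/f$ is not. We interpret ``wherever the ratios are defined'' so that $g\leq (1+\eta)f$ holds on $\{f>0\}$. We also need $g=0$ $\mu$-a.e.\ on $\{f=0\}$, which follows from reading the hypothesis as mutual absolute continuity together with the multiplicative bounds. This is the natural reading, since in the application (Theorem \ref{Thm1}) the two densities are $\exp(\tilde L)$ and $\exp(L)$ restricted to the same set $S_{a,b}\times\Theta$. Once $g=0$ on $\{f=0\}$ is settled, the set $\{f=0\}$ contributes nothing to $\int (g-f)_+$.

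On $\{f>0\}$ we have $g-f\leq \eta f$, so $(g-f)_{+}\leq \eta f$. Integrating gives
\[
\|f-g\|_{TV}=\int (g-f)_{+}\,d\mu \leq \eta\int f\,d\mu=\eta ,
\]
which is the claim. An alternative route gives the same constant: write $\frac12\int|f-g| = \frac12\int f\,|1-g/f|$ and bound $|1-g/f|\leq \eta$ using both inequalities, since $f/g\le1+\eta$ implies $g/f\geq 1/(1+\eta)\geq 1-\eta$.

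The only delicate point is the null-set bookkeeping on the region where one density vanishes. We state explicitly that the hypotheses force $f$ and $g$ to share support up to $\mu$-null sets, and everything else is a one-line estimate.
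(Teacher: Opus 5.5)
Your argument is correct and is essentially the paper's proof with the roles of $f$ and $g$ swapped. The paper integrates $f-g$ over $A=\{f>g\}$ and bounds it by $\int_A \eta g\le\eta$ using only $f/g\le 1+\eta$, just as you integrate $(g-f)_+\le \eta f$ using only $g/f\le 1+\eta$.

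Your explicit handling of the set where one density vanishes is a real improvement, since the paper passes over it silently. It is genuinely needed: two densities with disjoint supports satisfy the hypothesis under a literal reading of ``wherever defined'', yet have total variation distance $1$.

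Your alternative two-sided route does not give the same constant. It gives the sharper bound $\eta/2$.
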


\begin{proof}
Let $A = \{x \, : \, f(x) > g(x)\}$. Using the standard convention $\|f-g\|_{TV}=\int_A(f-g)$,
\be
\|f-g\|_{TV}=\int_A(f-g)\leq \int_A \eta g \leq \eta.
\ee
\end{proof}







We restate the well-known bound:

\begin{lemma}\label{Thm_1_lemma_3}
Fix $0 < C_{1} < C_{2} < \infty$ and two nonnegative functions $f,g$ on the same domain $\mathcal{X}$ that satisfy $0<C_1\leq\frac{f(x)}{g(x)}\leq C_2$ for all $x \in \mathcal{X}$. Then for any measure $\mu$ on $\mathcal{X}$ for which $0<\int g\,d\mu<\infty$,
\be\label{bound_for_sequence}
C_1\leq\frac{\int f d \mu}{\int g d \mu}\leq C_2.
\ee
\end{lemma}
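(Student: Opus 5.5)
The statement to be proved is Lemma \ref{Thm_1_lemma_3}: pointwise ratio bounds on $f/g$ pass to the ratio of integrals. The approach is to integrate the pointwise inequalities directly against $\mu$. No earlier result of the paper is needed, only monotonicity and linearity of the integral for nonnegative functions.

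\textbf{Step 1: multiply out the ratio bounds.}
Wherever the hypothesis $C_1 \le f(x)/g(x) \le C_2$ holds, $g(x)$ must be positive, since the ratio is defined and $C_1 > 0$. Multiplying through by $g(x)$ gives, for all $x \in \mathcal{X}$,
\[
C_1\, g(x) \le f(x) \le C_2\, g(x).
\]
If one reads the hypothesis as allowing points with $g(x)=0$, I would handle them separately. Either adopt the convention that the bound is imposed only where $g>0$ and also assume $f=0$ there, or note that then $f(x)=0$ as well, so both inequalities still hold trivially.

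\textbf{Step 2: integrate.}
All functions in Step 1 are nonnegative and measurable, so monotonicity of the integral gives
\[
C_1 \int g\, d\mu \;\le\; \int f\, d\mu \;\le\; C_2 \int g\, d\mu.
\]
In particular $\int f\, d\mu < \infty$, so every quantity is a finite nonnegative number.

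\textbf{Step 3: divide.}
Since $0 < \int g\, d\mu < \infty$, dividing the display in Step 2 by $\int g\, d\mu$ yields \eqref{bound_for_sequence}.

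The only point needing care is the measure-zero or degenerate set where $g=0$, which Step 1 addresses. The rest is routine, so I expect no genuine obstacle.
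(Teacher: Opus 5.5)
Your proof is correct and matches the paper's own argument: multiply the pointwise ratio bounds by $g(x)$ to get $C_1 g \le f \le C_2 g$, integrate against $\mu$, and divide by the positive finite quantity $\int g\,d\mu$. Your aside about points where $g(x)=0$ is not needed, since, as your first sentence of Step 1 already notes, the hypothesis that the ratio lies in $[C_1,C_2]$ for every $x$ forces $g>0$ everywhere; it does no harm.
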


\begin{proof}
The pointwise inequality implies $C_1 g(x)\leq f(x)\leq C_2 g(x)$ for all $x\in\mathcal{X}$. Integrating with respect to $\mu$ gives
\[
C_1\int g\,d\mu \leq \int f\,d\mu \leq C_2\int g\,d\mu.
\]
Dividing by the positive quantity $\int g\,d\mu$ gives \eqref{bound_for_sequence}.
\end{proof}

\subsection{Proofs of the Deterministic Bound, Theorem \ref{Thm1}, and the Random-Graph Corollary} \label{proof1}

\begingroup
\subsubsection{Likelihood estimate}

We prove the deterministic likelihood bound. Recall from Equations \eqref{EqLogLikSimple} and \eqref{approx_log_like} that $\tilde{L}_{\theta}(\tau)$ is obtained by replacing each term $g_{\ell,\theta}(\tau_i,\tau_j)$ with its $\kappa$-th order Taylor expansion around $(y_{a(i)},y_{a(j)})$ and then collecting the corresponding moments. If $\tau\in S_{a,b}$, then each $y_s(\tau)$ lies in the convex hull of $\{\tau_i:a(i)=s\}\subset[-\mathcal{H},\mathcal{H}]^2$. Thus the centers and the line segments used in the Taylor expansion lie in the neighborhood $\mathcal{U}$ from Assumption \ref{assum2}. Moreover, $\|\tau_i-y_{a(i)}\|_{2} \leq  b$ for every $i$.

\begingroup
Set $m=\kappa+1$ and $h=((\tau_i,\tau_j)-(y_{a(i)},y_{a(j)}))\in\mathbb{R}^4$. Since $\tau\in S_{a,b}$, $\|h\|_1\leq 2 \sqrt{2} b$. The multivariate Taylor theorem and \eqref{assum4} give 
\be
D_{m,g}\sum_{|\alpha|=m}\frac{|h^\alpha|}{\alpha!}
= D_{m,g}\frac{\|h\|_1^m}{m!}
\leq M_g\left(\frac{2 \sqrt{2} b}{\rho_g}\right)^m
= M_g(B_g b)^{\kappa+1}.
\ee
There are $n(n-1)$ directed pairs and the log-likelihood has the prefactor $1/2$, so
\be \label{EqSImpleTaylorLikBound}
|\tilde{L}_{\theta}(\tau)-L_{\theta}(\tau)| \leq \frac{n(n-1)}{2} M_g(B_g b)^{\kappa + 1}=R_{n,\kappa}(b).
\ee 
Exponentiating this additive log-likelihood bound gives \eqref{boundlike_mult}.
\endgroup

\subsubsection{TV distance}

We prove Theorem \ref{Thm1}. Let $S=S_{a,b}\times\Theta$ and $R=R_{n,\kappa}(b)$. Denote by $\pi_{G}^{S}$ the exact posterior $\pi_G$ conditioned on $S$. The unnormalized exact and approximate posterior densities on $S$ are proportional to
\be
h(\tau,\theta)=\exp(L_{\theta}(\tau))\pi(\tau,\theta),\qquad \hat h(\tau,\theta)=\exp(\tilde L_{\theta}(\tau))\pi(\tau,\theta).
\ee
By \eqref{boundlike_mult}, $e^{-R}\leq \frac{\hat h}{h} ,\frac{h}{\hat{h}}\leq e^{R}$ on $S$. Lemma \ref{Thm_1_lemma_3} gives the same multiplicative bounds for the ratio of the corresponding normalizing constants. Therefore the normalized densities of $\pi_G^S$ and $\hat\pi_G$ satisfy
\be\label{density_ration_bound_1}
e^{-2R}\leq \frac{d\hat\pi_G}{d\pi_G^S}\leq e^{2R}, \qquad e^{-2R}\leq \frac{d\pi_G^S}{d\hat\pi_G}\leq e^{2R}.
\ee
Applying Lemma \ref{Thm_1_lemma_4} with $\eta=e^{2R}-1$ yields
\be
\|\pi_G^S-\hat\pi_G\|_{TV}\leq e^{2R}-1.
\ee
By assumption, $\pi_G(S)\geq 1-\epsilon_{\mathrm{post}}$, so
\be
\|\pi_G-\pi_G^S\|_{TV}=1-\pi_G(S)\leq \epsilon_{\mathrm{post}}.
\ee
The triangle inequality gives
\be
\|\pi_G-\hat\pi_G\|_{TV}\leq e^{2R_{n,\kappa}(b)}-1+\epsilon_{\mathrm{post}}.
\ee

\endgroup

\end{document}